\newcommand{\Tr}{\mathrm{Tr}}
\newcommand{\state}[1]{\ket{#1}\!\!\bra{#1}}
\newcommand{\brakett}[2]{\langle #1| #2 \rangle}
\newcommand{\clA}{\mathcal{A}}
\newcommand{\clB}{\mathcal{B}}
\newcommand{\clC}{\mathcal{C}}
\newcommand{\clD}{\mathcal{D}}
\newcommand{\clO}{\mathcal{O}}
\newcommand{\clP}{\mathcal{P}}
\newcommand{\clQ}{\mathcal{Q}}
\newcommand{\clR}{\mathcal{R}}
\newcommand{\clX}{\mathcal{X}}
\newcommand{\clY}{\mathcal{Y}}
\newcommand{\clZ}{\mathcal{Z}}
\newcommand{\F}{\mathrm{F}}
\newcommand{\Id}{\mathbbm{1}}
\DeclareMathOperator*{\bbE}{\mathbb{E}}
\newcommand{\bbN}{\mathbb{N}}
\newcommand{\eps}{\varepsilon}
\newcommand{\supp}{\mathrm{supp}}
\newcommand{\QMA}{\mathrm{QMA}}
\newcommand{\cQMA}{\mathrm{cloneableQMA}}
\newcommand{\suQMA}{\mathrm{suQMA}}
\newcommand{\BQPq}{\mathrm{BQP}/\mathrm{qpoly}}
\newcommand{\BQPuq}{\mathrm{avBQP}/\mathrm{upoly}}
\newcommand{\BQPsuq}{\mathrm{avBQP}/\mathrm{supoly}}
\newcommand{\FBQPq}{\mathrm{FBQP}/\mathrm{qpoly}}
\newcommand{\FEQPq}{\mathrm{FEQP}/\mathrm{qpoly}}
\newcommand{\FEQPsuq}{\mathrm{FEQP}/\mathrm{supoly}}
\newcommand{\HMn}{\textsc{HM}_n}
\newcommand{\QOR}{\textsc{QuantumOr}}
\DeclareMathOperator{\D}{D}
\DeclareMathOperator{\R}{R}
\DeclareMathOperator{\Q}{Q}
\newcommand{\dlt}{\bgroup\markoverwith{\textcolor{red}{\rule[0.5ex]{1pt}{0.6pt}}}\ULon}
\newcommand{\bits}{\ensuremath{\{0,1\}}\xspace}
\newcommand{\negl}{\ensuremath{\mathsf{negl}}\xspace}
\newcommand{\poly}{\ensuremath{\mathsf{poly}}\xspace}
\renewcommand{\tilde}{\widetilde}
\newcommand{\I}{\mathrm{I}}
\renewcommand{\H}{\mathrm{H}}
\renewcommand{\epsilon}{\varepsilon}
\title{Are uncloneable proof and advice states strictly necessary?}
\author{Rohit Chatterjee\inst{1} \and Srijita Kundu\inst{2} \and Supartha Podder\inst{3}}
\institute{National University of Singapore \and Institute for Quantum Computing, University of Waterloo \and Stony Brook University}
\begin{document}
\maketitle
\begin{abstract}
Yes, we show that they are.

We initiate the study of languages that necessarily need uncloneable quantum proofs and advice. We define strictly uncloneable versions of the classes QMA, BQP/qpoly and FEQP/qpoly (which is the class of relational problems solvable exactly with polynomial-sized quantum advice). Strictly uncloneable QMA is defined to be the class of languages in QMA that \emph{only} have uncloneable proofs, i.e., given any family of candidate proof states, a polynomial-time cloning algorithm cannot act on it to produce states that are jointly usable by $k$ separate polynomial-time verifiers, for arbitrary polynomial $k$. This is a stronger notion of uncloneable proofs and advice than those considered in previous works, which only required the existence of a single family of proof or advice states that are uncloneable. We show that in the quantum oracle model, there exist languages in strictly uncloneable QMA and strictly uncloneable BQP/qpoly. The language in strictly uncloneable QMA also gives a quantum oracle separation between QMA and the class cloneableQMA introduced by Nehoran and Zhandry (2024). We also show \emph{without using any oracles} that the language, used by Aaronson, Buhrman and Kretschmer (2024) to separate FEQP/qpoly and FBQP/poly, is in strictly uncloneable FEQP/qpoly.
\end{abstract}

\section{Introduction}

The phenomenon of {\em uncloneability} of quantum states has emerged in recent years as an exciting new area of study in quantum information theory and quantum complexity. This originated in the foundational no-cloning principle \cite{Die82,WZ82}, and even early on this was seen to enable interesting applications, such as the ideas of quantum money and quantum key distribution \cite{Wie83, BB84}. 

Since then, there have been a wide variety of results attempting to harness uncloneability for exciting applications in quantum computing and quantum cryptography --- including various kinds of quantum money \cite{AC13,JOC:Zhandry21,STOC:Shmueli22,DBLP:KSS21}, and a plethora of uncloneable cryptographic primitives such as uncloneable encryption \cite{QIC:Gottesman03,TQC:BL20,CRYPTO:AKLLZ22,KT2025deviceindependent}, uncloneable decryption schemes \cite{EP:GZ20}, one-shot signature schemes \cite{STOC:AGKZ20}, quantum copy protection \cite{CCC:Aar09,Q:CMP20,C:ALLZZ21,AKL23,TCC:CHV23}, secure software leasing \cite{EC:AP21,TCC:BJLPS21}, and uncloneable zero-knowledge proofs \cite{TCC:GMR24,AC:JK24}. In addition, there has been the advent of certain notions that by design rely quite closely on some behaviour enabled by uncloneability --- these include ideas such as certified deletion \cite{TCC:BI20,C:BK22,EC:BGKMRR24,C:BR24} and revocable cryptography \cite{TCC:APV23,EC:AKNYY23,TQC:MPY24}. Broadly, these works try and leverage a certain sort of ephemerality enabled by families of carefully constructed uncloneable states that carry certain information, and the idea is that such information disappears along with the states when these are tampered with or subjected to forced replication. The novelty of such objects lies in the uniquely quantum phenomena of uncloneability --- any classical data can always be copied and the notion is vacuous in classical models of computing.   


The complexity theoretic implications of uncloneability, in contrast, have only recently begun to be studied. This direction of research was first suggested by \cite{blogpost}, who raised the question of which quantum proof and advice states could be made cloneable or uncloneable.

\para{Quantum proofs.}  The class QMA serves as the quantum analog of the class NP, consisting of a quantum polynomial-time verifier that gets a proof (also called witness) state from a prover and then runs a quantum algorithm to determine membership in the language. A big open question concerning this class lies in determining the contribution of the witness state to the power of this class. In particular, it is of significant interest to compare QMA with the class QCMA, which is the subclass of QMA with classical witnesses. Whether or not QMA is equal to QCMA is one of the big open problems of quantum complexity theory. Although it may be beyond our current capabilities to show an unconditional separation between the classes, quantum oracle separations between the two have been known, as first shown by Aaronson and Kuperberg \cite{AK07}. The notion of uncloneability comes up naturally in this context --- if a problem in QMA is shown to admit QMA proofs that are inherently uncloneable, it will let us separate QMA from QCMA, since problems in the latter class always admit classical QMA proofs that can be trivially cloned. 

Following Aaronson's question, some works have studied languages in QMA that have cloneable or uncloneable proofs. \cite{ITCS:NZ24} studied the subclass of QMA which consists of problems that have some cloneable proof (which they called cloneableQMA) --- this obviously contains QCMA (since all classical proofs are cloneable), but they showed that cloneableQMA can also be separated from QCMA with respect to a quantum oracle. Essentially, they demonstrated a quantum oracle problem that has proofs that are cloneable but not teleportable, which achieves the separation since classical data is trivially teleportable. Very recently, \cite{BGPS24} introduced the notion of an anti-piracy proof system, and demonstrated that in the oracle setting there is a language with an anti-piracy proof system. Their definition of anti-piracy involves the \emph{functionality} of a state as a proof resisting cloning, rather than the state itself, which is a stronger notion of uncloneability of proofs.

\para{Quantum advice.}
Closely related to the question of the relative power of quantum vs.\ classical proofs is the question of the relative power of quantum vs.\ classical advice. Advice differs from proofs in that it only depends on the input length and not other specifics of the input, and is always trusted. The class BQP/poly is defined as the class of decision problems solvable by BQP algorithms with polynomial-sized non-uniform classical advice. BQP/qpoly is defined similarly as BQP/poly, except with polynomial-sized quantum advice. \cite{AK07} gave a quantum oracle separation between BQP/poly and BQP/qpoly, with a construction very similar to their QMA vs.\ QCMA separation.

Like QMA vs.\  QCMA, we do not believe separating BQP/qpoly and BQP/poly unconditionally to be within our current capabilities, but separating the relational versions (FBQP/poly and FBQP/qpoly) of these classes turns out to be much easier. \cite{ABK23} used a separation between one-way quantum and classical communication complexity for a relational problem \cite{BJK08} to show that FBQP/poly and FBQP/qpoly are separated. The relational problem, called the Hidden Matching Problem, they considered is in fact in the exact or zero-error version of FBQP/qpoly (called FEQP/qpoly), and not in the bounded error FBQP/poly.

Like uncloneable proofs, languages that admit uncloneable quantum advice have previously been studied in \cite{BKL23}. The authors define a class, that they called BQP/upoly, which consists of languages in BQP/qpoly that admit uncloneable advice. Like \cite{BGPS24}, the notion of uncloneability here involves the functionality of the advice being uncloneable. They showed that if one is willing to assume the existence of copy protection schemes for pseudorandom functions, then there is a language in this class. Further, without using any assumptions, they are able to construct a promise problem that is in the promise version of BQP/upoly. However the languages that they consider are in P/poly, so they in fact also admit cloneable proofs.

\para{Notions of uncloneability.}  In most recent works on the topic, uncloneability is defined as resisting the reproduction of a single copy of a state or its functionality into two copies. But this need not be the most general setting of cloneability --- arbitrary classical states can be cloned any number of times. This makes them distinct from arbitrary quantum states, which in general cannot be copied into any larger number of copies by a quantum procedure. Therefore, it is meaningful to consider $k\to k+\ell$ uncloneability, where given $k$ copies of the state in question, no cloner is able to produce $k+\ell$ copies, for some $\ell \geq 1$. Note that for $\ell > 1$, $1\to 2$ uncloneability does not necessarily imply $k \to k+\ell$ uncloneability, or vice versa. Certainly if $k$ is exponential in the number qubits in the state, it may be possible to just do full tomography on the state and then produce as many copies as wanted. On the other hand, for some family of states, it may be possible to produce two copies given one copy, but not three copies. In the most ideal scenario, any kind of cloning attack from $k$ copies to more than $k$ would be prevented, for any polynomial $k$. 

\para{Cloneability and uncloneability for languages.} In all previous works on uncloneable proof and advice, uncloneability is a property of the specific proof or advice system and not the language. As noted earlier, the constructions in previous works are specifically for languages that also have cloneable proof or advice systems. This means that there exist languages that have both cloneable and uncloneable proof and advice systems. Depending on the definition of cloneability we use, some languages may have only cloneable proofs or advice: if we consider cloning the functionality of the proof or advice as in \cite{BKL23, BGPS24} (also called pirating in \cite{BGPS24}), then languages in BQP only have cloneable proofs, since they don't actually need proofs and the trivial functionality can always be cloned. If we consider cloning the given proof or advice state itself as in \cite{ITCS:NZ24}, no language can have only cloneable proof or advice, since we can always attach an uncloneable quantum state to an otherwise cloneable proof or advice.

This naturally raises the question: do there exist languages that only have uncloneable proof or advice states?

\subsection{Our Results}
In this work, we introduce and examine the notions of quantum proofs and advice that \emph{strictly} need to be uncloneable. Specifically, we define strictly uncloneable versions of QMA, BQP/qpoly and FEQP/qpoly, which \emph{only} admit uncloneable proofs and advice respectively. Like \cite{BKL23,BGPS24}, we consider uncloneability of the functionality of proof and advice states, since this is the strongest possible definition of uncloneability. Note that if a single copy of a state is valid as a proof for one verifier, then $k$ copies of the state constitute a valid proof
state jointly for $k$ verifiers. But some other state, for example one where the $k$ registers are entangled, can also be a valid proof jointly.
The verifiers in our definitions are naturally required to be polynomial-time; the cloning algorithm is also required to be polynomial-time for reasons we discuss in Section~\ref{sec:def}.

\para{Definitions of strictly uncloneable classes.} A priori it is not clear how to define the strictly uncloneable classes, and whether there should be any languages in them. As a first attempt, we can try to say that for strictly uncloneable QMA (suQMA), for any candidate family of proofs, no polynomial-time cloner should be able to clone it for joint use by two verifiers. However, such a definition is obviously infeasible, because if $\ket{\psi_x}$ is a valid proof state for input $x$, then so is $\ket{\psi_x}^{\otimes 2}$ --- an honest verifier can just ignore the second copy. A cloner does not even have to do anything to make this family of proofs jointly usable by two verifiers. However, if a candidate proof state is polynomial-sized, then it can only contain polynomially many copies of $\ket{\psi_x}$. So if the states $\ket{\psi_x}$ themselves are uncloneable, and if a candidate family of proof states $\{\ket{\phi_x}\}_x$ satisfies $\ket{\phi_x} = \ket{\psi_x}^{\otimes (k-1)}$, then a cloner cannot produce $\ket{\psi_x}^{\otimes k}$ from these. Inspired by this idea, we will define suQMA as a subclass of QMA such that for any polynomial-sized family of candidate proof states $\{\ket{\phi_x}\}_x$, there exists some polynomial $k$ such that a polynomial-time cloner acting on $\ket{\phi_x}$ cannot produce states $\ket{\rho_x}$ that are jointly usable as proofs by $k$ separate verifiers which act on different registers of $\ket{\rho_x}$.
Note that a general family of candidate proof states need not be $k-1$ copies of some canonical proof state $\ket{\psi_x}$, and as noted earlier, any state that is usable by $k$ verifiers to jointly solve the QMA problem on $x$ need not be $\ket{\psi_x}^{\otimes k}$. So it is still not obvious whether there should be any language in suQMA even with this definition.

Our general stance in this work is that it should be enough to consider worst-case uncloneability, i.e., we are fine if a cloner fails to provide a proof or advice state that works for some input of some length, as is the approach usually taken in complexity. For BQP/qpoly we are however able to consider a somewhat stronger approach: we require that there exists some $n$ such that the cloner fails \emph{on average} to clone advice states for inputs of that length. However, this is still weaker than the notion considered in \cite{BKL23}, who required this for every $n$. For FEQP/qpoly, we require success probabilities to be 1, so worst-case vs.\ average case makes no difference. That is, in FEQP/supoly we require the cloner to produce states that are jointly usable by $k$ algorithms to solve the relational problem with \emph{zero error} (whereas for QMA and BQP/qpoly we are okay with bounded error). There obviously exist advice states which would allow $k$ algorithms to solve the problem with perfect correctness together (since a problem in strictly uncloneable FEQP/qpoly would also be in FEQP/qpoly), so essentially in this definition we are also requiring the cloning operation to be exact. For formal definitions, see Section \ref{sec:def}.


\para{Languages in the strictly uncloneable classes.} Given our definitions, if a problem is shown to exist in suQMA, it would necessarily separate QMA and QCMA (and indeed QMA and cloneableQMA), since problems in QCMA admit at least one classical proof that is arbitrarily cloneable. Thus it is a much harder question and it is unlikely that we would be able to show this without oracles. Existence of a language in suQMA with respect to a classical oracle would imply a classical oracle separation between QMA and QCMA, which is currently not known either. Thus, we study the classes suQMA and strictly uncloneable BQP/qpoly in the quantum oracle setting.


Our first result is to show the existence of a language in suQMA in the presence of a quantum oracle. The oracle problem we use is the same one used by \cite{AK07}, which is a quantum version of the OR problem. This result also has the effect of separating the class cloneableQMA introduced in \cite{ITCS:NZ24} from QMA, which was an implicit open question in that work. The relationship between the classes QCMA, cloneableQMA, suQMA and QMA are shown in Figure \ref{fig:QMA}.

Our second result is to show, again in the presence of a quantum oracle, the existence of a language in average-case strictly uncloneable BQP/qpoly. Although our notion of uncloneability is stricter, we stress that our definition of average-case is weaker than that of \cite{BKL23} (and also we use an oracle), which makes our result incomparable to theirs.

Our final result is in demonstrating a language in the class FEQP/supoly, without the aid of any oracles. We use a parallel-repeated version of the Hidden Matching problem used by \cite{ABK23} to unconditionally separate FBQP/qpoly and FBQP/poly for this. Demonstrating that the relation is in FEQP/qpoly means that for any family of polynomial-sized candidate advice states, there is some polynomial $k$ such that a cloner cannot produce states that are usable by $k$ separate polynomial-time algorithms to solve the problem \emph{with zero error}. We suspect it should also not be possible for a cloner to produce states that can be used by $k$ algorithms to solve the problem with joint bounded error (which would place the problem in a strictly uncloneable version of FBQP/qpoly), and we leave proving this as an interesting open problem.

\begin{figure}[!ht]
\centering
\begin{tikzpicture}[scale=0.8]
    \draw[thick] (0, 0) ellipse (5 and 3);
    \node[anchor=south east] at (3, 1.5) {QMA};
    
    \draw[thick] (-1, 0) ellipse (3 and 2) node[right] {cloneableQMA};
    
    \draw[thick] (-2, 0) circle (1) node {QCMA};
    
    \draw[thick] (3.5, 0) circle (1) node {suQMA};
    
\end{tikzpicture}
\caption{Cloneable and uncloneable subclasses of QMA: in the quantum oracle model, QMA $\neq$ QCMA was shown in \cite{AK07}, cloneableQMA $\neq$ QCMA was shown in \cite{ITCS:NZ24}. In this work, we show a problem is in suQMA in the quantum oracle model, which separates cloneableQMA and QMA, since suQMA is contained in QMA$\setminus$cloneableQMA.}
\label{fig:QMA}
\end{figure}
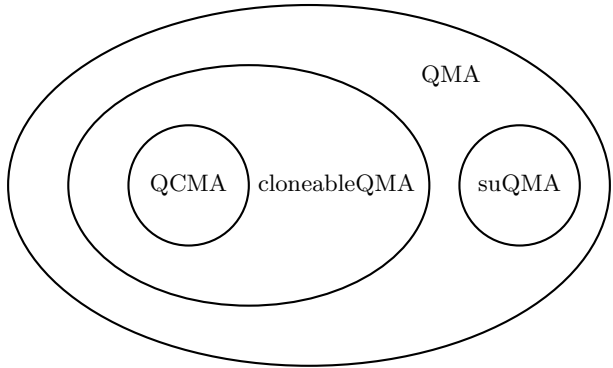

\paragraph{} Our results essentially show that the uncloneability of quantum states, which has found so many applications in cryptography, is also useful in computation --- certain languages are only efficiently computable with uncloneable quantum states as proofs and advice.

\subsection{Our techniques}
\subsubsection{Strictly uncloneable oracle QMA and BQP/qpoly}
\para{Oracle problem.} In order to show that there exists a language in suQMA with respect to a quantum oracle, we only need to show such a statement in the model of query complexity with quantum oracles --- after that we can use standard diagonalization arguments. The $\QOR_n$ problem introduced by \cite{AK07} in the query model is as follows: we are given an oracle that is either the identity matrix, or $\Id - 2\state{\psi}$ for some $n$-qubit Haar-random state $\ket{\psi}$, and we have to determine which of these is the case. \cite{AK07} showed that the state $\ket{\psi}$ is a quantum proof state for this problem. Now Haar-random states are information-theoretically uncloneable \cite{Wer98} (i.e., they are uncloneable even by computationally unbounded cloning algorithms). This sounds promising for us, because if \emph{any} possible family of proof states has to be uncloneable, the `canonical' one should definitely be uncloneable.

\para{Rigidity of proofs.} To make use of the fact that Haar-random states are uncloneable, we need to show that in some sense the state $\ket{\psi}$ is the \emph{only} possible proof state when the oracle is $\Id - 2\state{\psi}$, i.e., a `rigidity' result.\footnote{We borrow the term from the literature on non-local games.} The sense in which we show this is the following: any verifier that can distinguish $\Id - 2\state{\psi}$ and $\Id$, can also produce a copy of the state $\ket{\psi}$. It is not difficult to show this using a version of the hybrid method for query complexity \cite{BBV97}. This means that if $k$ verifiers are taking in the state provided to them by the cloner corresponding to $\Id - 2\state{\psi}$ and each of them separately distinguishing the oracle from $\Id$, then the $k$ verifiers can also jointly produce the state $\ket{\psi}^{\otimes k}$.

\para{Uncloneability of Haar-random states with oracles.} Werner \cite{Wer98} showed that Haar-random states are $(k-1)\to k$ uncloneable, i.e., given $k-1$ copies of an $n$-qubit Haar-random state, it is not possible to produce $k$ copies, for any $k=\poly(n)$. But there are two barriers to directly using this result. First, we want to show that is should not be possible to produce $\ket{\psi}^{\otimes k}$ starting from an \emph{arbitrary} polynomial-qubit state $\ket{f_\psi}$ (where $k$ depends on the polynomial size). Second, we are in the oracle model, so all verifiers will have access to the oracle $\Id - 2\state{\psi}$, which may help them clone $\ket{\psi}$. It was shown in \cite{AC13} that Haar-random states $\ket{\psi}$ are uncloneable even in the presence of the oracle $\Id - 2\state{\psi}$. But their result also only applies in the case that we are starting from states $\ket{\psi}^{\otimes (k-1)}$ (up to acting the same unitary on every starting state) and trying to produce $\ket{\psi}^{\otimes k}$, so we cannot directly use this result either.

Instead we will generalize the result of \cite{Wer98} for cloning of a discrete set of states, and then use an approach taken by \cite{AKL23} to show that uncloneable encryption schemes remain secure in the presence of an oracle that is a point function for the encrypted message. Specifically, we'll consider the problem of producing $\ket{\phi_i}^{\otimes k}$ for a $k$-design $\ket{\phi_i}$, starting from an arbitrary family of states $\ket{f_i}$ whose size is less than the log of the dimension of the $(k-1)$-fold symmetric subspace of $n$ qubits. For such a family of states, acting a cloning map on $\ket{f_i}$ and then projecting onto the states $\ket{\phi_i}^{\otimes k}$ (which measures the success probability of cloning) is effectively trying to learn the index $i$ by doing a measurement on the states $\ket{f_i}$ (up to normalization). Because the dimension of the states $\ket{f_i}$ is not too high, not all the $\ket{f_i}$-s are orthogonal, and we can bound the probability of learning $i$ from this measurement. This gives us an average-case bound on producing $\ket{\phi_i}^{\otimes k}$ from $\ket{f_i}$, which is good enough for us, as we only need a worst-case result for producing $\ket{\psi}^{\otimes k}$ starting from $\ket{f_\psi}$ for Haar-random states.

Like Werner's result, our result on producing $\ket{\psi}^{\otimes k}$ starting from arbitrary (low-dimensional) $\ket{f_\psi}$ holds without any assumptions about efficiency. If we can show that the same result holds in the presence of the oracle $\Id - 2\state{\psi}$, as long as polynomially many queries are being made to the oracle, then we should be done (recall we argued that if $k$ verifiers can solve $\QOR_n$, then they can also produce $\ket{\psi}^{\otimes k}$). Following \cite{AKL23}, we consider the overall query algorithm implemented by the cloner and the $k$ verifiers, which makes polynomially many queries, since $k$ is polynomial. Using the hybrid method again, we can argue there exists a $2^{n/2}$ dimensional subspace $S$ of the $n$-qubit Hilbert space, such that $S$ contains $\ket{\psi}$, and the query algorithm cannot distinguish the oracle $\Id - 2\state{\psi}$ from the oracle $\Id - 2S$. Therefore, we might as well give the algorithm access to $\Id - 2S$ instead of $\Id - 2\state{\psi}$. But giving the algorithm access to $\Id - 2S$ is no more powerful than telling the algorithm what $S$ is and then allowing it to make no queries, since in that case it can implement $\Id - 2S$ itself. The problem of creating $k$ copies of an $n$-qubit state that comes from a known $2^{n/2}$-dimensional subspace without making any queries then just reduces to the problem of creating $k$ copies of an unknown $n/2$-qubit state, which is impossible as we previously argued.

\para{Oracle problem with advice.} The oracle construction and proof for avBQP/supoly is very similar. To compute a random language $L: \{0,1\}^n \to \{0,1\}$, we give query access to the oracle $U_{\psi, L}$, which takes $\ket{x}\ket{\psi}$ to $(-1)^{L(x)}\ket{x}\ket{\psi}$, and acts like identity whenever the second register is orthogonal to $\ket{\psi}$, for some Haar-random $\ket{\psi}$. We can argue that any $k$ algorithms that can jointly compute $L(x)$ \emph{on average} over $x \in \{0,1\}^n$ from this oracle and some joint advice state, can also produce $\ket{\psi}^{\otimes k}$. The rest of the argument is exactly like the oracle suQMA case.

\subsubsection{Strictly uncloneable FEQP/qpoly}
\para{Connection to one-way communication complexity.}
The Hidden Matching relational problem $\HMn$, which exponentially separates classical and quantum one-way communication complexity. is as follows: the player Alice gets as input a uniformly random function $f:\{0,1\}^n \to \{0,1\}$, and player Bob gets as input uniformly random $x\in \{0,1\}^n$. Alice is supposed to send a classical or quantum message depending on her input to Bob, who needs to output $(y, f(y)\oplus f(y\oplus x))$ for any $y\in \{0,1\}^n$, where $y\oplus x$ denotes bitwise XOR. \cite{BJK08} showed that there is a quantum one-way protocol that solves this problem with $n$-qubit communication from Alice to Bob, whereas any deterministic classical protocol would require $\Omega(2^{n/2})$ bits of communication to succeed with high probability over the uniform distribution on inputs. \cite{ABK23} then observed that this problem (for some fixed $f$ for every $n$) can be used to separate $\FBQPq$ and FBQP/poly, since if a polynomial-sized classical advice state existed for every $f$, those advice states could be used as Alice as classical messages to solve $\HMn$ in communication complexity.

\para{Weak rigidity of advice states.}
The $n$-qubit quantum one-way message (advice) states given by \cite{BJK08} that solve $\HMn$ are $\frac{1}{2^{n/2}}\sum_{x\in \{0,1\}^n}(-1)^{f(x)}\ket{x}$. For a random function $f$, these states are called {\em binary phase states} and are known to be uncloneable \cite{JLS18, BS19}. We do however have the same problem as before that these may not be the only possible message states that it is possible to solve the problem with, and we may not be given $k-1$ copies of these states to begin with. We address the first problem by proving a very weak form of a rigidity result for the possible message states. Essentially, following techniques used by Yamakawa and Zhandry \cite{YZ22} in order to get a `proof of min-entropy' result from their separation between FBPP and FBQP for a random oracle, we are able to show that in any one-way protocol for $\HMn$ which has constant correctness, the output distribution of Bob must have high min-entropy (i.e., the maximum probability of outputting any $(y,f(y)\oplus f(y\oplus x))$ must be $2^{-\Omega(n)}$). With this result, even though we are not able to not able to show that the message states must be close to binary phase states, we can argue that just before the final measurement corresponding to $x$, Bob's state must look like $\ket{\phi_{f,x}} = \sum_y\alpha_{f,x}(y)\left(\frac{\ket{y}+(-1)^{f(y)\oplus f(y\oplus x)}\ket{y\oplus x}}{\sqrt{2}}\right)\ket{\phi'_{f,x,y}}$, for some arbitrary states $\ket{\phi'_{f,x,y}}$ outside the output register, where $|\alpha_{f,x}(y)|^2$ is a high min-entropy distribution (this is for zero-error protocols, for bounded error protocols, the state would instead be close to the above state). The canonical proof states are states of this form (with the $\ket{\phi'_{f,x}}$ part being empty), but other states of this form can be quite different, and are not necessarily uncloneable. Despite this, these states do satisfy a property we will see later will be useful: the pairwise inner product of $\ket{\phi_{f,x}}$ and $\ket{\phi_{f',x}}$, for random and distinct $f,f'$, is close to $\frac{1}{2}$ for every $x$ with high probability.

The idea behind proving the min-entropy lower bound is as follows: if the output distribution had low min-entropy, then it would be possible to approximate the distribution by sending not too many copies of the quantum message state. The lexicographically smallest output string which has its estimated probability above some threshold is now some almost-deterministic function of $f$ and $x$ (it is not exactly deterministic because the procedure to approximate the distribution is not exactly correct, but we can deal with this by randomly deciding the threshold, as shown in \cite{YZ22} --- the function is then deterministic for most choices of the threshold). Moreover, this string is also a correct output for $\HMn$ with inputs $f, x$ with high probability, due to the correctness of the original protocol. Each bit of this output is then also a deterministic boolean function, and concatenating the outputs of all these boolean functions will produce a correct output for $f, x$ with high probability. But it was shown in \cite{Aar05, KP14} that if there is a quantum one-way protocol for a (total or partial) boolean function of not too much communication, then there is also a deterministic one-way protocol for this boolean function whose communication complexity is not much higher (the increase in communication is basically a factor of the size of Bob's input, which is fine in this case because Bob's input is only $n$ bits).\footnote{The equivalent step in the proof of \cite{YZ22} needs the Aaronson-Ambainis conjecture for query complexity, but we are able to proceed without any unproven conjectures in one-way communication complexity.} By running this deterministic protocol bit by bit, we can reconstruct a whole output for $\HMn$ on $f, x$ with not too much communication, which contradicts the $\Omega(2^{n/2})$ deterministic lower bound for $\HMn$.

\para{Generalizing to $k$-receiver one-way communication.}
We now need to extend this result for $k$ algorithms which are all jointly computing $\HMn$ for the same $f$ (because recall $f$ will eventually be fixed for every $n$), and different $x^1, \ldots, x^k$. This corresponds to what we call a $k$-receiver one-way communication setting, where Alice has a single input, and there are $k$ Bobs, who are separated from each other and each has his own input. Alice wants to send an overall quantum message to the $k$ Bobs (which may be entangled between the registers corresponding to each) such that each of them can produce a valid output for the relation with respect to $f$ and his own input. A somewhat similar setting was recently studied in \cite{gilboa2024data}, which was called `asymmetric direct sum' in that work. However, an important distinction between their setting and ours is that they do not require the Bobs to be separated and each acting on their own registers, which allows for more efficient protocols. We ideally would like the states that allow $k$ receivers to compute $\HMn$ to be $k$ copies of the state that allows one player to compute $\HMn$. This means the min-entropy of the joint output distribution of the $k$ receivers must be $\Omega(nk)$. Unfortunately, we are unable to show such a min-entropy lower bound for the $k$ outputs on average over $f$ and $x^1, \ldots, x^k$. But if we use an $n$-parallel repeated version of $\HMn^n$, we can show a $\Omega(kn)$ for a $2^{-\poly(n)}$ fraction of $f$, which is good enough for us. Note that this is not the optimal lower bound for $n$-parallel repeated $\HMn$, because we should expect the min-entropy for a single receiver's output in that case to be $\Omega(n^2)$, but this loss is fine for us. Moreover, while the single-receiver min-entropy bound held for bounded-error protocols, we can only prove the $k$-receiver joint min-entropy bound for protocols than are jointly zero-error, which is part of the reason why are results are for FEQP/supoly rather than FBQP/supoly.

\para{Small inner products require high dimension.}
We previously argued that for $f \neq f'$, the (modulus of the) inner product of the states $\ket{\phi_{f,x}}$ and $\ket{\phi_{f',x}}$ corresponding to a single receiver is close to $\frac{1}{2}$ with high probability. Using the min-entropy bound on the output distribution of $k$ players, we can argue that the inner product between the joint states $\ket{\phi_{f,x^1\ldots x^k}}$ and $\ket{\phi_{f',x^1\ldots x^k}}$ for $k$ players corresponding to $f$ and $f'$ for a fixed $x$ is close to $2^{-\Omega(k)}$ with high probability.\footnote{Given that we are now using parallel-repeated $\HMn^n$, the inner product for a single receiver should actually be close to $2^{-n}$, but we can't say the inner product between the $k$-receiver states is close to $2^{-nk}$ because of the $n$-factor loss in min-entropy.} We now need to argue that there is some $k$ depending on $p(n)$ such that such states cannot be produced from advice states $\ket{\psi_f}$ of $p(n)$ qubits, after the action of a cloning algorithm (that does not depend on $f$) and the unitaries applied by the $k$ receivers for some fixed input. In fact, $\brakett{\psi_f}{\psi_{f'}} = \brakett{\phi_{f,x^1\ldots x^k}}{\phi_{f',x^1\ldots x^k}}$ because the same cloning map is applied to both $\ket{\psi_f}$ and $\ket{\psi_{f'}}$, and then the same unitaries corresponding to inputs $x^1, \ldots, x^k$ is applied in both cases by the $k$ verifiers. This means $|\brakett{\psi_f}{\psi_{f'}}| = 2^{-\Omega(k)}$ for many $f,f'$. Picking $k=O(p(n))$, this should not be possible because the dimension of a vector space that can support $2/\delta$ unit vectors whose pairwise inner products are most $\delta$, is at least $\Omega(1/\delta^2)$ (see e.g. \cite{dW01}).

\begin{remark}
It was observed in \cite{LLPY23} that the Yamakawa-Zhandry problem used in \cite{YZ22} to give their separation between FBPP and FBQP with respect to a random oracle, also separates one-way quantum and classical communication complexity, and hence FBQP/qpoly and FBQP/poly. All our uncloneability results should hold for Yamakawa-Zhandry problem and its advice states as well, except for the fact that this problem is not known to be in FEQP/qpoly, so it does not make sense to demand that the cloner produce states that allow $k$ players to exactly solve the problem (for all we know, such polynomial-sized states may not exist). If it is possible to generalize our proof for Hidden Matching to work for bounded error, however, we suspect a similar proof would for for the Yamakawa-Zhandry problem as well.
\end{remark}

\subsection{Organization of the paper}
In Section \ref{sec:prelim} on present some preliminaries on quantum information. In Section \ref{sec:def} we present our definitions of strictly uncloneable QMA, BQP/qpoly and FEQP/qpoly, and discuss their relationships with existing notions of cloneable and uncloneable proofs and advice. In Section \ref{sec:suQMA}, we prove our result (Theorem \ref{thm:diag-suQMA}) about the existence of a language in oracle $\suQMA$. In Section \ref{sec:BQPsuq}, we prove our result (Theorem \ref{thm:diag-BQPsuq}) about the existence of a language in oracle $\BQPsuq$. In Section \ref{sec:FEQPsuq} we present our final result (Theorem \ref{thm:diag-FEQPsuq}) about the unconditional existence of a language in $\FEQPsuq$.

\section{Preliminaries}\label{sec:prelim}

The $\ell_1$ distance between two quantum states $\rho$ and $\sigma$ is given by
\[ \Vert\rho-\sigma\Vert_1 = \Tr\sqrt{(\rho-\sigma)^\dagger(\rho-\sigma)} = \Tr|\rho-\sigma|.\]
The fidelity between two quantum states is given by
\[ \F(\rho,\sigma) = \Vert\sqrt{\rho}\sqrt{\sigma}\Vert_1.\]
$\ell_1$ distance and fidelity are related in the following way.
\begin{lemma}[Fuchs-van de Graaf inequality]\label{fvdg}
For any pair of quantum states $\rho$ and $\sigma$,
\[ 2(1-\F(\rho,\sigma)) \leq \Vert\rho-\sigma\Vert_1\leq 2\sqrt{1-\F(\rho,\sigma)^2}.\]
For two pure states $\ket{\psi}$ and $\ket{\phi}$, we have
\[ \Vert\state{\psi} - \state{\phi}\Vert_1 = \sqrt{1 - \F\left(\state{\psi},\state{\phi}\right)^2} = \sqrt{1-|\brakett{\psi}{\phi}|^2}.\]
\end{lemma}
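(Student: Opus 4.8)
The plan is to establish the pure-state identity first by a direct two-dimensional eigenvalue computation, and then to derive the two mixed-state inequalities from it by combining Uhlmann's theorem with, respectively, contractivity of the trace norm and the variational (measurement) characterization of fidelity.

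For two pure states, if $\ket{\psi}$ and $\ket{\phi}$ are parallel both sides vanish, so I may assume they span a two-dimensional space $V$. On $V$ the operator $\Delta = \state{\psi}-\state{\phi}$ is Hermitian and traceless, hence has eigenvalues $\{\lambda,-\lambda\}$ (and $0$ elsewhere), so $\Vert\Delta\Vert_1 = 2\lambda$. Evaluating $\Tr[\Delta^2] = 2 - 2|\brakett{\psi}{\phi}|^2 = 2\lambda^2$ gives $\lambda = \sqrt{1-|\brakett{\psi}{\phi}|^2}$. Separately, since $\sqrt{\state{\psi}} = \state{\psi}$ we have $\sqrt{\state{\psi}}\sqrt{\state{\phi}} = \brakett{\psi}{\phi}\,\ketbra{\psi}{\phi}$, whose trace norm is $|\brakett{\psi}{\phi}|$, so $\F(\state{\psi},\state{\phi}) = |\brakett{\psi}{\phi}|$; combining the two computations yields the pure-state formula $\Vert\state{\psi}-\state{\phi}\Vert_1 = 2\sqrt{1 - \F(\state{\psi},\state{\phi})^2} = 2\sqrt{1-|\brakett{\psi}{\phi}|^2}$.

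For the upper bound $\Vert\rho-\sigma\Vert_1 \leq 2\sqrt{1-\F(\rho,\sigma)^2}$, I would invoke Uhlmann's theorem to pick purifications $\ket{\psi}$ of $\rho$ and $\ket{\phi}$ of $\sigma$ on a common space with $|\brakett{\psi}{\phi}| = \F(\rho,\sigma)$. Since the partial trace is a CPTP map and the trace norm is contractive under CPTP maps, $\Vert\rho - \sigma\Vert_1 \leq \Vert\state{\psi}-\state{\phi}\Vert_1 = 2\sqrt{1-\F(\rho,\sigma)^2}$ by the pure-state computation above.

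For the lower bound $2(1-\F(\rho,\sigma)) \leq \Vert\rho-\sigma\Vert_1$, I would use the Fuchs--Caves variational formula $\F(\rho,\sigma) = \min_{\{E_m\}}\sum_m\sqrt{p_m q_m}$, the minimum being over POVMs $\{E_m\}$ with $p_m = \Tr[E_m\rho]$ and $q_m = \Tr[E_m\sigma]$. Evaluating on the minimizing POVM, $1-\F(\rho,\sigma) = \frac{1}{2}\sum_m(\sqrt{p_m}-\sqrt{q_m})^2 \leq \frac{1}{2}\sum_m|\sqrt{p_m}-\sqrt{q_m}|\,(\sqrt{p_m}+\sqrt{q_m}) = \frac{1}{2}\sum_m|p_m-q_m| \leq \frac{1}{2}\Vert\rho-\sigma\Vert_1$, where the last step is the standard fact that the total-variation distance between the outcome distributions of any measurement is at most the trace distance; rearranging gives the claim. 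The only non-routine ingredient is the Fuchs--Caves characterization (equivalently, singling out the optimal measurement), which I expect to be the main point requiring care; the upper bound, the pure-state identity, and the final elementary inequalities are all immediate. If a fully self-contained argument is wanted, one can instead take the two-outcome measurement given by the projector onto the positive part of $\rho-\sigma$ and use a short direct estimate in place of the Fuchs--Caves formula.
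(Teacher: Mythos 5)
The paper states this lemma without proof, citing it as a standard preliminary, so there is no in-paper argument to compare against; your proposal is a correct and essentially canonical derivation (pure-state eigenvalue computation; Uhlmann's theorem plus trace-norm contractivity for the upper bound; the Fuchs--Caves variational formula plus the classical total-variation bound for the lower bound). One small but real point your computation surfaces: you correctly obtain $\Vert\state{\psi}-\state{\phi}\Vert_1 = 2\sqrt{1-|\brakett{\psi}{\phi}|^2}$, whereas the paper's displayed pure-state identity omits the factor of $2$. Given the paper's own convention $\Vert\rho-\sigma\Vert_1 = \Tr|\rho-\sigma|$ (the unnormalized trace norm, which is the one consistent with the stated bounds $2(1-\F) \leq \Vert\rho-\sigma\Vert_1 \leq 2\sqrt{1-\F^2}$), the factor of $2$ is required, so your version is right and the paper's pure-state display has a typo. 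Everything else in your argument is sound and well organized: the traceless-Hermitian eigenvalue argument giving $\Vert\state{\psi}-\state{\phi}\Vert_1 = 2\lambda$ with $\lambda^2 = 1-|\brakett{\psi}{\phi}|^2$, the identification $\F(\state{\psi},\state{\phi}) = |\brakett{\psi}{\phi}|$ via $\sqrt{\state{\psi}}\sqrt{\state{\phi}} = \brakett{\psi}{\phi}\,\ketbra{\psi}{\phi}$, the rewriting of $1-\sum_m\sqrt{p_m q_m}$ as $\tfrac{1}{2}\sum_m(\sqrt{p_m}-\sqrt{q_m})^2$ using $\sum_m p_m = \sum_m q_m = 1$, and the final appeal to $\sum_m|p_m-q_m| \leq \Vert\rho-\sigma\Vert_1$. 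Your closing remark is also apt: if one wants to avoid citing Fuchs--Caves, the two-outcome measurement onto the positive part of $\rho-\sigma$ gives a short direct proof of the lower bound.
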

The quantity $\sqrt{1-F(\rho,\sigma)^2}$ is sometimes called the purified distance $P(\rho,\sigma)$. It is a metric that satisfies standard properties like triangle inequality, invariance under isometries etc.

\begin{lemma}[Uhlmann's theorem]\label{uhlmann}
Suppose $\rho$ and $\sigma$ are mixed states on register $X$ which are purified to $\ket{\rho}$ and $\ket{\sigma}$ on registers $XY$, then it holds that
\[ \F(\rho, \sigma) = \max_U|\bra{\rho}\Id_X\otimes U\ket{\sigma}|\]
where the maximization is over unitaries acting only on register $Y$.
\end{lemma}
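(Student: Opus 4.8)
The plan is the standard reduction to a canonical purification, followed by the variational characterization of the trace norm: fix reference purifications, compute the inner product, and dualize. Throughout write $M := \sqrt{\rho}\sqrt{\sigma}$, so that by definition $\F(\rho,\sigma) = \Vert M\Vert_1$, and let $d = \dim X$.

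First I would set up reference purifications. Let $\ket{\gamma}_{XX'} = \sum_{i=1}^{d}\ket{i}_X\ket{i}_{X'}$ be the unnormalized maximally entangled vector on two copies of $X$, and put $\ket{\rho}_0 = (\sqrt{\rho}\otimes\Id_{X'})\ket{\gamma}$ and $\ket{\sigma}_0 = (\sqrt{\sigma}\otimes\Id_{X'})\ket{\gamma}$; a one-line calculation shows these have reduced states $\rho$ and $\sigma$ on $X$. Since any two purifications of the same density operator differ by a partial isometry on the purifying system, there are partial isometries $W_\rho, W_\sigma : X'\to Y$ with $\ket{\rho} = (\Id_X\otimes W_\rho)\ket{\rho}_0$ and $\ket{\sigma} = (\Id_X\otimes W_\sigma)\ket{\sigma}_0$ (using here that $Y$ is large enough to carry the given purifications). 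Then, using the ``transpose trick'' $\bra{\gamma}(A\otimes B)\ket{\gamma} = \Tr(AB^{T})$,
\[ \bra{\rho}(\Id_X\otimes U)\ket{\sigma} \;=\; \bra{\rho}_0\big(\Id_X\otimes W_\rho^{\dagger}UW_\sigma\big)\ket{\sigma}_0 \;=\; \bra{\gamma}\big(M\otimes W_\rho^{\dagger}UW_\sigma\big)\ket{\gamma} \;=\; \Tr(MC), \]
where $C := (W_\rho^{\dagger}UW_\sigma)^{T}$ satisfies $\Vert C\Vert_\infty\le 1$, being a product of contractions, transposed. The easy inequality is now immediate: by H\"older, $\vert\bra{\rho}(\Id_X\otimes U)\ket{\sigma}\vert = \vert\Tr(MC)\vert \le \Vert M\Vert_1\Vert C\Vert_\infty \le \F(\rho,\sigma)$ for every $U$, hence $\max_U\vert\bra{\rho}(\Id_X\otimes U)\ket{\sigma}\vert \le \F(\rho,\sigma)$.

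For the reverse inequality I would choose $C$ optimally. Taking a polar decomposition $M = VP$ with $P = \sqrt{M^{\dagger}M}\ge 0$ and $V$ a partial isometry, one has $\Tr(MV^{\dagger}) = \Tr(P) = \Vert M\Vert_1$, so it is enough to exhibit a unitary $U$ on $Y$ with $(W_\rho^{\dagger}UW_\sigma)^{T} = V^{\dagger}$, i.e.\ $W_\rho^{\dagger}UW_\sigma = \overline{V}$. Since $W_\rho, W_\sigma$ are partial isometries and $\overline{V}$ a contraction, this is precisely a unitary-dilation problem, which is solvable once $Y$ is large enough — any contraction between subspaces of $Y$ extends to a unitary on $Y$ after adjoining enough ancilla dimensions, and enlarging $Y$ affects neither side of the claimed identity. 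This gives $\max_U\vert\bra{\rho}(\Id_X\otimes U)\ket{\sigma}\vert \ge \F(\rho,\sigma)$, and equality follows.

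The two substantive external facts are (i) the duality $\max_{\Vert C\Vert_\infty\le 1}\vert\Tr(MC)\vert = \Vert M\Vert_1$, realized by the polar decomposition, and (ii) that every contraction dilates to a unitary, used to turn the optimal $C$ back into a unitary on the purifying register. I do not expect a genuine obstacle here; the only delicate point is keeping track of the supports of $\rho$ and $\sigma$ so that the partial isometries $W_\rho, W_\sigma$ and the dilation actually live on $Y$, which is harmless as long as one is willing to assume $Y$ is large enough to host purifications — an assumption implicit in the statement and costing nothing.
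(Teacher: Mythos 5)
The paper states Uhlmann's theorem in its preliminaries without proof, as a standard quantum-information fact, so there is no proof of the paper's to compare against; what follows is a review of your argument on its own merits. The overall strategy — reduce to canonical purifications with $\ket{\gamma}$, use the transpose trick to turn the overlap into $\Tr(MC)$ with $\Vert C\Vert_\infty\le 1$, get the upper bound from H\"older, and realize the optimum via the polar decomposition $M = VP$ — is exactly the standard proof and is sound up to the final step.

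The gap is in your closing claim that ``enlarging $Y$ affects neither side of the claimed identity.'' It does affect the right-hand side: the maximum is over unitaries on the given $Y$, and a unitary on a larger $Y' \supseteq Y$ compresses to a mere contraction on $Y$ once you take inner products with $\ket{\rho}, \ket{\sigma}$, which are supported on $XY$. So exhibiting a unitary $U'$ on $Y'$ with $\vert\bra{\rho}(\Id_X\otimes U')\ket{\sigma}\vert = \F(\rho,\sigma)$ only shows that the max over \emph{contractions} on $Y$ is $\F(\rho,\sigma)$, not that the max over \emph{unitaries} on $Y$ is. You framed the last step as a genuine unitary-dilation problem (a contraction needing extra ancilla dimensions), but in fact no dilation is needed, because $V$ from the polar decomposition is not just a contraction: it is a partial isometry whose initial space sits inside $\supp(\sigma)$ and whose final space sits inside $\supp(\rho)$, so $\overline V$ is a partial isometry with initial space in $\supp(\sigma^T)$ and final space in $\supp(\rho^T)$. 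Since $W_\sigma$ and $W_\rho$ are isometries precisely on $\supp(\sigma^T)$ and $\supp(\rho^T)$, the operator $W_\rho\overline V W_\sigma^{\dagger}$ is a partial isometry \emph{on $Y$}. Any partial isometry on a finite-dimensional space extends to a unitary on that same space (initial and final spaces have equal dimension, hence so do their orthogonal complements); extend it to $U$ on $Y$ and check
\[ W_\rho^{\dagger} U W_\sigma \;=\; W_\rho^{\dagger}W_\rho\,\overline V\, W_\sigma^{\dagger}W_\sigma \;=\; \Pi_{\supp(\rho^T)}\,\overline V\,\Pi_{\supp(\sigma^T)} \;=\; \overline V, \]
using the support inclusions. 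This closes the argument on the original $Y$ with no enlargement. Alternatively, one can pass from contractions back to unitaries by noting that unitaries are the extreme points of the contraction ball and $K\mapsto\bra{\rho}(\Id_X\otimes K)\ket{\sigma}$ is linear, but the partial-isometry observation is the cleaner fix given the machinery you already set up.
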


\begin{lemma}[Gentle measurement lemma]
If measurements operators $M_1, \ldots, M_k$ succeed on a state $\ket{\psi}$ with probability $1-\delta$ each, then the probability that $M_1, \ldots, M_k$ all succeed when the measurements are applied in sequence is $1-k\sqrt{\delta}$.
\end{lemma}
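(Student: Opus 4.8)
This is the standard ``sequential gentle measurement'' fact, and the plan is to reduce it to a single-shot disturbance bound and then telescope. Write each measurement as a two-outcome POVM $\{M_i, \Id - M_i\}$ with $0 \preceq M_i \preceq \Id$, so that conditioned on $M_i$ succeeding, the normalized post-measurement state of a pure state $\ket{\psi}$ is $\sqrt{M_i}\ket{\psi}/\Vert\sqrt{M_i}\ket{\psi}\Vert$. The first step is the single-shot claim: if $\bra{\psi}M_i\ket{\psi} \geq 1-\delta$ then $\Vert\ket{\psi} - \sqrt{M_i}\ket{\psi}\Vert \leq \sqrt{\delta}$. This is immediate from $\Vert\ket{\psi} - \sqrt{M_i}\ket{\psi}\Vert^2 = 1 - 2\bra{\psi}\sqrt{M_i}\ket{\psi} + \bra{\psi}M_i\ket{\psi}$ together with the operator inequality $\sqrt{M_i} \succeq M_i$ (valid since $0 \preceq M_i \preceq \Id$), which gives $\bra{\psi}\sqrt{M_i}\ket{\psi} \geq \bra{\psi}M_i\ket{\psi}$ and hence $\Vert\ket{\psi} - \sqrt{M_i}\ket{\psi}\Vert^2 \leq 1 - \bra{\psi}M_i\ket{\psi} \leq \delta$. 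Equivalently, one checks that the normalized post-measurement state has overlap at least $\sqrt{1-\delta}$ with $\ket{\psi}$ and invokes Lemma~\ref{fvdg} for the purified distance.

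For the sequential statement, observe that applying the measurements in order, the probability that all $k$ of them succeed equals $\Vert\sqrt{M_k}\cdots\sqrt{M_1}\ket{\psi}\Vert^2$, since the renormalization factors telescope against the conditional success probabilities. I would lower bound this by first bounding $\Vert\ket{\psi} - \sqrt{M_k}\cdots\sqrt{M_1}\ket{\psi}\Vert$: inserting the intermediate vectors $\sqrt{M_k}\cdots\sqrt{M_{i+1}}\ket{\psi}$ and applying the triangle inequality, and using that every $\sqrt{M_j}$ is a contraction, the $i$-th difference contributes at most $\Vert\ket{\psi} - \sqrt{M_i}\ket{\psi}\Vert \leq \sqrt{\delta}$, so the total is at most $k\sqrt{\delta}$. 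The reverse triangle inequality then gives $\Vert\sqrt{M_k}\cdots\sqrt{M_1}\ket{\psi}\Vert \geq 1 - k\sqrt{\delta}$, hence the all-success probability is at least $(1 - k\sqrt{\delta})^2 \geq 1 - 2k\sqrt{\delta}$; a bound of the form $1 - O(k\sqrt{\delta})$ is all that is needed in the applications, and the essential point is the linear dependence on $k$.

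The proof is routine, so there is no real obstacle; the one thing to be careful about is that the hypothesis $\bra{\psi}M_i\ket{\psi} \geq 1-\delta$ is a guarantee about the \emph{original} state $\ket{\psi}$, whereas in the sequential run $M_i$ is actually applied to the state already disturbed by $M_1, \ldots, M_{i-1}$. Chaining the conditional success probabilities of the successive \emph{normalized} post-measurement states naively would let the error accumulate quadratically in $k$; telescoping directly on the \emph{unnormalized} vector $\sqrt{M_k}\cdots\sqrt{M_1}\ket{\psi}$, as above, is precisely what keeps the loss linear.
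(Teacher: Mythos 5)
Your proof is correct, and it is the standard telescoping argument: express the all-success probability as $\Vert\sqrt{M_k}\cdots\sqrt{M_1}\ket{\psi}\Vert^2$, bound the one-shot disturbance $\Vert(\Id-\sqrt{M_i})\ket{\psi}\Vert\le\sqrt{\delta}$ via $\sqrt{M_i}\succeq M_i$, and chain with the triangle inequality using that each $\sqrt{M_j}$ is a contraction. The paper gives no proof of this lemma --- it is stated in the preliminaries as a known fact --- so there is no paper-specific argument to compare against. Your observation that one must telescope the \emph{unnormalized} vector rather than chain the conditional success probabilities of the normalized intermediate states is exactly the right thing to flag; that is what keeps the loss linear in $k$.

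The only discrepancy is the constant: you land on $1-2k\sqrt{\delta}$ (via $(1-k\sqrt{\delta})^2\ge 1-2k\sqrt{\delta}$) whereas the lemma as written asserts $1-k\sqrt{\delta}$. You already acknowledge this, and it is harmless --- the form $1-O(k\sqrt{\delta})$ is all the paper uses, and the stated constant is in any case loose as quoted. One further remark worth making: in the paper's sole application of this lemma (the proof of Lemma~\ref{lem:kwit}), the $M_i$ are projections acting on pairwise disjoint registers, hence they commute. In that commuting projective case one can skip the gentle-measurement machinery entirely: the joint success probability is $\bra{\psi}M_1\cdots M_k\ket{\psi}$, and a plain union bound on the failure events gives the sharper estimate $1-k\delta$ with no square root at all. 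The $\sqrt{\delta}$ dependence in the stated lemma is only needed for genuinely non-commuting sequential measurements, which is the generality your proof handles.
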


\subsection{Entropic quantities}
The von Neumann entropy of a quantum state $\rho$ on a register $Z$ is given by
\[ \H(\rho) = -\Tr(\rho\log \rho).\]
We shall also denote this by $\H(Z)_\rho$. For a state $\rho_{YZ}$ on registers $YZ$, the entropy of $Y$ conditioned on $Z$ is given by
\[ \H(Y|Z)_\rho = \H(YZ)_\rho - \H(Z)_\rho\]
where $\H(Z)_\rho$ is calculated w.r.t. the reduced state $\rho_Z$. 
\begin{lemma}[Data processing inequality]
If $\sigma_{YZ'} = \Id\otimes \Lambda(\rho_{YZ})$, where $\Lambda$ is some CPTP map taking the register $Z$ to $Z'$, then $\H(Y|Z)_\rho \leq \H(Y|Z')_\sigma)$. In particular, because tracing out a register is a CPTP map, we have $\H(Y|Z)_\rho \leq \H(Y)_\rho$.
\end{lemma}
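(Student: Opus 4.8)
The plan is to reduce the statement to two standard facts: the invariance of the von Neumann entropy under isometries, and strong subadditivity (equivalently, that appending-and-then-discarding a side register cannot decrease a conditional entropy). First I would dilate the channel: by the Stinespring representation there is an auxiliary register $E$ and an isometry $V$ taking $Z$ into $Z'E$ such that $\Lambda(\cdot) = \Tr_E\bigl(V(\cdot)V^\dagger\bigr)$. Define $\tilde\sigma_{YZ'E} = (\Id_Y\otimes V)\,\rho_{YZ}\,(\Id_Y\otimes V)^\dagger$, so that $\sigma_{YZ'} = \Tr_E\,\tilde\sigma_{YZ'E}$.

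Next, since $\Id_Y\otimes V$ is an isometry and $\H$ is invariant under isometries, we get $\H(YZ)_\rho = \H(YZ'E)_{\tilde\sigma}$ and $\H(Z)_\rho = \H(Z'E)_{\tilde\sigma}$; subtracting these gives $\H(Y|Z)_\rho = \H(Y|Z'E)_{\tilde\sigma}$. Then I would invoke strong subadditivity of the von Neumann entropy in the ``conditioning reduces entropy'' form, namely $\H(Y|Z'E)_{\tilde\sigma}\le \H(Y|Z')_{\sigma}$ (this is the rearrangement of $\H(YZ'E) + \H(Z') \le \H(YZ') + \H(Z'E)$). Chaining the equality and the inequality yields $\H(Y|Z)_\rho \le \H(Y|Z')_\sigma$, as claimed. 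The ``in particular'' statement is the special case $\Lambda = \Tr_Z$, where $Z'$ is the trivial one-dimensional register and $\H(Y|Z')_\sigma$ reduces to $\H(Y)_\rho$ (alternatively, it is just strong subadditivity directly with $Z'$ trivial).

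I do not expect a genuine obstacle here: the argument is entirely bookkeeping with purifications and isometric invariance, and the one non-elementary ingredient — strong subadditivity (Lieb--Ruskai) — is a standard theorem that I would cite as a black box rather than reprove. If one wished to avoid even naming strong subadditivity, an alternative route is to derive everything from monotonicity of the quantum relative entropy under CPTP maps, but the strong-subadditivity route above is the shortest.
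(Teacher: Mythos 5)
Your proof is correct; the paper states this lemma in its preliminaries without proof, as a standard fact, and your Stinespring-dilation-plus-strong-subadditivity argument is exactly the textbook derivation one would cite for it.
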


The relative entropy between two states $\rho$ and $\sigma$ of the same dimensions is given by
\[ \D(\rho\Vert \sigma) = \Tr(\rho\log\rho) - \Tr(\rho\log\sigma).\]
The relative min-entropy between $\rho$ and $\sigma$ is defined as
\[ \D_\infty(\rho\Vert\sigma) = \min\{\lambda : \rho \leq 2^\lambda\sigma\}.\]
It is easy to see that for all $\rho$ and $\sigma$,
$ 0 \leq \D(\rho\Vert\sigma) \leq \D_\infty(\rho\Vert\sigma).$ 
\begin{lemma}\label{fc:event-prob}
If $\sigma = \eps\rho + (1-\eps)\rho'$, then $\D_\infty(\rho\Vert \sigma) \leq \log(1/\eps)$.
\end{lemma}
\begin{lemma}\label{fc:Sinfty-tri}
For any three quantum states $\rho, \sigma, \phi$ such that $\supp(\rho) \subseteq \supp(\phi) \subseteq \supp(\sigma)$,
\[ \D_\infty(\rho\Vert\sigma) \leq \D_\infty(\rho\Vert\phi) + \D_\infty(\phi\Vert\sigma). \]
\end{lemma}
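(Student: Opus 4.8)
The plan is to unwind the definition of relative min-entropy and then chain two operator inequalities. Write $\lambda_1 = \D_\infty(\rho\Vert\phi)$ and $\lambda_2 = \D_\infty(\phi\Vert\sigma)$. Both are finite real numbers: the hypotheses $\supp(\rho)\subseteq\supp(\phi)$ and $\supp(\phi)\subseteq\supp(\sigma)$ guarantee that the sets $\{\lambda : \rho\leq 2^\lambda\phi\}$ and $\{\lambda: \phi\leq 2^\lambda\sigma\}$ are nonempty. First I would observe that each of these sets is a closed half-line of the form $[\lambda_0,\infty)$ — it is up-closed because $2^{\lambda'}\sigma - 2^\lambda\sigma = (2^{\lambda'}-2^\lambda)\sigma\succeq 0$ for $\lambda'\geq\lambda$, and closed at the bottom because a limit of positive semidefinite operators is positive semidefinite — so the defining minima are genuinely attained, giving the operator inequalities $\rho \leq 2^{\lambda_1}\phi$ and $\phi \leq 2^{\lambda_2}\sigma$.

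Next I would multiply the second inequality by the positive scalar $2^{\lambda_1}$, which preserves the Loewner order, to get $2^{\lambda_1}\phi \leq 2^{\lambda_1+\lambda_2}\sigma$, and then compose with the first inequality by transitivity of $\leq$ to conclude $\rho \leq 2^{\lambda_1+\lambda_2}\sigma$. Since $\supp(\rho)\subseteq\supp(\sigma)$ follows from the two containment hypotheses, $\D_\infty(\rho\Vert\sigma)$ is well-defined, and as it is the minimum of the $\lambda$ satisfying $\rho\leq 2^\lambda\sigma$ it is at most $\lambda_1+\lambda_2 = \D_\infty(\rho\Vert\phi) + \D_\infty(\phi\Vert\sigma)$, which is exactly the claimed inequality.

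There is essentially no obstacle here; this is a routine fact, and the only points deserving a word of care are (i) that the defining minima are attained rather than being mere infima, handled by the closedness remark above, and (ii) that scaling an operator inequality by a positive constant and composing two operator inequalities both respect the Loewner order, which are standard. The support hypotheses serve only to exclude the degenerate case in which one of the three min-entropies equals $+\infty$; the inequality would still hold in that case, but the statement is cleaner with the hypotheses in place.
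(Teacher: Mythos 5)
Your proof is correct; the chaining argument $\rho \leq 2^{\lambda_1}\phi \leq 2^{\lambda_1+\lambda_2}\sigma$ is the standard (and essentially the only) way to establish this triangle inequality for $\D_\infty$. Note that the paper states this lemma in the preliminaries without proof, so there is nothing to compare against, but your argument supplies exactly the justification the reader would expect.
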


The mutual information between $Y$ and $Z$ with respect to a state $\rho$ on $YZ$ can be defined in the following equivalent ways:
\[  \I(Y:Z)_\rho = \D(\rho_{YZ}\Vert\rho_Y\otimes\rho_Z) = \H(Y)_\rho - \H(Y|Z)_\rho = \H(Z)_\rho - \H(Z|Y)_\rho.\]
The conditional mutual information between $Y$ and $Z$ conditioned on $X$ is defined as
\[ \I(Y:Z|X)_\rho = \H(Y|X)_\rho - \H(Y|XZ)_\rho = \H(Z|X)_\rho - \H(Z|XY)_\rho.\]
Mutual information can be seen to satisfy the chain rule
\[  \I(XY:Z)_\rho =  \I(X:Z)_\rho +  \I(Y:Z|X)_\rho.\]
Moreover, for any three states $\rho, \sigma, \phi$ we have,
\[ \D(\rho_{XY}\Vert\sigma_X\otimes\phi_Y) \geq \D(\rho_{XY}\Vert\sigma_X\otimes\rho_Y) \geq \I(X:Y)_\rho.\]

\section{Complexity class definitions}\label{sec:def}
\subsection{Strictly uncloneable QMA}
First we present the standard definition of QMA.
\begin{definition}[QMA]\label{def:QMA}
A language $L \subseteq \{0,1\}^*$ is in the complexity class $\QMA$ if there exists a polynomial-time quantum algorithm $\clQ$ such that the following conditions hold:
\begin{enumerate}
\item Completeness: For every $x \in L\cap \{0,1\}^n$, there exists a $p(n)$-qubit quantum proof state (for some polynomial $p(n)$) $\ket{\psi_x}$ such that $\Pr[\clQ(x, \ket{\psi_x})=1] \geq \frac{1}{2} + \frac{1}{\poly(n)}$.
\item Soundness: For every $x\notin L$ and every $p(n)$-qubit quantum state $\ket{\phi_x}$, $\Pr[\clQ(x, \ket{\phi_x})=0] \geq \frac{1}{2} + \frac{1}{\poly(n)}$.
\end{enumerate}
We say the pair $(\clQ, \{\ket{\psi_x}\}_{x\in L})$ satisfies completeness and soundness for $L$ if the above two conditions hold.
\end{definition}
It is also possible to define QMA with the probability in the completeness condition being $a$ and the probability in the soundness condition being $b$, with $a-b \geq \frac{1}{\poly(n)}$, but we shall stick to the $\frac{1}{2} + \frac{1}{\poly(n)}$ definition for parity with BQP/qpoly later. The quantity $a-b$ is called the soundness gap of the proof system and algorithm.

To define uncloneability for QMA, we will need to define joint completeness and soundness for $k$ verifiers who act on different registers of a shared proof state.
\begin{definition}[Joint completeness and soundness]\label{def:joint-succ}
For $k$ quantum algorithms $\clB_1, \ldots, \clB_k$ which each get $x$ as a classical input, and registers $B_1, \ldots, B_k$ respectively of a quantum state $\ket{\psi_x}$ (which is potentially entangled across the $k$ registers) as their quantum proofs, we use $\Pr[\clB_1\otimes\ldots \clB_k(\ket{x}^{\otimes k}\ket{\psi_x}) = (o_1, \ldots, o_k)]$ to denote the probability of $\clB_1, \ldots, \clB_k$ respectively outputting $o_1, \ldots, o_k$. Let $\{\ket{\psi_x}\}_{x\in L}$ be a set of candidate proof states for $L$ on registers $B_1, \ldots, B_k$.
We say $(\clB, \ldots, \clB_k, \{\ket{\psi_x}\}_{x\in L})$ jointly satisfy completeness and soundness for $L$ if for some polynomial $q(n)$:
\begin{enumerate}
\item For every $x \in L$,
\[ \Pr[\clB_1\otimes\ldots\otimes\clB_k(\ket{x}^{\otimes k}\ket{\psi_x}) = 1^k] \geq \frac{1}{2} + \frac{1}{q(n)};\]
\item For every $x\notin L$ and for every state $\ket{\phi_x}$ on registers $B_1\ldots B_k$,
\[ \Pr[\clB_1\otimes\ldots\otimes\clB_k(\ket{x}^{\otimes k}\ket{\psi_x}) = 0^k] \geq \frac{1}{2} + \frac{1}{q(n)}.\]
\end{enumerate}
$1/q(n)$ is the joint soundness gap for $(\clB_1, \ldots, \clB_k, \{\ket{\psi_x}\})$.\footnote{In principle we could consider the joint soundness gap to be $1/\poly(k,n)$, but we will always consider $k$ to be polynomial in $n$, so this will not matter.}
\end{definition}
\begin{remark}
We do not allow the probability of $k$ players jointly outputting the correct answer to go down exponentially in $k$. This is because we consider the ideal case to be the algorithms $\clB_1, \ldots, \clB_k$ each receiving valid proof states (though these may be entangled with the other proofs in some complicated way). In this case, each algorithm can amplify its success probability to $1-\negl(n)$ with the same proof by the result of Mariott and Watrous \cite{MW05}. Therefore, the joint success probability of the $k$ algorithms will be $1-k\cdot\negl(n) = 1-\negl(n)$ in the ideal case (for polynomial $k$), just from the union bound.
\end{remark}


Before defining strictly uncloneable QMA, we present a weaker definition of uncloneable proof systems.
\begin{definition}[Uncloneable proof system]\label{def:u-QMA}
We say a family of polynomial-sized states $\{\ket{\psi_x}\}_{x\in L}$ is an uncloneable quantum proof system for a language $L$ if there exists a polynomial-time algorithm $\clQ$ such that $(\clQ, \{\ket{\psi_x}\}_{x\in L})$ satisfies completeness and soundness for $L$, and for any  tuple of polynomial-time quantum algorithms $(\clA, \clB, \clC)$ (where $\clA$ is a quantum algorithm that takes in $x$ and $\ket{\psi_x}$ as input and outputs a state on registers $BC$; $\clB$ is a polynomial-time quantum algorithm that takes in $x$ and the quantum state in register $B$ as input, and outputs a bit; $\clC$ is analogous to $\clB$), $(\clB, \clC, \{\clA(\ket{x}\ket{\psi_x})\}_{x\in L})$ do not jointly satisfy completeness and soundness for $L$.
\end{definition}
The above definition is similar to the definition of an anti-piracy proof system in \cite{BGPS24} except for an issue of worst-case vs.\ average-case requirements as we shall shortly explain. Following \cite{BGPS24}, we have given the cloning algorithm $\clA$ access to the input $x$. Because of this, $\clA$ necessarily has to be polynomial-time, as an unbounded cloning algorithm can always just produce the state $\ket{\psi_x}$ after looking at $x$.

It is also possible to consider a definition where $\clA$ does not get access to $x$, and indeed such a definition is considered for cloneableQMA in \cite{ITCS:NZ24}. This would make the notion of uncloneability weaker, and the notion of cloneability stronger.

In \cite{BGPS24}, the definition of an anti-piracy proof system considers a polynomial-time verifier admissible for a language if there exists a family of states with which it satisfies completeness and soundness (with the completeness parameter being $1-\negl(n)$, and the soundness parameter $\negl(n)$). They call a proof system $\{\ket{\psi_x}\}_{x\in L}$ uncloneable with respect to some family of distributions $D_n$ on inputs in $\{0,1\}^n$, if for all polynomial-time cloning algorithms $\clA$ and admissible verifiers $\clB$ and $\clC$, for all $n$,
\[ \Pr_{x\sim D}\left[(\clB\otimes\clC)\left(\ket{x}\ket{x}(\clA(\ket{x}\ket{\psi_x})\right)\right] \leq \negl(n).\]
We do not explicitly require $\clB$ and $\clC$ to be admissible in Definition \ref{def:u-QMA}, but if we did, then the condition that $(\clB, \clC, \{\clA(\ket{x}\ket{\psi_x})\}_{x\in L})$ are not jointly complete and sound would imply that there exists some $x\in L$ such that the probability of $\clB$ and $\clC$ jointly outputting $(1,1)$ on that $x$ is too small. This is essentially a worst-case version of the requirement in \cite{BGPS24}.

As was observed in \cite{BGPS24}, uncloneability is property of a proof system, not a language.\footnote{Although it is possible to define a class of languages that have \emph{some} uncloneable QMA proof system: this would be in similar spirit to the class $\BQPuq$ defined in \cite{BKL23}, which we describe later. But we stress that this class can overlap with cloneableQMA.} Indeed, \cite{BGPS24} present a candidate for an uncloneable proof system for NP, which definitely has cloneable proofs. On the other hand, strict uncloneability which we present next \emph{is} a property of a language and not a proof system: specifically, it is the property that the language only has uncloneable proofs.

\begin{definition}[Strictly uncloneable QMA]\label{def:su-QMA}
A language $L\subseteq \{0,1\}^*$ is in $\suQMA$ if it is in $\QMA$, and additionally, for any set of polynomial-sized states $\{\ket{\phi_x}\}_{x\in L}$, there exists a polynomial $k(n)$ such that for any $(k(n)+1)$-tuple of polynomial-time quantum algorithms $(\clA, \clB_1, \ldots, \clB_k)$ (where $\clA$ takes in $x$ and $\ket{\phi_x}$ as input and outputs a state on registers $B_1\ldots B_k$, and each $\clB_i$ takes in $x$ and the quantum state in register $B_i$ as input, and outputs a bit), $(\clB_1, \ldots, \clB_k, \{\clA(\ket{x}\ket{\phi_x})\}_{x\in L})$ do not jointly satisfy completeness and soundness for $L$.
\end{definition}

It is necessary that the definition of strictly uncloneable QMA says that there exists some $k$ for which $k$ verifiers cannot compute the language for any given proof. If we set $k=2$ like in Definition \ref{def:u-QMA}, then no language would be in $\suQMA$. This is because if $\ket{\psi_x}$ is a valid $\QMA$ proof, then so is $\ket{\psi_x}^{\otimes 2}$ (the single verifier that receives this proof just ignores the second copy).

Now we'll present the definition of cloneableQMA due to \cite{ITCS:NZ24}, and argue that any language that is in $\suQMA$ must necessarily not be in cloneableQMA.

\begin{definition}[Cloneable QMA, \cite{ITCS:NZ24}]
A language $L\subseteq \{0,1\}^*$ is in $\cQMA$ if there exists a polynomial-time algorithm $\clQ$ and a set of $p(n)$-qubit states $\{\ket{\psi_x}\}_{x\in L}$ such that $(\clQ, \{\ket{\psi_x}\}_{x\in L})$ satisfies completeness and soundness for $L$, and moreover, there exists a polynomial-time quantum algorithm $\clA$ such that for every $x\in L$, $\bra{\psi_x}^{\otimes 2}\clA(\ket{1^{|x|}}\ket{\psi_x})\ket{\psi_x}^{\otimes 2} \geq 1 - \negl(n)$.
\end{definition}
In the above definition, $\clA$ is not given access to the input $x$ as stated earlier, but it is given access to the input length. This is the strongest possible definition of cloneability, which also includes cloning with access to the input as a subcase. This is because the definition only requires the existence of a cloneable proof system: if there is a proof system that a cloner can clone with access to the input, then there is a proof system that simply appends each input to the previous proof, which a cloner can now clone without access to the input. Moreover, the notion of cloneability in the above definition is stronger than we require of the cloners in the strictly uncloneable classes --- the cloner here is required to product two copies of the given state, rather than merely produce some bipartite state that can be used by two verifiers.



\begin{definition}[Oracle QMA, suQMA, cloneableQMA]
For any (classical or quantum) oracle $\clO$, the respective oracle versions $\QMA^\clO, \suQMA^\clO, \cQMA^\clO$ of the classes $\QMA, \suQMA$, $\cQMA$ are the same as the standard versions, except that all the polynomial time quantum algorithms involved (so the algorithms $\clQ, \clA$ and $\clB_1, \ldots, \clB_k$) have access to the oracle $\clO$, and the proof states also depend on the oracle.
\end{definition}

\begin{lemma}\label{lem:c-suQMA}
For any oracle $\clO$, $\suQMA^\clO \subseteq \QMA^\clO\setminus\cQMA^\clO$.
\end{lemma}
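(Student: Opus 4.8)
The plan is to prove the two containments $\suQMA^\clO \subseteq \QMA^\clO$ and $\suQMA^\clO \cap \cQMA^\clO = \emptyset$ separately. The first is immediate from Definition~\ref{def:su-QMA}, which explicitly requires membership in $\QMA^\clO$. So the real content is showing that no language $L$ can be simultaneously in $\suQMA^\clO$ and $\cQMA^\clO$. First I would suppose for contradiction that $L$ lies in both classes. Being in $\cQMA^\clO$, there is a polynomial-time verifier $\clQ$, a family of $p(n)$-qubit proof states $\{\ket{\psi_x}\}_{x\in L}$ with which $\clQ$ satisfies completeness and soundness, and a polynomial-time cloning algorithm $\clA_0$ such that $\bra{\psi_x}^{\otimes 2}\clA_0(\ket{1^{|x|}}\ket{\psi_x})\ket{\psi_x}^{\otimes 2} \geq 1-\negl(n)$ for every $x \in L$.

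The key step is to bootstrap the $1\to 2$ cloner $\clA_0$ into a $1 \to k$ cloner for \emph{arbitrary} polynomial $k$, and then apply the $\suQMA$ condition to the candidate proof family $\{\ket{\phi_x}\}_{x\in L} := \{\ket{\psi_x}\}_{x\in L}$ itself. Iterating $\clA_0$: apply $\clA_0$ to get something $\negl(n)$-close to $\ket{\psi_x}^{\otimes 2}$, then apply $\clA_0$ again to one of the two copies to get close to $\ket{\psi_x}^{\otimes 3}$, and so on; after $k-1$ applications, using the triangle inequality for purified distance (together with the fact that the purified distance only accumulates additively under the same channel being applied, plus monotonicity of trace distance under the partial operations of the later $\clA_0$ invocations), we obtain a polynomial-time algorithm $\clA$ producing a state $\negl(n)$-close to $\ket{\psi_x}^{\otimes k}$ on registers $B_1 \ldots B_k$. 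Now let $\clB_1 = \cdots = \clB_k = \clQ$. Since $\clQ$ on a genuine proof $\ket{\psi_x}$ accepts with probability $\geq \tfrac12 + \tfrac{1}{\poly(n)}$ (and by Marriott--Watrous amplification we may take this to be $1-\negl(n)$ with the same proof — this is exactly the reasoning in the remark following Definition~\ref{def:joint-succ}), each $\clB_i$ applied to its register of $\ket{\psi_x}^{\otimes k}$ accepts with probability $1-\negl(n)$; by the union bound the joint acceptance probability on a true $\ket{\psi_x}^{\otimes k}$ is $1-\negl(n)$. Since the state $\clA(\ket{x}\ket{\psi_x})$ is $\negl(n)$-close in trace distance to $\ket{\psi_x}^{\otimes k}$, the joint acceptance probability on it is still $1-\negl(n) \geq \tfrac12 + \tfrac{1}{q(n)}$ for every $x\in L$ (taking $q$ any polynomial), giving joint completeness; joint soundness for $x \notin L$ holds because each $\clB_i = \clQ$ is individually sound, so in particular $\clB_1$ rejects any state on $B_1$ with probability $\geq \tfrac12 + \tfrac1{\poly(n)}$, hence the joint probability of outputting $0^k$ is at least that. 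Thus $(\clB_1,\ldots,\clB_k, \{\clA(\ket{x}\ket{\psi_x})\}_{x\in L})$ jointly satisfy completeness and soundness for $L$, for \emph{every} polynomial $k$ — contradicting the defining property of $\suQMA^\clO$, which demands that for the candidate family $\{\ket{\psi_x}\}$ there exist some $k$ for which no such tuple works. Everything relativizes since $\clA_0, \clQ$ all have oracle access to $\clO$.

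The main obstacle I anticipate is the error-accumulation bookkeeping in the iterated cloning step: one must be careful that the intermediate states are mixed (since $\clA_0$ outputs a channel, not a pure-state map), so the clean purified-distance triangle inequality has to be combined with Fuchs--van de Graaf (Lemma~\ref{fvdg}) and monotonicity of trace distance under the partial-trace/later-cloning operations to control $\Vert \clA(\ket{x}\ket{\psi_x}) - \state{\psi_x}^{\otimes k}\Vert_1$; with $k = \poly(n)$ many steps each contributing $\negl(n)$ error, the total is still $\negl(n)$, but this needs to be stated carefully. A secondary subtlety is whether $\clA_0$ needs access to $x$ rather than just $|x|$: the $\cQMA$ definition only gives it $\ket{1^{|x|}}$, but this is actually \emph{stronger} for our purposes since the $\suQMA$ cloner $\clA$ is allowed $x$, so we simply feed $\clA_0$ the input-length unary string and we are fine. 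No reliance on the input beyond its length is needed anywhere in the argument.
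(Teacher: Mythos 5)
Your approach is essentially the same as the paper's: extract the $1\to 2$ cloner from the $\cQMA$ witness, iterate it $k-1$ times, control the error accumulation via the triangle inequality for purified distance together with monotonicity and Fuchs--van de Graaf, and then run $k$ copies of the $\QMA$ verifier $\clQ$ on the resulting state to contradict the $\suQMA$ condition for the candidate family $\{\ket{\psi_x}\}$. You are also (correctly) more explicit than the paper about the Marriott--Watrous amplification needed so that the joint acceptance probability does not decay with $k$, and about why $\clA_0$ needing only $1^{|x|}$ is not an obstruction.

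There is, however, a genuine error in your joint soundness step. You write that because $\clB_1 = \clQ$ rejects any state on $B_1$ with probability at least $\tfrac12 + \tfrac1{\poly(n)}$, ``the joint probability of outputting $0^k$ is at least that.'' This is backwards: the event that \emph{all} of $\clB_1,\dots,\clB_k$ output $0$ is contained in the event that $\clB_1$ outputs $0$, so the joint probability is at \emph{most} $\Pr[\clB_1 = 0]$, not at least it. With the un-amplified soundness parameter $\tfrac12+\tfrac1{\poly(n)}$, there is no reason the joint probability is above $\tfrac12 + \tfrac1{q(n)}$. The fix is exactly the one you already invoked for completeness: take $\clB_1 = \cdots = \clB_k$ to be the \emph{Marriott--Watrous amplified} version of $\clQ$, so that for $x\notin L$ each $\clB_i$ outputs $0$ on any (possibly mixed, by convexity) state in $B_i$ with probability $1-\negl(n)$; then by the union bound over the $k$ failure events the probability of $0^k$ is $\geq 1 - k\cdot\negl(n) \geq \tfrac12 + \tfrac1{q(n)}$. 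With that correction, your argument matches the paper's proof, which also relies (implicitly, via the remark following Definition~\ref{def:joint-succ}) on this amplification for both completeness and soundness.
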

\begin{proof}
By definition, $\suQMA^\clO \subseteq \QMA^\clO$. So we just need to show that if $L \in \suQMA^\clO$, then it is not in $\cQMA^\clO$. Consider a language $L$ in $\suQMA^\clO$. If it were in $\cQMA^\clO$, then there would be a family of states $\ket{\psi_x}_{x\in L}$ and oracle algorithms $\clQ$ and $\clA$ such that $(\clQ^\clO, \ket{\psi_x}_{x\in L})$ satisfy completeness and soundness for $L$, and for every $x \in L$, $\bra{\psi_x}^{\otimes 2}\clA^\clO(\ket{\psi_x})\ket{\psi_x}^{\otimes 2} \geq 1 - \negl(n)$ (we are omitting the argument $1^{|x|}$ from $\clA$ for brevity).

Now for any polynomial $k$, consider applying $\clA^\clO$ to $\ket{\psi_x}$ $k-1$ times (if $\ket{\psi_x}$ is a state on $p(n)$ qubits, then $\clA^\clO(\ket{\psi_x})$ produces a state on $2p(n)$ qubits, but we can assume all successive applications act on the first $p(n)$ qubits of the output of the previous stage). We claim 
\[ \Vert (\clA^\clO)^{k-1}(\state{\psi_x}) - \state{\psi_x}^{\otimes k}\Vert_1 \leq \negl(n).\]
To see this, note that by the definition of fidelity and purified distance, we have,
\[ P(\state{\psi_x}^{\otimes 2}, \clA^\clO(\state{\psi_x})) \leq \negl(n).\]
Since $\clA^\clO$ is applying an isometry, we have,
\begin{align*}
& P\left(\state{\psi_x}^{\otimes 3}, (\clA^\clO)^2(\state{\psi_x})\right) \\
\leq & P\left(\state{\psi_x}^{\otimes 3}, \clA^\clO(\state{\psi_x})\otimes\state{\psi_x}\right) + P\left(\clA^\clO(\state{\psi_x})\otimes\state{\psi_x},(\clA^\clO)^2(\state{\psi_x})\right) \\
\leq & P\left(\state{\psi_x}^{\otimes 2}, \clA^\clO(\state{\psi_x})\right) + P\left(\state{\psi_x}^{\otimes 2},\clA^\clO(\state{\psi_x})\right) \\
\leq & \negl(n) + \negl(n).
\end{align*}
Continuing this way, we can show $P\left(\state{\psi_x}^{\otimes k}, (\clA^\clO)^{k-1}(\state{\psi_x})\right) \leq k\cdot \negl(n)$, and then by the Fuchs-van de Graaf inequality, we can upper bound the $\ell_1$ distance by $\negl(n)$ for any polynomial $k$. Therefore, for any polynomial $k$, there exists a polynomial-time oracle algorithm that can make states negligibly close to $\ket{\psi_x}^{\otimes k}$. Since $(\clQ, \{\ket{\psi_x}\}_{x\in L})$ satisfy completeness and soundness for $L$, $(\clQ, \ldots, \clQ, \{\ket{\psi_x}^{\otimes k}\}_{x\in L})$ satisfy joint completeness and soundness. Moreover, since $(\clA^\clO)^{k-1}(\ket{\psi_x})$ is negligibly close to $\ket{\psi_x}^{\otimes k}$, $(\clQ, \ldots, \clQ, \{(\clA^\clO)^{k-1}\ket{\psi_x}\}_{x\in L})$ satisfy joint completeness and soundness for $L$. This means that for any polynomial $k$, there exists a cloning algorithm and $k$ efficient verifying algorithms such that the state produced by the cloning algorithm is jointly complete and sound for $L$ along with the $k$ verifying algorithms, which contradicts the definition of $\suQMA$.
\end{proof}
The above containment may be strict. This is because a language not being in $\cQMA^\clO$ only means that there is no cloning algorithm can produce two copies of its canonical proof state, given one copy. However, it may still be possible for a cloning algorithm to produce a different state that is not two copies of the original state, but is jointly usable by the verifiers to decide the language.

\subsection{Strictly uncloneable BQP/qpoly}
\begin{definition}[BQP/qpoly]
A language $L \subseteq \{0,1\}^*$ is in the complexity class $\BQPq$ if there is a polynomial-time quantum algorithm $\clQ$ and a family of polynomial-sized quantum advice states $\{\ket{\psi_n}\}_{n \in \bbN}$ such that for every $n\in \bbN$, and every $x\in \{0,1\}^n$, $\Pr[\clQ(x,\ket{\psi_n}) = L(x)] \geq \frac{1}{2} + \frac{1}{p(n)}$ for some polynomial $p(n)$, where we set $L(x)=0$ if $x \notin L$, and $L(x)=1$ otherwise. We call $\frac{1}{p(n)}$ the bias of the advice system and the algorithm. We call the difference between the correctness probability and $\frac{1}{2}$ the bias of the algorithm and advice system.
\end{definition}
Unlike QMA, $\BQPq$ does not have error reduction that does not increase the proof size, which is why we consider a constant bias instead of a polynomial one. 

\begin{definition}[Joint average-case correctness with advice]
For $k$ quantum algorithms $\clB_1,$ $\ldots, \clB_k$ which get $x^1, \ldots, x^k\in \{0,1\}^n$ respectively as their classical inputs, registers $B_1, \ldots, B_k$ respectively of a state $\ket{\psi_{n}}$ as quantum advice, we use $\Pr[\clB_1\otimes\ldots\otimes\clB_k(\ket{x^1}\ldots\ket{x^k}\ket{\psi_{n}}) = o^1\ldots o^k]$ to denote the probability of $\clB_1, \ldots, \clB_k$ respectively outputting $o^1, \ldots, o^k$. We say $(\clB_1, \ldots, \clB_k \{\ket{\psi_{n}}_{n\in \bbN}\})$ satisfy joint average-case correctness for a language $L \subseteq \{0,1\}^*$ if for every $n\in\bbN$,
\[ \Pr_{x^1, \ldots, x^k\in \{0,1\}^n}\left[\clB_1\otimes\ldots\otimes\clB_k(\ket{x^1}\ldots\ket{x^k}\ket{\psi_{n}}) = (L(x^1),\ldots, L(x^k))\right] \geq \frac{1}{2} + \frac{1}{p(n)},\]
for some polynomial $p(n)$, where $x^1, \ldots, x^k$ are sampled independently and uniformly. $\frac{1}{p(n)}$ is the joint bias of $(\clB_1, \ldots, \clB_k, \{\ket{\psi_{n}}_{n\in \bbN}\}$.
\end{definition}

It is possible to define joint average-case correctness with the input distribution being identical for all the players (or correlated otherwise, but independent and identical seem to be the most natural choices). We are following \cite{BKL23} in not doing this, but the language for which we consider joint average-case uncloneability, also satisfies joint average-case uncloneability with an identical input distribution, as we shall briefly discuss later.

It is also possible to define joint worst-case correctness for BQP/qpoly, where no distribution is used. Indeed we will use a worst-case perfect correctness definition later for FEQP/qpoly (although this class has zero error, so average-case and worst-case are the same thing). But we do not explicitly state this for BQP/qpoly because we will work with the stronger average case definition.

Before presenting our definition of average-case strictly uncloneable BQP/qpoly, we present a definition of uncloneable BQP/qpoly. This has similarities with the class defined by \cite{BKL23}, but is still weaker. This is the definition we will strengthen to be strictly uncloneable, rather than the similar one from \cite{BKL23}. We call our classes $\BQPuq$ and $\BQPsuq$ to denote that, unlike suQMA, there is some average-case correctness requirement, although it is a weaker notion.

\begin{definition}[Average-case uncloneable BQP/qpoly]
A language $L \subseteq \{0,1\}^*$ is in the class $\BQPuq$ if there exists an algorithm $\clQ$ and a set of $p(n)$-qubit advice states $\ket{\psi_n}_{n\in\bbN}$ such that $(\clQ,\{\ket{\psi_n}\})$ satisfy (worst-case) correctness for $L$, and for any tuple of polynomial-time algorithms $(\clA,\clB,\clC)$ (where $\clA$ is a quantum algorithm that takes in $1^n$ and $\ket{\psi_n}$ as input, and outputs a state on registers $BC$; $\clB$ is a quantum algorithm that takes in $x$ and the register $B$ as input and output a bit; $\clC$ is analogous to $\clB$), $(\clB, \clC, \{\clA(1^n, \ket{\psi_n})\})$ do not satisfy joint average-case correctness for $L$.
\end{definition}
Note that by our definition, we are fine with $(\clA, \clB, \clC)$ triples failing to satisfy average-case correctness for \emph{some} input length $n$. In contrast, the definition in \cite{BKL23} required that the joint bias of $(\clB, \clC, \{\clA(1^n, \ket{\psi_n})\})$ be negligible (which is the same as saying joint correctness with polynomial bias does not hold) for all $n$. \cite{BKL23} also additionally considered a class where the correctness of $(\clQ, \{\ket{\psi_n}\})$ is negligible instead of $\frac{2}{3}$. Since doing error reduction for $\BQPq$ involves giving multiple copies of the advice, which does not necessarily preserve uncloneability, the different correctness parameters could lead to different classes. (Our constructions will all be zero-error, so they also satisfy the stronger definition of \cite{BKL23}, but we do not explicitly require this in the class definition.) Note however, that both us and \cite{BKL23} are still requiring that the correctness of $\clQ$ to be worst-case, while $\clB, \clC$ are only required to have different types of average-case correctness, which potentially makes things easier for the cloner, just like in \cite{BKL23}.

\begin{definition}[Average-case strictly uncloneable BQP/qpoly]
A language $L \subseteq \{0,1\}^*$ is in $\BQPsuq$ if it is in $\BQPq$, and additionally, for any family $\{\ket{\psi_n}\}_{n\in\bbN}$ of polynomial-sized advice states, there exists a polynomial $k(n)$ such that for any $(k(n)+1)$-tuple of polynomial-time quantum algorithms $(\clA,\clB_1,\ldots, \clB_k)$ (where $\clA$ takes in $1^n$ and $\ket{\psi_n}$ as input and outputs a state on registers $B_1,\ldots, B_k$, and each $\clB_i$ takes in some $x\in\{0,1\}^n$ and the register $B_i$ as input, and outputs a bit), $(\clB_1, \ldots, \clB_k, \{\clA(1^n, \ket{\psi_n})\}_{n\in\bbN})$ do not jointly satisfy average-case correctness for $L$.
\end{definition}
\begin{remark}
Like in the definition of $\suQMA$, we do not allow the joint success probability to go down exponentially in $k$ in $\BQPsuq$. However, unlike $\suQMA$, $\BQPq$ does not have error reduction without increasing the proof size, so this choice needs to justified. Note that $\BQPq$ still does have conventional error reduction in which an algorithm is repeated several times, and a majority vote of the outcomes is taken. Multiple copies of the advice need to be provided for this, because each time the algorithm is run it destroys the quantum advice. Our view is that if a language can be solved with cloneable advice, then a cloner $\clA$ should be able to make arbitrarily many copies of the advice. Therefore, it should be able to give each of the $k$ parties suffciently many copies of the advice so that their individual errors are low enough, and by the union bound, the joint success probability of all $k$ parties is also high enough.
\end{remark}



Our definition of $\BQPsuq$ is intuitively stronger than the definition of $\BQPuq$, but an inclusion of the form $\BQPsuq \subseteq \BQPuq$ may still not hold. This is because it's possible that there is a problem such that for every family of candidate proof states, there is a cloner that can act on these to produce a joint advice state that works for two BQP algorithms, but not $k$ many algorithms, for some bigger polynomial $k$. Since the definition of $\BQPuq$ requires that there exist advice states that cannot be cloned for use by two BQP algorithms specifically, this problem would not be in $\BQPuq$. 

\begin{definition}[Oracle BQP/qpoly, avBQP/upoly, avBQP/supoly]
For any (classical or quantum) oracle $\clO$, the respective oracle versions $\BQPq^\clO$, $\BQPuq^\clO$, $\BQPsuq^\clO$ of $\BQPq$, $\BQPuq$, $\BQPsuq$ are the same as the standard versions, except that all the polynomial time quantum algorithms involved have access to the oracle $\clO$, and the advice states also depend on the oracle.
\end{definition}

\subsection{Strictly uncloneable FEQP/qpoly}
\begin{definition}[FEQP/qpoly]
A polynomially-bounded relation $R \subseteq \{0,1\}^*\times\{0,1\}^*$ is in the complexity class $\FEQPq$ if there exists a polynomial-time quantum algorithm $\clQ$, and a family of quantum advice states $\{\ket{\psi_{n}}\}_{n\in \bbN}$ such that for all $x$ for which there exists a $y$ such that $(x,y) \in R$,
\[ \Pr[(x,\clQ(x,\ket{\psi_{n}})) \in R] = 1.\]
\end{definition}
The bounded error class $\FBQPq$ defined in \cite{ABK23} is similar to the above definition, except the probability of outputting a correct answer is required only to be $1-\eps$, and in this case the running time of the quantum algorithm as well as the size of the quantum advice are allowed to be polynomial in $1/\eps$.
\begin{definition}[Joint perfect correctness with advice]
For $k$ quantum algorithms $\clB_1, \ldots, \clB_k$ which get $x^1, \ldots, x^k$ respectively as their classical inputs, and get registers $B_1, \ldots, B_k$ of a quantum state $\ket{\psi_{n}}$ as their quantum advice, with some abuse of notation, we use $(x^1, \ldots, x^k, \clB_1\otimes\ldots\otimes\clB_k(\ket{x^1}\ldots\ket{x^k}\ket{\psi_{n}})) \in R$ to denote the event that the outputs of $\clB_1, \ldots, \clB_k$ are valid outputs for $x^1, \ldots, x^k$ respectively in $R$. We say $\{\clB_1, \ldots, \clB_k, \ket{\psi_{n}}\}$ satisfy joint perfect correctness for $R$ if for every $n \in \bbN$, and every $x^1, \ldots, x^k \in \{0,1\}^n$,
\[ \Pr[(x^1, \ldots, x^k, \clB_1\otimes\ldots\otimes\clB_k(\ket{x^1}\ldots\ket{x^k}\ket{\psi_{n}})) \in R] = 1. \]
\end{definition}
We remark that if we were trying to define a strictly uncloneable version of $\FBQPq$ instead of $\FEQPq$, it is not necessarily clear what the corresponding definition of joint correctness for $k$ players should be. If $k$ players all have independent copies of an advice state corresponding to $\eps$ error, their joint correctness probability is $(1-\eps)^k$; since there is no error amplification procedure for $\FBQPq$, it is not necessarily possible for the $k$ parties to increase their joint correctness probability. On the other hand, it seems quite difficult to show that any language is in strictly uncloneable $\FBQPq$ when the joint correctness required is so small --- indeed we don't know how to do it even if the joint correctness required is constant.

\begin{definition}[Strictly uncloneable FEQP/qpoly] A relation $R\subseteq \{0,1\}^*\times\{0,1\}^*$ is in $\FEQPsuq$ if it is in $\FEQPq$, and additionally, for any family $\{\ket{\psi_{n}}\}_{n\in\bbN}$ of polynomial-sized advice states, there exists a polynomial $k(n)$ such that for any $(k(n)+1)$-tuple of polynomial-time quantum algorithms $(\clA,\clB_1,\ldots, \clB_k)$ (where $\clA$ takes in $1^n$ and $\ket{\psi_n}$ as input, and outputs a state on registers $B_1,\ldots, B_k$, and each $\clB_i$ takes in some $x^i\in\{0,1\}^n$ and the register $B_i$ as input, and outputs a string), $(\clB_1, \ldots, \clB_k, \{\clA(1^n, \ket{\psi_n})\}_{n\in\bbN})$ do not satisfy joint perfect correctness for $R$.
\end{definition}

\section{Strictly uncloneable oracle QMA}\label{sec:suQMA}

\subsection{Strictly uncloneable QMA in query complexity}
\para{Oracle QMA problem $\QOR_n$:} Given access to an oracle that is either (yes case) $U_\psi = \Id - 2\state{\psi}$ for some Haar-random $n$-qubit state $\ket{\psi}$, or (no case) $\Id$, decide which of these is the case.

\begin{lemma}[\cite{AK07}]\label{lem:oracleQMA}
The oracle problem $\QOR_n$ is in (zero-error) QMA, with the proof being $\ket{\psi}$ when the oracle is $U_\psi$.\footnote{Technically the oracle needs to be a controlled version of the oracle described here, so that the verifier can convert this phase oracle to a bit oracle. But as was observed in \cite{AK07}, changing the oracle to a controlled oracle, or outright giving a bit oracle makes very little difference in any of the reasoning, so we shall stick to this simplified description.}
\end{lemma}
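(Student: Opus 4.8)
The plan is to exhibit an explicit, single-query QMA verifier with perfect completeness and perfect soundness, using the standard device (noted in the footnote) of turning the phase oracle into something observable by a controlled call followed by a Hadamard test. Concretely, on a candidate $n$-qubit proof register $P$, the verifier $\clQ$ adjoins a fresh control qubit $C$ prepared in $\ket{+}$, applies the controlled oracle $c\clO$ with $C$ as control and $P$ as target, then applies a Hadamard to $C$ and measures it in the computational basis, accepting iff the outcome is $1$. This is a constant-gate, one-query algorithm, hence polynomial time.

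For completeness, take the honest proof $\ket{\psi}$ in the yes case. Since $U_\psi\ket{\psi} = (\Id - 2\state{\psi})\ket{\psi} = -\ket{\psi}$, the controlled oracle maps $\ket{+}_C\ket{\psi}_P$ to $\frac{1}{\sqrt{2}}(\ket{0}_C - \ket{1}_C)\ket{\psi}_P = \ket{-}_C\ket{\psi}_P$; the Hadamard sends $\ket{-}$ to $\ket{1}$, so $\clQ$ accepts with probability exactly $1$. For soundness, in the no case the oracle is $\Id$, so $c\clO$ acts as the identity on $CP$ no matter what the candidate proof $\ket{\phi}$ is: the control stays in $\ket{+}$, the Hadamard returns it to $\ket{0}$, and $\clQ$ rejects with probability exactly $1$. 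Hence $(\clQ, \{\ket{\psi}\})$ satisfies completeness and soundness for $\QOR_n$ with perfect completeness and perfect soundness (soundness gap $1$), i.e.\ $\QOR_n$ is in zero-error QMA.

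There is no real obstacle here; the only subtlety is that the $-1$ phase picked up by $U_\psi$ on $\ket{\psi}$ is a global phase and not directly measurable, which is exactly what the controlled call plus Hadamard test fixes, and is the reason the oracle is assumed to be available in controlled form. It is worth emphasising what the argument uses: nothing about the state $\ket{\psi}$ beyond the defining identity $U_\psi\ket{\psi} = -\ket{\psi}$, so in particular the Haar-randomness of $\ket{\psi}$ plays no role in membership in QMA --- it will only enter later, in the (un)cloneability analysis. One may also note that the accept branch is reached precisely by certifying that $\clO$ reflects the proof register about it, which foreshadows the rigidity-of-proofs statement used in the sequel.
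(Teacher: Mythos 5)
Your proof is correct and matches the standard argument in \cite{AK07}; the paper itself simply cites that source and gives no proof of its own. The Hadamard test on the controlled oracle, exploiting $U_\psi\ket{\psi}=-\ket{\psi}$ for completeness and the triviality of the controlled identity for soundness, is exactly the intended (and essentially the only natural) verifier, and your observation that perfect completeness and perfect soundness give zero error is right.
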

We will use the following result in order to prove our first lemma about the problem $\QOR_n$.

\begin{theorem}[Fixed point quantum search, \cite{TGP06, CRR05}]\label{thm:search}
Suppose we are given oracle access to $U_\psi = \Id - 2\Pi$ for some projector $\Pi$, an initial state $\ket{\Phi^i}$ that satisfies $\bra{\Phi^i}\Pi\ket{\Phi^i} \geq \eps$, and also access to the unitary $U_{\Phi^i} = \Id - \state{\Phi^i}$. Then there is a quantum algorithm that prepares a state $\ket{\Phi^f}$ satisfying $\bra{\Phi^f}\Pi\ket{\Phi^f} \geq 1-\delta$ using $O\left(\frac{1}{\eps}\log(1/\delta)\right)$ total oracle calls to $U_\psi$ and $U_{\Phi^i}$.
\end{theorem}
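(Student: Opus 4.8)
The plan is to recognise the statement as the $\pi/3$ fixed-point search algorithm of Grover and of Tulsi--Grover--Patel, and to run through its analysis in the present notation. The first move is to pass to a two-dimensional invariant subspace. Put $\ket g := \Pi\ket{\Phi^i}/\|\Pi\ket{\Phi^i}\|$ and $\ket b := (\Id-\Pi)\ket{\Phi^i}/\|(\Id-\Pi)\ket{\Phi^i}\|$, so that $\ket{\Phi^i} = \sqrt{\eps'}\,\ket g + \sqrt{1-\eps'}\,\ket b$ with $\eps' = \bra{\Phi^i}\Pi\ket{\Phi^i} \geq \eps$. On $\clV := \mathrm{span}\{\ket g,\ket b\}$ the oracle $U_\psi = \Id - 2\Pi$ restricts to $\Id - 2\state{g}$ (since $\ket g$ lies in the range of $\Pi$ and $\ket b$ in its kernel), and every selective phase shift about $\ket{\Phi^i}$ preserves $\clV$ as well; since $\ket{\Phi^i}\in\clV$, the whole procedure stays inside $\clV$. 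Thus, regardless of the rank of $\Pi$, we are reduced to amplitude amplification of a single qubit, starting from $\ket{\Phi^i}$, whose overlap-squared with the ``good'' direction $\ket g$ is $\eps'\geq\eps$, and the task is to reach overlap-squared $\geq 1-\delta$ without overshooting.

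Next I would set up the recursion. Using a controlled call to $U_\psi$ (the controlled oracle is available anyway, cf.\ Lemma~\ref{lem:oracleQMA}) one synthesises at $O(1)$ cost the selective phase shift $P_\Pi := \Id - (1-e^{i\pi/3})\Pi$, and similarly from $U_{\Phi^i}$ the shift $P_{\Phi^i} := \Id - (1-e^{i\pi/3})\state{\Phi^i}$; these are the only oracle calls used. Define $A_0 := \Id$ and, recursively, $A_{m+1} := A_m\, P_{\Phi^i}\, A_m^\dagger\, P_\Pi\, A_m$, which is exactly the $\pi/3$ fixed-point iterate: note that $A_m P_{\Phi^i} A_m^\dagger$ is the $\pi/3$ phase shift about the current state $A_m\ket{\Phi^i}$. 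The number of base phase-shift calls obeys $c_{m+1} = 3c_m + 2$, $c_0 = 0$, hence $c_m = 3^m - 1 = \Theta(3^m)$. The content of the fixed-point analysis is that replacing the usual $\pi$ phase shifts by $\pi/3$ ones makes the residual ``bad'' probability cube at every level: writing $s_m := \bra{\Phi^i}A_m^\dagger\,\Pi\,A_m\ket{\Phi^i}$, one has $1 - s_{m+1} = (1-s_m)^3$ with $1-s_0 = 1-\eps' \leq 1-\eps$, so $1 - s_m \leq (1-\eps)^{3^m}$, which decreases monotonically (no overshoot, unlike ordinary Grover).

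To finish, choose $m := \max\{0, \lceil\log_3(\eps^{-1}\log(1/\delta))\rceil\}$; using $1-\eps\leq e^{-\eps}$ this gives $1 - s_m \leq (1-\eps)^{3^m}\leq e^{-\eps\cdot 3^m}\leq \delta$, while the total number of calls to $U_\psi$ and $U_{\Phi^i}$ is $\Theta(3^m) = O(\eps^{-1}\log(1/\delta))$. Setting $\ket{\Phi^f} := A_m\ket{\Phi^i}$, we then have $\bra{\Phi^f}\Pi\ket{\Phi^f} = s_m \geq 1-\delta$, which is the claim. The only step that is not routine bookkeeping is the cubic identity $1 - s_{m+1} = (1-s_m)^3$ for the $\pi/3$ iterate; this is a two-by-two computation on the qubit $\clV$ --- essentially the statement that the product of the two $\pi/3$ rotations acts as an ``amplitude tripler'' --- and it is precisely the content of \cite{TGP06, CRR05}, so I would simply invoke it (or reproduce the short matrix calculation) rather than re-derive it. Everything else --- the two-dimensional reduction, synthesising the $\pi/3$ phase shifts from the given reflections, and the query arithmetic --- is mechanical.
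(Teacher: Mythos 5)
The paper does not prove this theorem itself — it is stated as a citation to \cite{TGP06, CRR05} — so there is no internal proof to compare against. Your reconstruction via Grover's $\pi/3$ fixed-point iterate is the correct and standard argument behind those references, and all the bookkeeping checks out: the reduction to the two-dimensional invariant subspace $\mathrm{span}\{\ket{g},\ket{b}\}$; synthesizing the $e^{i\pi/3}$ selective phase shifts $P_\Pi, P_{\Phi^i}$ from the given reflections at $O(1)$ overhead via one-bit phase estimation on the controlled oracle; the recursion $A_{m+1} = A_m\, P_{\Phi^i}\, A_m^\dagger\, P_\Pi\, A_m$ with base-call count $c_m = 3^m-1$; the cubing identity $1-s_{m+1}=(1-s_m)^3$ (which you reasonably invoke from the cited works rather than re-derive — and indeed the $2\times 2$ calculation confirms it, using $e^{2i\pi/3}-e^{i\pi/3}+1=0$); and the choice $m=\lceil\log_3(\eps^{-1}\log(1/\delta))\rceil$, giving the claimed $O(\eps^{-1}\log(1/\delta))$ bound. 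Two minor remarks: the paper's $U_{\Phi^i}=\Id-\state{\Phi^i}$ is clearly a typo for the unitary reflection $\Id-2\state{\Phi^i}$, which is what your argument implicitly uses; and the bound here is the $\pi/3$ scaling, quadratically worse in $\eps$ than the optimal Yoder--Low--Chuang fixed-point search, but that is exactly the scaling the theorem statement asserts and the cited references establish, so nothing is lost.
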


We first prove that if there is an efficient algorithm can use an arbitrary witness state to distinguish $U_\psi$ from $\Id$ given a witness state, then there is also an efficient algorithm that prepares the final state $\ket{\psi}$ starting with the same witness state and querying $U_\psi$.
\begin{lemma}\label{lem:1wit}
Suppose for problem $\QOR_n$, the state $\ket{f_\psi}$ is given as a QMA proof when the oracle is $U_\psi$, and a $p(n)$-query quantum algorithm $\clQ$ can solve the problem with this set of proofs, with $1/s(n)$ soundness gap. Then there is a quantum algorithm $\clQ'
$ that receives $\ket{f_\psi}$, makes $O\left(p(n)^3s(n)^2\log(1/\delta)\right)$ queries to $U_\psi$, and prepares a state $g_\psi$ such that $\F(\state{\psi}, g_\psi) \geq 1 -\delta$.
\end{lemma}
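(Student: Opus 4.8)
The plan is to combine a hybrid-method argument in the spirit of \cite{BBV97} with fixed-point quantum search (Theorem~\ref{thm:search}). Write $\clQ$'s computation as input-independent unitaries $V_0,V_1,\dots,V_p$ interleaved with $p=p(n)$ oracle calls. On the proof $\ket{f_\psi}$, $\clQ$ with the true oracle $U_\psi$ accepts with probability at least $\frac12+\frac1s$ (the yes-case), while $\clQ$ with the $\Id$ oracle accepts with probability at most $\frac12-\frac1s$, since soundness holds for \emph{every} candidate proof, in particular $\ket{f_\psi}$. Hence the two runs produce final states that differ by at least $\frac1s$ in trace distance. Let $\ket{\Phi_t}$ be the state just before the $t$-th oracle call in the $\Id$-oracle run on $\ket{f_\psi}$ (only the $V_i$'s have acted). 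The standard hybrid bound gives $\frac1s \le 2\sum_{t=1}^p \bigl\|(\state{\psi}\otimes\Id)\ket{\Phi_t}\bigr\|$, so some step $t^*$ satisfies $\bra{\Phi_{t^*}}(\state{\psi}\otimes\Id)\ket{\Phi_{t^*}}\ge\frac1{4p^2s^2}=:\eps$. Since $\clQ'$ does not know $t^*$, it is cleanest to work with $\ket{\Xi}:=\frac1{\sqrt p}\sum_{t=1}^p\ket{t}\ket{\Phi_t}$, which by Cauchy--Schwarz still has overlap at least $\eps$ with the ``good'' subspace $\mathcal{G}$ consisting of states whose query register holds $\ket{\psi}$; moreover $\ket{\Xi}$ is preparable from $\ket{f_\psi}$ with \emph{no} oracle calls, since in the $\Id$-run the oracle does nothing.

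Now we amplify. The reflection $\Id-2\Pi_{\mathcal{G}}$ about $\mathcal{G}$ is implemented by a single application of $U_\psi=\Id-2\state{\psi}$ to the query register. Feeding $\ket{\Xi}$ to fixed-point search (Theorem~\ref{thm:search}) with $\Pi=\Pi_{\mathcal{G}}$, initial overlap $\eps$, and error $\delta$ yields a state $\ket{\Phi^f}$ with $\bra{\Phi^f}\Pi_{\mathcal{G}}\ket{\Phi^f}\ge1-\delta$, using $O\!\bigl(\frac1\eps\log\frac1\delta\bigr)=O\!\bigl(p^2s^2\log\frac1\delta\bigr)$ reflection steps. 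Tracing out everything but the query register gives $g_\psi$ with $\F(\state{\psi},g_\psi)=\sqrt{\bra{\psi}g_\psi\ket{\psi}}\ge\sqrt{1-\delta}\ge1-\delta$, as claimed.

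The delicate ingredient, which I expect to be the main obstacle, is the \emph{other} reflection required by fixed-point search: the reflection $\Id-2\state{\Xi}$ about the initial state. Since $\ket{\Xi}$ is a fixed circuit applied to the (unknown, single-copy) proof $\ket{f_\psi}$ together with $\ket{0}$ ancillas, reflecting about it reduces to reflecting about $\ket{f_\psi}$, which a priori is impossible from one copy. I would handle this by first pre-amplifying $\clQ$ via Marriott--Watrous \cite{MW05}, so that $\clQ$ accepts $\ket{f_\psi}$ with probability $1-\negl(n)$ while still accepting under the $\Id$ oracle only with probability $\negl(n)$; one can then implement (approximately) the reflection about the \emph{subspace} of all states accepted with high probability under $U_\psi$, rather than about the single state $\ket{f_\psi}$. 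Using the superposition over query steps ensures that every such good proof contributes overlap $\ge\eps$ with $\mathcal{G}$, so this subspace reflection has no ``bad Jordan blocks'' and fixed-point search still converges to $\mathcal{G}$. Implementing the subspace reflection requires re-running $\clQ$ (which makes $p$ oracle calls) a polynomial number of times, which is where the extra factor of $p$ in the query count comes from; carefully tracking the precision needed in each reflection over the $O(p^2s^2\log(1/\delta))$ iterations yields the stated bound $O\!\bigl(p(n)^3 s(n)^2\log(1/\delta)\bigr)$.
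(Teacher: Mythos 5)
The proposal takes essentially the same route as the paper: a BBBV-style hybrid argument shows that the query register of $\clQ$ has average overlap $\Omega(1/p^2s^2)$ with $\ket{\psi}$ over the course of the computation, and this overlap is then pumped up to $1-\delta$ by fixed-point search (Theorem~\ref{thm:search}). The cosmetic differences --- you keep a coherent superposition over the time step $t$ and use the $\Id$-run intermediate states (a nice observation, since $\ket{\Xi}$ is then preparable from $\ket{f_\psi}$ without any oracle calls), whereas the paper samples $t$ uniformly and uses the $U_\psi$-run states $\Phi^t_\psi$ --- do not change the substance of the argument or the final query count.

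Where you part ways from the paper is in your explicit treatment of the initial-state reflection. You are right that this is the delicate point: fixed-point search requires $\Id - 2\state{\Xi}$, and since $\ket{\Xi} = W\ket{f_\psi}\ket{0}$ for a known oracle-free $W$, implementing that reflection would require reflecting about the unknown one-copy input $\ket{f_\psi}$. The paper disposes of this in one sentence ("a reflection about $\ket{\Phi'^f_\psi}$ we can do by running the algorithm $\clQ''$ $O(1)$ times"), which in effect substitutes the subspace reflection $W(\Id - 2\,\Id\otimes\state{0})W^\dagger$ for the state reflection, without addressing whether fixed-point search tolerates that substitution when the initial state is spread across Jordan blocks of the two projectors. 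Your proposed repair --- Marriott--Watrous pre-amplification plus the claim that the superposition over query steps "has no bad Jordan blocks" --- is a sensible instinct, but as written it is asserted rather than proven, and is the one piece of your plan that would need to be turned into an actual argument (in particular, you would need to rule out the scenario where a large fraction of the amplitude of $\ket{\Xi}$ sits in Jordan blocks whose $\Pi_{\mathcal{G}}$-overlap is essentially zero). Until that step is filled in, the proposal is not complete --- though, to be fair, the paper's own proof is terse on exactly the same point.
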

\begin{proof}
The algorithm $\clQ$ puts the state it receives as proof into a work register, initializes all its other registers including the query register as the $\ket{0}$ state, and then starts querying the oracle.
Let $\ket{\Phi^f_\psi}$ denote the actual overall final state of the query algorithm when it gets $\ket{f_\psi}$ for $U_\psi$ and makes its queries, and let  $\ket{\Phi^f_\Id}$ be the final state if it ran the query algorithm starting with $\ket{f_\psi}$ on the identity oracle. Since the algorithm has to distinguish $\Id$ and $U_\psi$ given $\ket{f_\psi}$ with $1/s(n)$ probability, we must have
\begin{equation}\label{eq:final-dist}
\Vert \ket{\Phi^f_\psi} - \ket{\Phi^f_\Id}\Vert_2 = \Omega(1/s(n)).
\end{equation}
On the other hand, using the hybrid method similar to \cite{AK07}, we can say
\begin{equation}\label{eq:hybrid}
\Vert \ket{\Phi^f_\psi} - \ket{\Phi^f_\Id}\Vert_2^2  \leq 4p(n)\sum_{t=1}^{p(n)}\bra{\psi}\Phi^t_\psi\ket{\psi},
\end{equation}
where $\ket{\Phi^t_\psi}i$ is the overall state of $\clQ$ just before it makes the $t$-th query to $U_\psi$, and $\Phi^t_\psi$ is its reduced state in the query register.

To see this, we'll define some hybrid states $\ket{\Sigma^t}$ which are obtained by running $\clQ$ on $\ket{f_\psi}$ with the oracle being $U_\psi$ for the first $t$ steps, and $\Id$ for the next $p(n)-t$ steps ($\ket{\Sigma^t}$ has $\psi$ dependence, but we are suppressing that for brevity). Note that $\ket{\Sigma^0} = \ket{\Phi^f_\Id}$ and $\ket{\Sigma^{p(n)}} = \ket{\Phi^f_\psi}$. We consider the distance between the states $\ket{\Sigma^t}$ and $\ket{\Sigma^{t-1}}$ that only differ on the $t$-th query, which is applied on the state $\ket{\Phi^t_\psi}$ on both. Let $\{\ket{b_i}\}_{i=1}^{2^n}$ be an orthonormal basis for the query register of $\clQ$, with $\ket{b_1} = \ket{\psi}$. We have,
\begin{align*}
\Vert\ket{\Sigma^t} - \ket{\Sigma^{t-1}}\Vert_2^2 & = \Vert (U_\psi - \Id)\ket{\Phi^t_\psi}\Vert_2^2 \\
 & = \Vert \sum_{i=1}^{2^n}\state{b_i}(U_\psi - \Id)\ket{\Phi^t_\psi}\Vert_2^2 \\
 & = \sum_{i=1}^{2^n}\Vert \state{b_i}(U_\psi - \Id)\ket{\Phi^t_\psi}\Vert_2^2 \\
 & = \sum_{i=1}^{2^n}|\bra{b_i}(U_\psi-\Id)\Phi^t_\psi(U_\psi-\Id)\ket{b_i}| \\
 & = |(-\bra{\psi}-\bra{\psi})\Phi^t_\psi(-\ket{\psi}-\ket{\psi})| + \sum_{i=2}^{2^n}|(\bra{b_i} - \bra{b_i})\Phi^t_\psi(\ket{b_i} - \ket{b_i})| \\
 & = 4\bra{\psi}\Phi^t_\psi\ket{\psi}.
\end{align*}
By triangle inequality, we have,
\[ \Vert \ket{\Phi^f_\psi} - \ket{\Phi^f_\Id}\Vert_2 \leq \sum_{t=1}^{p(n)}\Vert\ket{\Sigma^t} - \ket{\Sigma^{t-1}}\Vert_2 \leq 2\sum_{t=1}^{p(n)}\sqrt{\bra{\psi}\Phi^t_\psi\ket{\psi}}.\]
Applying Jensen's inequality on this gives us \eqref{eq:hybrid}.

From \eqref{eq:final-dist} and \eqref{eq:hybrid} we then have for any $\ket{\psi}$,
\[ \frac{1}{p(n)}\sum_{t=1}^{p(n)}\bra{\psi}\Phi^t_\psi\ket{\psi} \geq \frac{\Vert \ket{\Phi^f_\psi} - \ket{\Phi^f_\Id}\Vert_2^2}{4p(n)^2} = \Omega\left(\frac{1}{p(n)^2s(n)^2}\right).\]
Let $\clQ''$ be the algorithm that receives the same proof as $\clQ$, then samples $t\in [p(n)]$, runs $\clQ$ for $t$ steps and outputs its query register. If $\Phi'^f_\psi$ is the output state of $\clQ''$, then by definition it satisfies for each $\ket{\psi}$,
\[ \bra{\psi}\Phi'^f_\psi\ket{\psi} = \Omega\left(\frac{1}{p(n)^2s(n)^2}\right).\]
If $\ket{\Phi'^f_\psi}$ is the overall pure final state of $\clQ''$, this means that $\bra{\Phi'^f_\psi}(\state{\psi}\otimes\Id)\ket{\Phi'^f_\psi} = \Omega(1/p(n)^2s(n)^2)$.

Now we can apply Theorem \ref{thm:search} to increase the fidelity of the final state with $\ket{\psi}$, using $\state{\psi}\otimes\Id$ as the projector. Specifically, let $\clQ'$ be the algorithm that performs Grover search with $\ket{\Phi'^f_\psi}$ (the purified form of $\Phi'^f_\psi$) as the initial state of the Grover search; $\clQ'$ can achieve $\F(\state{\psi},g_\psi) \geq 1 - \delta$ using $O\left(p(n)^2s(n)^2\log(1/\delta)\right)$ many calls to both types of reflection oracles. We can do a reflection about $\ket{\psi}\otimes\Id$ with just the given oracle $U_\psi$ (since it acts as identity on the non-query registers), and a reflection about $\ket{\Phi'^f_\psi}$ we can do by running the algorithm $\clQ''$ $O(1)$ times. Since the cost of running $\clQ''$ is $p(n)$, the overall number of queries by $\clQ'$ is $O\left(p(n)^3s(n)^2\log(1/\delta)\right)$.
\end{proof}

Now we generalize Lemma \ref{lem:1wit} to $k$ verifying algorithms.
\begin{lemma}\label{lem:kwit}
Suppose for problem $\QOR_n$, when the oracle is $U_\psi$, the state $\ket{f_\psi}$ on registers $B_1\ldots B_k$ is given as the joint QMA proof for $k$ quantum algorithms $\clB_1,\ldots, \clB_k$, where each $\clB_i$ makes at most $p(n)$ queries to the oracle. If all the algorithms can jointly solve $\QOR_n$ with $\ket{f_\psi}$\footnote{Here jointly solving $\QOR_n$ is defined analogously to Definition \ref{def:joint-succ}.} with joint soundness gap $1/s(n)$, then there are quantum algorithms $\clB_1', \ldots, \clB'_k$ which receive the state $\ket{f_\psi}$, act on registers $B_1, \ldots, B_k$ respectively, make $O\left(p(n)^3s(n)^2\log(k/\delta)\right)$ queries to $U_\psi$, and jointly prepare the state $g_\psi$ such that $\F(\state{\psi}^{\otimes k},g_\psi) \geq 1-\delta$ (where it is understood that the $i$-th copy of $\ket{\psi}$ is on the same register as the output register of $\clB'_i$).
\end{lemma}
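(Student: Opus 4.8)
The plan is to reduce Lemma~\ref{lem:kwit} to the single-verifier statement, Lemma~\ref{lem:1wit}, applied separately to each $\clB_i$, and then stitch the $k$ outputs together. First I would observe that the joint completeness and soundness hypothesis (in the sense of Definition~\ref{def:joint-succ}) already forces each individual $\clB_i$ to distinguish $U_\psi$ from $\Id$: in the yes case all $k$ verifiers output $1$ with probability $\ge \tfrac12+\tfrac1{s(n)}$ on the proof $\ket{f_\psi}$, and running the same verifiers on $\ket{f_\psi}$ with the $\Id$ oracle makes all of them output $0$ with probability $\ge \tfrac12+\tfrac1{s(n)}$; since the event ``$\clB_i$ outputs $b$'' contains the event ``all verifiers output $b$'', marginalizing shows that $\clB_i$, acting only on register $B_i$ carrying the (generally mixed) reduced state $\rho_i = \Tr_{B_{\neq i}}\state{f_\psi}$ and making at most $p(n)$ queries, solves $\QOR_n$ with soundness gap $\Omega(1/s(n))$. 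The one point needing care here is that Lemma~\ref{lem:1wit} was stated for a \emph{pure} proof state; but its proof is unaffected by mixedness if we simply view $\ket{f_\psi}$ as a purification of $\rho_i$ whose purifying register $B_{\neq i}$ is never touched by $\clB_i$ (nor by the Grover/fixed-point routine built from it), so the $\ell_2$-norm and hybrid-method manipulations there still apply verbatim to the overall pure state.

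Next I would apply Lemma~\ref{lem:1wit} to each $\clB_i$ with target fidelity parameter $\delta' := \delta/(2k)$. This produces algorithms $\clB_1',\ldots,\clB_k'$, where $\clB_i'$ acts on register $B_i$, makes $O\!\big(p(n)^3 s(n)^2\log(1/\delta')\big) = O\!\big(p(n)^3 s(n)^2\log(k/\delta)\big)$ queries to $U_\psi$, and outputs a state $g_\psi^{(i)}$ on its output register with $\F(\state{\psi},g_\psi^{(i)})\ge 1-\delta'$. Because the $\clB_i'$ act on disjoint registers and each merely queries the fixed unitary $U_\psi$, running them in parallel produces no cross-talk: the joint output state $g_\psi$ on the $k$ output registers has $g_\psi^{(i)}$ as its $i$-th marginal. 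To conclude, I would use the standard union bound for commuting projectors: writing $P_i$ for $\state{\psi}$ on the $i$-th output register tensored with identity elsewhere, the $P_i$ commute and $\prod_i P_i = \state{\psi}^{\otimes k}\otimes\Id$, so $\Id - \state{\psi}^{\otimes k}\otimes\Id \le \sum_{i=1}^k(\Id - P_i)$. Taking the expectation in $g_\psi$ and using $\bra{\psi}g_\psi^{(i)}\ket{\psi}\ge (1-\delta')^2\ge 1-2\delta'$ gives $\bra{\psi}^{\otimes k} g_\psi \ket{\psi}^{\otimes k}\ge 1-2k\delta'$, hence $\F(\state{\psi}^{\otimes k},g_\psi)\ge\sqrt{1-2k\delta'}\ge 1-2k\delta' = 1-\delta$ by the choice of $\delta'$, which is exactly what is claimed.

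I expect the main obstacle to be bookkeeping rather than any genuinely new idea: one must be careful that Lemma~\ref{lem:1wit} really does apply to a verifier that (i) receives only the marginal of an entangled joint proof and (ii) is executed in parallel with $k-1$ others querying the same oracle. The purification viewpoint handles (i) cleanly, and for (ii) the key observation is that the oracle is a fixed unitary, so parallel queries from different registers do not interfere and the $i$-th marginal of the joint run coincides with the output of $\clB_i'$ run in isolation. The only quantitative loss — the passage from per-register fidelity to joint fidelity via the commuting-projector bound — costs just a factor of $k$, which is absorbed by shrinking the per-verifier target error from $\delta$ to $\delta/(2k)$; this changes the query count only through the stated $\log(k/\delta)$ factor. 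One should also double-check the constant in the soundness gap passed to Lemma~\ref{lem:1wit} (the marginal distinguishing advantage is $\ge 2/s(n)$, i.e.\ still $\Theta(1/s(n))$), but this does not affect the asymptotic query bound.
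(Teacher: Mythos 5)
Your proof is correct and takes essentially the same route as the paper: reduce to Lemma~\ref{lem:1wit} verifier-by-verifier by marginalizing the joint success probability, apply it on each register with a per-verifier error parameter that shrinks polynomially in $k$, and combine via a commuting-projector argument on the disjoint output registers. The only cosmetic difference is in the combining step: the paper invokes its Gentle Measurement Lemma (which, as stated there, carries a $\sqrt{\delta}$ loss, so it sets the per-verifier error to $\delta^4/2k^4$), whereas you use the tighter commuting-projector union bound $\Id - \prod_i P_i \le \sum_i(\Id - P_i)$ and get away with $\delta/(2k)$; both yield the same $O(p(n)^3 s(n)^2\log(k/\delta))$ query count since the difference is absorbed into the logarithm.
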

\begin{proof}
If $\clB_1, \ldots, \clB_k$ can jointly solve $\QOR_n$, then the marginal initial and final states for each $\clB_i$ must be $\Omega(1)$ far when given oracles $\Id$ and $U_\psi$. We can apply the same argument as Lemma \ref{lem:1wit} to say that each algorithm must have on average (over the queries) $\Omega(1/p(n)^2s(n)^2)$ query-weight on $\ket{\psi}$ on oracle $U_\psi$. We can therefore define the algorithms $\clB_1'', \ldots, \clB_k''$ which (independently) sample a uniform $t$ and respectively run $\clB_1,\ldots, \clB_k$ for $t$ steps and measure the query register, and subsequently the amplitude-amplified versions $\clB'_1, \ldots, \clB'_k$ of $\clB_1'', \ldots, \clB_k''$. We use the error parameter $\delta^{4}/2k^4$ for each $\clB_i$, which means the number of queries is $O\left(p(n)^3s(n)^2\log(k/\delta)\right)$ for each $\clB'_i$. The fidelity of the marginal state corresponding to the output register of $\clB'_i$ with $\state{\psi}$ is $1-\delta^{4}/2k^4$, and so by the Fuchs-van de Graaf inquality, its distance from the state $\state{\psi}$ is at most $\delta^{2}/k^2$. So if we consider the measurement operator $M_i$ that measures the output register of $\clB'_i$ and finds the state $\ket{\psi}$, it succeeds with probability $1-\delta^{2}/k^2$. Since $M_i$ does not touch the registers corresponding to the other $\clB'_j$-s, it also succeeds on the full final state $g_\psi$ with probability at least $1-\delta^{2}/k^2$. We can assume the measurements $M_i$ are applied in sequence on $\ket{g_\psi}$ since they commute anyway. By the Gentle Measurement Lemma, the probability that all the measurements succeed on $g_\psi$ is at least $1-k\cdot \delta/k = 1-\delta$. Finally, since the measurement operator $M_1\otimes\ldots\otimes M_k$ simply projects onto the state $\state{\psi}^{\otimes k}$, we must have $\F(\state{\psi}^{\otimes k},g_\psi) \geq (1-\delta)^{1/2} \geq 1-\delta$. 

\end{proof}

We next prove the following lemma about polynomial-query algorithms not being able to distinguish the oracle $U_\psi$ from an oracle $U_S = \Id - 2S$, where $S$ is the projector on to some subspace that contains $\ket{\psi}$. This will help us essentially remove the oracle from the algorithm given by Lemma \ref{lem:kwit}, so that we can use Corollary \ref{cor:opt-clone} (which we prove after this).

\begin{lemma}\label{lem:subspace}
Let $\clQ$ be a quantum query algorithm making $p(n)$ queries. Let $\ket{\Phi^f_\psi}$ be the final state (before measurement) after running the the algorithm $\clQ$ on oracle $U_\psi = \Id - 2\state{\psi}$ from some fixed initial state (this can include e.g. a proof state). Then there exists a subspace $S$ of dimension $2^{n/2}$ containing $\ket{\psi}$, such that if $\ket{\Phi^f_S}$ is the final state after running $\clQ$ on $U_S= \Id - 2S$ starting from the same initial state, then,
\[ \left\Vert \ket{\Phi^f_\psi} - \ket{\Phi^f_S}\right\Vert_2 \leq 2p(n)\cdot 2^{-n/6}.\]
\end{lemma}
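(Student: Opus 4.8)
The plan is to run the standard hybrid method over the $p(n)$ queries of $\clQ$, and to then pick $S$ by an averaging argument so that the hybrid telescoping sum is small. Throughout, let $\ket{\Phi^t_\psi}$ denote, as in Lemma~\ref{lem:1wit}, the state of $\clQ$ just before its $t$-th query when it is run on the oracle $U_\psi$ from the given initial state; these states are fixed once $\clQ$, $U_\psi$ and the initial state are fixed, and in particular do \emph{not} depend on any choice of $S$. Let $\sigma^t$ be the reduced state of $\ket{\Phi^t_\psi}$ on the $n$-qubit oracle register, and let $\bar\sigma = \frac{1}{p(n)}\sum_{t=1}^{p(n)}\sigma^t$.

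First I would set up the hybrid for an arbitrary candidate subspace $S$ with $\ket{\psi}\in S$ and $\dim S = 2^{n/2}$. Write $\Pi_{S'} = S - \state{\psi}$ for the projector onto the orthogonal complement $S'$ of $\ket\psi$ inside $S$, so that $U_\psi - U_S = (\Id - 2\state{\psi}) - (\Id - 2S) = 2\Pi_{S'}$. Define $\ket{\Sigma^t}$ to be the output of running $\clQ$ with oracle $U_\psi$ on queries $1,\dots,t$ and oracle $U_S$ on queries $t{+}1,\dots,p(n)$, so $\ket{\Sigma^{p(n)}} = \ket{\Phi^f_\psi}$ and $\ket{\Sigma^0} = \ket{\Phi^f_S}$. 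Since $\ket{\Sigma^t}$ and $\ket{\Sigma^{t-1}}$ agree on queries $1,\dots,t-1$ (both using $U_\psi$), the state just before query $t$ is $\ket{\Phi^t_\psi}$ for both; they then differ only in applying $U_\psi$ versus $U_S$ at step $t$ and afterwards apply the same unitaries, so (as unitaries are norm preserving) $\Vert\ket{\Sigma^t} - \ket{\Sigma^{t-1}}\Vert_2 = \Vert ((U_\psi - U_S)\otimes\Id)\ket{\Phi^t_\psi}\Vert_2 = 2\Vert(\Pi_{S'}\otimes\Id)\ket{\Phi^t_\psi}\Vert_2$. Telescoping and the triangle inequality give $\Vert\ket{\Phi^f_\psi} - \ket{\Phi^f_S}\Vert_2 \leq 2\sum_t \Vert(\Pi_{S'}\otimes\Id)\ket{\Phi^t_\psi}\Vert_2 = 2\sum_t\sqrt{\Tr(\Pi_{S'}\sigma^t)}$, and then Cauchy--Schwarz gives $\Vert\ket{\Phi^f_\psi} - \ket{\Phi^f_S}\Vert_2 \leq 2p(n)\sqrt{\Tr(\Pi_{S'}\bar\sigma)}$.

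It then remains to choose $S$ so that $\Tr(\Pi_{S'}\bar\sigma)$ is tiny. Here I would use pigeonhole on the operator $Q\bar\sigma Q$ with $Q = \Id - \state{\psi}$: this operator is supported on the $(2^n-1)$-dimensional orthogonal complement of $\ket\psi$ and has trace at most $1$. Diagonalizing it and partitioning the resulting eigenbasis of that complement into $2^{n/2}+1$ groups of $2^{n/2}-1$ vectors each (which is possible since $(2^{n/2}-1)(2^{n/2}+1) = 2^n-1$), at least one group carries total eigenweight at most $1/(2^{n/2}+1) \leq 2^{-n/2}$. Taking $S'$ to be the span of that group and $S = \mathrm{span}\{\ket\psi\}\oplus S'$, we get $\dim S = 2^{n/2}$, $\ket\psi\in S$, and $\Tr(\Pi_{S'}\bar\sigma) \leq 2^{-n/2}$, whence $\Vert\ket{\Phi^f_\psi} - \ket{\Phi^f_S}\Vert_2 \leq 2p(n)\cdot 2^{-n/4} \leq 2p(n)\cdot 2^{-n/6}$.

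I do not anticipate a real obstacle: this is the hybrid method of \cite{BBV97, AK07} combined with a one-line pigeonhole choice of subspace. The only thing to be careful about is the order of quantifiers --- the states $\ket{\Phi^t_\psi}$ (hence $\bar\sigma$) in the per-step bound come from the original $U_\psi$-run and are fixed before $S$ is chosen, which is precisely what makes it legitimate to first form $\bar\sigma$ and only afterwards select a subspace $S'$ that is light for $\bar\sigma$. The remaining checks are routine: that $\Pi_{S'}$ acts only on the oracle register so $\Vert(\Pi_{S'}\otimes\Id)\ket{\Phi^t_\psi}\Vert_2^2 = \Tr(\Pi_{S'}\sigma^t)$, and that the Cauchy--Schwarz step loses only a factor $p(n)$.
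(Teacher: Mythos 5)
Your proof is correct, and it takes a genuinely different route on the one non-mechanical step, namely the choice of $S$. The hybrid/telescoping portion is essentially the same as the paper's (the paper squares the telescoping sum and applies Jensen; you apply Cauchy--Schwarz to the linear sum — these are algebraically equivalent and both lose exactly a factor $p(n)$). Where you diverge is in the selection of the subspace. The paper fixes an arbitrary orthonormal basis $\{\ket{b_i}\}$ containing $\ket{\psi}$, draws a uniformly random $(2^{n/2}-1)$-subset of the other basis vectors to form $S$, and bounds the \emph{expected} squared hybrid distance by noting each $\ket{b_i}$ lands in $S$ with probability roughly $2^{-n/2}$. You instead diagonalize the averaged query state $Q\bar\sigma Q$ on $\ket{\psi}^\perp$ and pigeonhole its eigenbasis into $2^{n/2}+1$ blocks of $2^{n/2}-1$ vectors, deterministically taking the lightest block as $S'$. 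Both are one-line averaging arguments, but yours chooses the subspace \emph{adapted} to the algorithm's query profile rather than a random subset of an arbitrary basis; as a result you get the slightly sharper constant $2p(n)\cdot 2^{-n/4}$ in place of the paper's $2p(n)\cdot 2^{-n/6}$. Your remark about the quantifier order (the states $\ket{\Phi^t_\psi}$, hence $\bar\sigma$, are fixed before $S$ is chosen) is exactly the right thing to flag, and it holds in both arguments.
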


\begin{proof}
We pick some orthonormal basis of the $n$-qubit Hilbert space that includes the state $\ket{\psi}$. We now pick a random $2^{n/2}$-dimensional subspace containing $\ket{\psi}$ by picking a random $(2^{n/2}-1)$-sized subset of the other basis vectors ($\ket{\psi}$ will always be included as a basis vector). Let $\{\ket{b_1}, \ldots, \ket{b_{2^n}}\}$ be the basis we've picked, with $\ket{b_1} = \ket{\psi}$; each subspace $S$ will include a $2^{n/2}$-sized subset of $\ket{b_i}$-s. If $\Phi^t_\psi$ is the reduced state in the query register of $\clQ$ just before it makes the $t$-th query to $U_\psi$, by using the hybrid method similar to the proof of Lemma \ref{lem:1wit} we can say for a fixed $S$,
\[ \Vert\ket{\Phi^f_\psi}-\ket{\Phi^f_S}\Vert_2^2 \leq 4p(n)\sum_{t=1}^{p(n)}\sum_{\ket{b_i}\in S, i \neq 1}\bra{b_i}\Phi^t_\psi\ket{b_i}.\]

Now the probability of a basis vector $\ket{b_i}$ (for $i\neq 1$) being in a random subspace $S$ is $\dfrac{{2^n \choose 2^{n/2}-2}}{{2^n \choose 2^{n/2}-1}} = \frac{2^{n/2}-1}{2^n - 2^{n/2}+2} \leq 2^{-n/3}$. Therefore, letting $I_i$ be the indicator variable for $\ket{b_i}$ being in a random $S$, we have,
\begin{align*}
\bbE_S \Vert\ket{\Phi^f_\psi}-\ket{\Phi^f_S}\Vert_2^2 & \leq 4p(n)\bbE_S\left[\sum_{t=1}^{p(n)}\sum_{\ket{b_i}\in S, i \neq 1}\bra{b_i}\Phi^t_\psi\ket{b_i}\right] \\
 & = 4p(n)\sum_{t=1}^{p(n)}\sum_{i \neq 1}\bbE_S[I_i]\bra{b_i}\Phi^t_\psi\ket{b_i} \\
 & = 4p(n)\sum_{t=1}^{p(n)}\sum_{i \neq 1}\Pr[\ket{b_i}\in S]\bra{b_i}\Phi^t_\psi\ket{b_i} \\
 & \leq 4p(n)\cdot 2^{-n/3}\sum_{t=1}^{p(n)}\sum_{i\neq 1}\bra{b_i}\Phi^t_\psi\ket{b_i} \\
 & \leq 4p(n)\cdot 2^{-n/3}\sum_{t=1}^{p(n)} \Tr(\Phi^t_\psi) = 4p(n)^2\cdot 2^{-n/3}.
\end{align*}
Therefore, there exists a $2^{n/2}$-dimensional $S$ such that $\Vert\ket{\Phi^f_\psi}-\ket{\Phi^f_S}\Vert_2 \leq 2p(n)\cdot 2^{-n/6}$.
\end{proof}

Now we'll prove a lemma about a family of states $\{\ket{\phi_i}\}_{i=1}^N$ such that starting from any states $\{\ket{f_i}\}_{i=1}^N$ with not too many qubits, it should be hard to produce $k$ copies of $\ket{\phi_i}$ on average. The family of states $\{\ket{\phi_i}\}_{i=1}^N$ we consider will be a (complex) spherical $k$-design, which approximate $k$ copies of a Haar-random state. In particular, they satisfy
\[ \frac{1}{N}\sum_{i=1}^N\state{\phi_i}^{\otimes k} = \int d\psi \state{\psi}^{\otimes k}, \]
where $d\psi$ represents the Haar measure on $n$-qubit states. Spherical $k$-designs are known to exist for large enough $N$.

\begin{theorem}
Let $\{\ket{\phi_i}\}_{i=1}^N$ be a $k$-design for $n$-qubit states. Then for any family of states $\{\ket{f_i}\}_{i=1}^N$ of at most $ m \leq \lfloor\log(d(n,k-1))\rfloor$ qubits, and the optimal cloning map $T$, the average probability of producing $\ket{\phi_i}^{\otimes k}$ by acting $T$ on $\ket{f_i}$ satisfies
\[ \sup_T\frac{1}{N}\sum_{i=1}^N\bra{\phi_i}^{\otimes k} T(\state{f_i})\ket{\phi_i}^{\otimes k} \leq \frac{k+1}{2^n+k}.\]
\end{theorem}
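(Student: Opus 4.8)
The statement bounds the average cloning success probability for a $k$-design. The plan is to recognize the left-hand side as a ``state discrimination'' quantity and bound it using the dimension constraint $m \le \lfloor\log d(n,k-1)\rfloor$, where $d(n,k-1) = \binom{2^n+k-2}{k-1}$ is the dimension of the $(k-1)$-fold symmetric subspace of $n$ qubits.

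First I would rewrite the quantity. Fix the optimal cloning channel $T$ (with Stinespring dilation, or just think of it as a CPTP map) and write
\[
\frac{1}{N}\sum_{i=1}^N \bra{\phi_i}^{\otimes k} T(\state{f_i})\ket{\phi_i}^{\otimes k}
= \sum_{i=1}^N \Tr\!\left[ M_i \,\state{f_i}\right],
\]
where $M_i = \tfrac{1}{N}\, T^\dagger\!\left(\state{\phi_i}^{\otimes k}\right)$ is a positive operator on the $m$-qubit input space, and by the design property $\sum_i M_i = \tfrac{1}{N} T^\dagger\!\left(\sum_i \state{\phi_i}^{\otimes k}\right) = T^\dagger\!\left(\int d\psi\, \state{\psi}^{\otimes k}\right) \le T^\dagger(\Id) \le \Id$. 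So $\{M_i\}$ is (dominated by) a POVM on a $2^m$-dimensional space, and the quantity is exactly the success probability of identifying $i$ from the state $\ket{f_i}$ using this POVM. Since $2^m \le d(n,k-1)$, this is a discrimination problem among $N$ pure states living in a space of dimension at most $d(n,k-1)$.

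Next I would bound this discrimination probability. The cleanest route: for any POVM $\{M_i\}$ on a $D$-dimensional space and any states $\ket{f_i}$, $\sum_i \bra{f_i} M_i \ket{f_i} \le \sum_i \Tr M_i = \Tr(\sum_i M_i) \le D$. But we need a factor $\tfrac{1}{N}$ tighter than the trivial $D/N$ — we need $\tfrac{k+1}{2^n+k}$, which equals $D / (N \cdot \text{something})$ only when the design has the ``right'' size. The key point is that the POVM elements $M_i$ are \emph{not} arbitrary: they come from $T^\dagger$ applied to rank-one projectors $\state{\phi_i}^{\otimes k}$, which live in the $k$-fold symmetric subspace $\mathrm{Sym}^k(\bbC^{2^n})$ of dimension $d(n,k) = \binom{2^n+k-1}{k}$. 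Here I would use that $T$ is the \emph{optimal} cloning map — the optimal $k \to$ anything cloner factors through the symmetric subspace — so that effectively the whole computation takes place with the target being states in $\mathrm{Sym}^k$, and the relevant ``ideal'' object $\int d\psi\, \state{\psi}^{\otimes k} = \tfrac{1}{d(n,k)}\Pi_{\mathrm{Sym}^k}$ has the explicit normalization that produces the $\tfrac{k+1}{2^n+k}$. Concretely, I expect the bound to come from: optimal cloning of $\ket{\phi_i}$ into $\ket{\phi_i}^{\otimes k}$ from an $m$-qubit state is no better than optimal cloning from $(k-1)$ copies (since $2^m$ fits inside the $(k-1)$-symmetric subspace, the $m$-qubit states can be ``embedded'' there without loss), and then invoke the known optimal fidelity of $(k-1)\to k$ universal cloning, which is exactly $\tfrac{k+1}{2^n+k}$ — this is Werner's formula, $F_{N\to M} = \tfrac{N+1}{N + d \cdot (\text{ratio})}$ specialized appropriately, giving $\tfrac{(k-1)+1}{(k-1)+2^n+1-1}$-type expression. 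I'd verify the exact algebra matches $\tfrac{k+1}{2^n+k}$.

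The main obstacle is making the reduction ``$m$-qubit input is no better than $(k-1)$-copy input'' rigorous. The intuition is that $2^m \le d(n,k-1) = \dim \mathrm{Sym}^{k-1}(\bbC^{2^n})$, so there is an isometry $V$ from the $m$-qubit space into $\mathrm{Sym}^{k-1}$, and given any family $\ket{f_i}$ one can consider $V\ket{f_i} \in \mathrm{Sym}^{k-1}$; but to close the argument I need that \emph{some} valid choice of $\ket{f_i}$ — namely $\ket{\phi_i}^{\otimes(k-1)}$ itself, or its image — is at least as good as an arbitrary $\ket{f_i}$ for the purpose of maximizing the average fidelity, and that the optimal cloner's performance is governed purely by the Gram matrix / overlap structure, which for states in $\mathrm{Sym}^{k-1}$ is controlled by the design property. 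I would handle this by showing that $\sup_{\{f_i\}, T}$ of the average equals the same supremum where $\ket{f_i}$ ranges over states in a $d(n,k-1)$-dimensional space, and then that the latter is exactly the optimal $(k-1)\to k$ universal cloning fidelity $\tfrac{k+1}{2^n+k}$ by Werner's result (using $\tfrac{1}{N}\sum_i \state{\phi_i}^{\otimes(k-1)} = \tfrac{1}{d(n,k-1)}\Pi_{\mathrm{Sym}^{k-1}}$, i.e.\ the $(k-1)$-design property which follows from the $k$-design property). If a fully clean reduction is elusive, the fallback is a direct SDP-duality computation: exhibit a feasible dual variable (a single operator $Y$ with $Y \ge M_i$-type constraints, or rather $Y \ge \tfrac1N T^\dagger(\state{\phi_i}^{\otimes k})$ isn't quite it) whose trace against $\state{f_i}$ averages to $\tfrac{k+1}{2^n+k}$ regardless of $\{f_i\}$ — the natural candidate being built from $\Pi_{\mathrm{Sym}^{k-1}}$ scaled by $\tfrac{1}{d(n,k-1)} \cdot (\text{the } (k-1)\to k \text{ fidelity factor})$.
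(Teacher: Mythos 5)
Your setup is exactly right --- defining $M_i = \tfrac{1}{N} T^\dagger(\state{\phi_i}^{\otimes k})$ and recognizing the quantity as a state-discrimination probability is precisely how the paper proceeds --- but you then make an arithmetic error that causes you to abandon the direct argument and chase a much harder reduction that the paper never uses.

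The gap is in the line where you dismiss the trivial bound. You write $\sum_i M_i = T^\dagger\bigl(\int d\psi\,\state{\psi}^{\otimes k}\bigr) \le T^\dagger(\Id) \le \Id$ and conclude that $\sum_i \Tr M_i \le D = 2^m$, which would only give $2^m/N$ and is indeed useless. But you threw away the crucial constant: $\int d\psi\,\state{\psi}^{\otimes k} = \Pi_{n,k}/d(n,k)$, so in fact
\[
\sum_i M_i \;=\; \frac{1}{d(n,k)}\,T^\dagger(\Pi_{n,k}) \;\le\; \frac{1}{d(n,k)}\,\Id_{2^m},
\qquad\text{hence}\qquad
\Tr\Bigl(\sum_i M_i\Bigr) \;\le\; \frac{2^m}{d(n,k)} \;\le\; \frac{d(n,k-1)}{d(n,k)}.
\]
Combined with $\bra{f_i}M_i\ket{f_i} \le \Tr(M_i)$ for PSD $M_i$ and a unit vector $\ket{f_i}$, this immediately yields the claimed bound. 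That is exactly the paper's proof: it is elementary, requires no optimality properties of $T$ beyond it being CPTP, and does not invoke Werner's $(k-1)\to k$ formula as a black box --- the ratio $d(n,k-1)/d(n,k)$ simply appears from the dimension count. The ``main obstacle'' you identify (rigorously embedding the $m$-qubit input into $\mathrm{Sym}^{k-1}$ and reducing to known cloning fidelities) therefore does not need to be overcome; you already had everything you needed and talked yourself out of it by dropping the $1/d(n,k)$ normalization in the Haar average.
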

\begin{proof}
First we note that the optimal cloning map $T$ will have the $k$-fold symmetric subspace as its image, since the target states $\state{\phi_i}^{\otimes k}$ are all in the symmetric subspace. We know for the Haar measure on $n$-qubit states,
\[ \int d\psi \state{\psi}^{\otimes k} = \frac{\Pi_{n,k}}{d(n,k)}, \]
where $\Pi_{n,k}$ is the projector onto the symmetric subspace, and $d(n,k)$ is its dimension, which is given by $d(n,k) = {{2^n + k -1} \choose {k-1}}$. The states $\frac{d(n,k)}{N}\state{\phi_i}^{\otimes k}$ therefore satisfy by definition,
\[ \sum_{i=1}^N \frac{d(n,k)}{N} \state{\phi_i}^{\otimes k} = \Pi_{n,k}.\]
Therefore, acting the map $T$ which takes states to $\supp(\Pi_{n,k})$, and then projecting on to the states $\frac{d(n,k)}{N} \state{\phi_i}^{\otimes k}$ is a valid POVM. In particular, we have for an optimal $T$,
\[ \frac{1}{N}\sum_{i=1}^N\bra{\phi_i}^{\otimes k} T(\state{f_i})\ket{\phi_i}^{\otimes k} = \frac{1}{d(n,k)}\sum_{i=1}^N\bra{f_i}M_i\ket{f_i},\]
where $M_i = T^\dagger(\state{\phi_i}^{\otimes k})$ is a POVM element. Note that the POVM elements $M_i$ act on $m$-qubit states, and therefore must add up to the identity for $m$-qubit states. This gives us,
\begin{align*}
\frac{1}{N}\sum_{i=1}^N\bra{\phi_i}^{\otimes k} T(\state{f_i})\ket{\phi_i}^{\otimes k} & \leq \frac{1}{d(n,k)}\sum_{i=1}^N\Tr(M_i) \\
& = \frac{1}{d(n,k)}\Tr\left(\sum_{i=1}^NM_i\right) \\
& = \frac{1}{d(n,k)}\Tr(\Id_{2^m}) \\
& \leq \frac{d(n,k-1)}{d(n,k)} = \frac{k+1}{2^n + k}. \qedhere
\end{align*}
\end{proof}

For any $2^{n/2}$-dimensional subspace $S$ of an $n$-qubit Hilbert space, there is an isometry from it to an $n/2$-qubit Hilbert space. Using a $k$-design for $n/2$-qubit states, we can therefore give a $k$-design for $S$. This gives us the following corollary for cloning of a $k$-design corresponding to a known subspace.
\begin{corollary}\label{cor:opt-clone}
Let $S$ be any $2^{n/2}$-dimensional subspace of the $n$-qubit Hilbert space, and let $\ket{f_i}$ be a family of at most $\lfloor\log(d(n/2,k))\rfloor$-qubit states corresponding to each element of a $k$-design $\{\ket{\phi_i}\}_{i=1}^N$. Then for any CPTP map $T_S$ that depends on $S$ we have,
\[ \sup_T\frac{1}{N}\sum_{i=1}^N\bra{\phi_i}^{\otimes k} T_S(\state{f_i})\ket{\phi_i}^{\otimes k} \leq \frac{k+1}{2^{n/2}+k}.\]
\end{corollary}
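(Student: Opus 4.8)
The plan is to deduce the corollary from the preceding theorem by transporting the cloning problem on $S$ to one on an $n/2$-qubit space through an isometry. Write $\clH_{n/2}$ for an $n/2$-qubit Hilbert space; since $\dim S = 2^{n/2} = \dim\clH_{n/2}$, fix an isometry $V\colon\clH_{n/2}\to\clH_n$ with image $S$, so that $V^\dagger V = \Id_{\clH_{n/2}}$ and $VV^\dagger = \Pi_S$, the projector onto $S$. I would start from a $k$-design $\{\ket{\chi_i}\}_{i=1}^N$ for $n/2$-qubit states (which exists for $N$ large enough, as recalled before the theorem) and set $\ket{\phi_i} := V\ket{\chi_i}$; these are the design elements of the corollary. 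Conjugating the defining identity $\frac1N\sum_i\state{\chi_i}^{\otimes k} = \int d\psi\,\state{\psi}^{\otimes k} = \Pi_{n/2,k}/d(n/2,k)$ by $V^{\otimes k}$, and using $\ket{\phi_i}^{\otimes k} = V^{\otimes k}\ket{\chi_i}^{\otimes k}$, gives
\[ \frac1N\sum_{i=1}^N\state{\phi_i}^{\otimes k} = V^{\otimes k}\Big(\frac{\Pi_{n/2,k}}{d(n/2,k)}\Big)(V^\dagger)^{\otimes k} = \frac{\Pi^{(k)}_S}{d(n/2,k)}, \]
where $\Pi^{(k)}_S := V^{\otimes k}\Pi_{n/2,k}(V^\dagger)^{\otimes k}$ is the projector onto the $k$-fold symmetric subspace of $S$, of dimension $d(n/2,k)$. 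So $\{\ket{\phi_i}\}$ is a $k$-design \emph{relative to $S$}, i.e. $\{\frac{d(n/2,k)}{N}\state{\phi_i}^{\otimes k}\}_i$ is a POVM on $\mathrm{Sym}^k(S)$ --- exactly the property of the design the theorem's proof relies on.

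Next I would pull the cloner back through $V$. Given any CPTP map $T_S$ --- it may depend on $S$, but this is harmless because the theorem's bound is uniform over all CPTP maps --- set $\widetilde T(\cdot) := (V^\dagger)^{\otimes k}\,T_S(\cdot)\,V^{\otimes k}$ on the $m$-qubit input register. This is completely positive, and since $V^{\otimes k}(V^\dagger)^{\otimes k} = \Pi_S^{\otimes k}\le\Id$ it is trace non-increasing. From $\ket{\phi_i}^{\otimes k} = V^{\otimes k}\ket{\chi_i}^{\otimes k}$ we get, for every $i$,
\[ \bra{\phi_i}^{\otimes k}\,T_S(\state{f_i})\,\ket{\phi_i}^{\otimes k} = \bra{\chi_i}^{\otimes k}\,\widetilde T(\state{f_i})\,\ket{\chi_i}^{\otimes k}, \]
so the average appearing in the corollary equals the corresponding average for the $n/2$-qubit design $\{\ket{\chi_i}\}$ and the map $\widetilde T$. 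Either complete $\widetilde T$ to a genuine CPTP map by dumping the lost trace into a fixed state supported outside $\mathrm{Sym}^k(\clH_{n/2})$ (which changes none of these overlaps), or simply observe that the theorem's argument goes through verbatim for trace-non-increasing completely positive maps, since it only uses that the induced operators $\widetilde T^\dagger(\state{\chi_i}^{\otimes k})$ form a sub-POVM. Applying the theorem with $n$ replaced by $n/2$ --- using that the $\ket{f_i}$ meet its qubit bound for $n/2$-qubit states --- bounds the average by $\frac{k+1}{2^{n/2}+k}$, which is the claim.

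The argument is essentially ``take the theorem and apply $V$'', so I do not expect a genuine obstacle. The only things to double-check are bookkeeping: that conjugation by $V^{\otimes k}$ really sends the Haar average / symmetric-subspace projector for $\clH_{n/2}$ to the corresponding object for $S$ with the same dimension $d(n/2,k)$ (clear, since $V^{\otimes k}$ is an isometry carrying $\mathrm{Sym}^k(\clH_{n/2})$ onto $\mathrm{Sym}^k(S)$), and that the qubit bound on $\ket{f_i}$ in the corollary's hypothesis lines up with what the theorem needs under $n\mapsto n/2$, i.e. at most the log of the dimension of the $(k-1)$-fold symmetric subspace of $S$. If any subtlety remains at all it is in matching those $d(n/2,\cdot)$ indices against the exact statement of the theorem; everything else is immediate.
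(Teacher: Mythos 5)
Your proposal is correct and follows exactly the route the paper intends: the paper leaves the corollary as a one-sentence remark that an isometry $V\colon\clH_{n/2}\to S$ transports a $k$-design and the cloning problem from $\clH_{n/2}$ to $S$, and you have simply filled in that argument, including the (harmless) point that $\widetilde T=(V^\dagger)^{\otimes k}T_S(\cdot)V^{\otimes k}$ is only trace non-increasing. Your concern about the index bookkeeping is well-founded: the theorem with $n\mapsto n/2$ requires $m\le\lfloor\log d(n/2,k-1)\rfloor$, and the corollary's $d(n/2,k)$ appears to be an off-by-one typo in the paper itself, consistent with Theorem~\ref{thm:query-suQMA} later choosing $k$ as the smallest integer with $p(n)\le\lfloor\log d(n/2,k-1)\rfloor$.
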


Using all these results, we state and prove the main theorem about query $\suQMA$. The proof is similar to the proof of augmented uncloneable security in \cite{AKL23}.
\begin{theorem}\label{thm:query-suQMA}
Suppose for the problem $\QOR_n$, $\{\ket{f'_\psi}\}_\psi$ is a family of $p(n)$-qubit quantum proofs for each oracle $U_\psi$. Then there is a $k(n)=O(p(n)/n)$ such that for any algorithm $\clA$ that takes in $\ket{f'_\psi}$ and outputs $\ket{f_\psi}$ on registers $B_1, \ldots, B_k$ after making polynomially many quantum queries, and algorithms $\clB_1, \ldots, \clB_k$ that act on $B_1, \ldots, B_k$ respectively and each query the oracle polynomially many times each, $\clB_1, \ldots, \clB_k$ cannot jointly solve $\QOR_n$ with polynomial soundness gap with the proof state $\ket{f_\psi}$.
\end{theorem}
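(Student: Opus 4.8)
The plan is to argue by contradiction. Fix $k=k(n)$ to be the least integer with $p(n)\le\lfloor\log d(n/2,k)\rfloor$; since $\log d(n/2,k)=\Theta(kn)$ for polynomial $k$, this makes $k(n)=O(p(n)/n)$ as claimed. Suppose some polynomial-time $\clA,\clB_1,\ldots,\clB_k$ (each querying the oracle polynomially often) are such that $(\clB_1,\ldots,\clB_k,\{\clA(\ket{f'_\psi})\}_\psi)$ jointly solves $\QOR_n$ with joint soundness gap $1/s(n)$ for a polynomial $s$; note joint completeness here means the $k$ verifiers accept $U_\psi$ for \emph{every} $n$-qubit $\ket{\psi}$. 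Absorbing $\clA$'s queries into it and treating $\ket{f_\psi}:=\clA(\ket{f'_\psi})$ as the joint proof fed to $\clB_1,\ldots,\clB_k$, Lemma~\ref{lem:kwit} produces $\clB'_1,\ldots,\clB'_k$ that start from $\ket{f_\psi}$, make polynomially many queries to $U_\psi$, and jointly prepare output registers in a state $g_\psi$ with $\F(g_\psi,\state{\psi}^{\otimes k})\ge 1-\delta$, for a small constant $\delta$ of our choosing. Composing gives a single $q(n)$-query algorithm $\clC$ (with $q(n)=\poly(n)$) that on input $\ket{f'_\psi}$ with oracle $U_\psi$ produces a pure state $\ket{G_\psi}$ whose output-register reduction is $(1-\delta)$-fidelity-close to $\ket{\psi}^{\otimes k}$.

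The heart of the argument is to remove the oracle from $\clC$ while preserving a $k$-design structure, so that Corollary~\ref{cor:opt-clone} applies. Fix an abstract $k$-design $\{\ket{\phi_i}\}_{i=1}^N$ for $n/2$-qubit states, and let $V$ be a Haar-random isometry from the $n/2$-qubit space into the $n$-qubit space; then $\{V\ket{\phi_i}\}_{i=1}^N$ is a $k$-design for the uniformly random $2^{n/2}$-dimensional subspace $S=\mathrm{im}(V)$. The distributional observation I will use is that, for each fixed $i$, $V\ket{\phi_i}$ is Haar-distributed over $n$-qubit states and, conditioned on $V\ket{\phi_i}=\ket{\psi}$, the subspace $S$ is uniform among $2^{n/2}$-dimensional subspaces containing $\ket{\psi}$ (this follows by splitting $V$ into its action on $\ket{\phi_i}$ and on the orthogonal complement of $\ket{\phi_i}$). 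Applying the hybrid-method estimate inside the proof of Lemma~\ref{lem:subspace} to $\clC$ with fixed initial state $\ket{f'_\psi}$ gives, for \emph{every} $\psi$, that $\bbE_{S\ni\psi}\Vert\ket{G_\psi}-\ket{G^S_\psi}\Vert_2^2\le 4q(n)^2 2^{-n/3}$ (expectation over $S$ uniform among $2^{n/2}$-dimensional subspaces containing $\ket{\psi}$), where $\ket{G^S_\psi}$ is the output of $\clC$ run on $\ket{f'_\psi}$ with the oracle replaced by $\Id-2S$. Since this bound is independent of $\psi$, averaging it over $i$ and over $V$ (and using the distributional observation) yields
\[ \bbE_V\ \frac1N\sum_{i=1}^N\Vert\ket{G_{V\phi_i}}-\ket{G^{\mathrm{im}(V)}_{V\phi_i}}\Vert_2^2\ \le\ 4q(n)^2 2^{-n/3}. \]
Hence there is a fixed isometry $V^*$, with image $S^*$ and design points $\phi^*_i:=V^*\ket{\phi_i}$, such that $\frac1N\sum_i\Vert\ket{G_{\phi^*_i}}-\ket{G^{S^*}_{\phi^*_i}}\Vert_2^2\le 4q(n)^2 2^{-n/3}$, and $\{\phi^*_i\}_i$ is a $k$-design for $S^*$.

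To finish, let $T_{S^*}$ be the CPTP map that runs $\clC$ with each oracle call implemented internally as $\Id-2S^*$; this depends only on $S^*$, and $T_{S^*}(\state{f'_{\phi^*_i}})$, restricted to the $k$ output registers, is exactly $\clC$'s output-register reduction in the run with oracle $\Id-2S^*$. Combining $\F(g_{\phi^*_i},\state{\phi^*_i}^{\otimes k})\ge 1-\delta$ from Lemma~\ref{lem:kwit} with the smallness of $\Vert\ket{G_{\phi^*_i}}-\ket{G^{S^*}_{\phi^*_i}}\Vert_2$, via the triangle inequality for purified distance, and averaging over $i$ with Jensen's inequality, gives
\[ \frac1N\sum_{i=1}^N\bra{\phi^*_i}^{\otimes k}\,T_{S^*}(\state{f'_{\phi^*_i}})\,\ket{\phi^*_i}^{\otimes k}\ \ge\ 1-O(\delta)-O(q(n)^2 2^{-n/3})\ \ge\ \frac12 \]
for a small enough constant $\delta$ and all large $n$. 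But each $\ket{f'_{\phi^*_i}}$ has $p(n)\le\lfloor\log d(n/2,k)\rfloor$ qubits and $\{\phi^*_i\}_i$ is a $k$-design for the $2^{n/2}$-dimensional space $S^*$, so Corollary~\ref{cor:opt-clone} bounds the very same average by $\frac{k+1}{2^{n/2}+k}=\negl(n)$, a contradiction. Therefore no such $\clA,\clB_1,\ldots,\clB_k$ exist, i.e.\ for some instance $\ket{\psi}$ they fail to jointly solve $\QOR_n$.

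The step I expect to be the main obstacle is the oracle-removal argument --- in particular, pinning down the distributional claim that transporting a fixed $k$-design point by a Haar-random isometry yields a Haar-random state together with a uniformly random enclosing $2^{n/2}$-dimensional subspace. This is exactly what upgrades the single-state, single-subspace indistinguishability of Lemma~\ref{lem:subspace} into a \emph{single} subspace $S^*$ that is simultaneously good on average over an entire $k$-design living inside it, which is what the arbitrary-advice bound of Corollary~\ref{cor:opt-clone} needs. Besides that, one must keep track that the query counts (of $\clA$, of the amplified $\clB'_i$ from Lemma~\ref{lem:kwit}, and the number $k$ of verifiers) all stay polynomial so that the $2^{-\Omega(n)}$-type error terms stay negligible relative to the constant acceptance gap.
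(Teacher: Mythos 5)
Your proposal is correct and follows the same overall architecture as the paper's proof (reduce to preparing $\ket{\psi}^{\otimes k}$ via Lemma~\ref{lem:kwit}, use the hybrid argument of Lemma~\ref{lem:subspace} to replace the oracle $U_\psi$ by $U_S$, then apply the cloning bound of Corollary~\ref{cor:opt-clone}). The interesting thing is that you have correctly identified and repaired a genuine imprecision in the paper's own argument. The paper invokes Lemma~\ref{lem:subspace} for a \emph{fixed} $\ket{\psi}$ to obtain a subspace $S \ni \ket{\psi}$ for which $\Vert\ket{\Phi^f_\psi}-\ket{\Phi^f_S}\Vert_2$ is small, and then asserts without further argument that the resulting fidelity bound ``holds for every $\ket{\psi}\in S$'' --- but Lemma~\ref{lem:subspace} only guarantees an $S$ for each $\psi$ separately, and a priori those subspaces could be different for different $\psi$. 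Since Corollary~\ref{cor:opt-clone} needs a single $S$ together with a $k$-design \emph{for that same $S$}, this is exactly the step that requires extra care.

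Your random-isometry device is the right fix: pushing an abstract $k$-design on $n/2$ qubits through a Haar-random isometry $V$ makes each transported design point Haar-random, and conditioned on its value, the enclosing subspace $\mathrm{im}(V)$ is uniform among $2^{n/2}$-dimensional subspaces through it. This lets you average the per-$\psi$ hybrid bound over both $i$ and $V$ and then fix a single good $V^*$, giving a single $S^*$ that is simultaneously good on average over a $k$-design living inside it, which is precisely what Corollary~\ref{cor:opt-clone} consumes. One minor technical note: the paper's Lemma~\ref{lem:subspace} proves its bound for a random \emph{coordinate} subspace containing $\psi$ (random subset of basis vectors), not a Haar-random subspace through $\psi$ as you use; the same hybrid calculation goes through for Haar-random subspaces (one bounds $\bbE_S \Tr(\Pi_{S\cap\psi^\perp}\Phi^t_\psi) \le 2^{n/2}/(2^n-1)$, which is even a bit stronger than the paper's $2^{-n/3}$). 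Also, after averaging the squared $\ell_2$ distances you should take a square root when passing to a fidelity/purified-distance loss, so the error term is really $O(q(n)\,2^{-n/6})$ rather than $O(q(n)^2 2^{-n/3})$; this is harmless since both are negligible, but worth keeping straight. Overall this is a sound and, in one key step, more rigorous version of the paper's argument.
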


\begin{proof}
Let $k$ be the smallest integer such that $p(n) \leq \lfloor\log(d(n/2,k-1))\rfloor$. Since $d(n/2,k-1) \leq \left(\frac{2^{n/2}+k-2}{k-1}\right)^{k-1} = O(2^{nk/2})$, we have $k(n)=O(p(n)/n)$.

If $\clB_1, \ldots, \clB_k$ can jointly solve $\QOR_n$ with the polynomial soundness gap with proof states $\ket{f_\psi}$, then by Lemma \ref{lem:kwit}, there exist algorithms $\clB'_1, \ldots, \clB'_k$ which together make polynomially many queries to $U_\psi$, such that the fidelity of their output registers with $\state{\psi}^{\otimes k}$ is at least $\frac{99}{100}$. Let us use $\clB$ to denote the overall query algorithm implemented by $\clA$ and $\clB'_1, \ldots, \clB'_k$. $\clB$ takes in $\ket{f'_\psi}$ as its initial state and makes $q(n)$ many queries for some polynomial $q$. If $\ket{\Phi^f_\psi}$ denotes the final state of $\clB$ using oracle $U_\psi$, then by Lemma \ref{lem:subspace}, there exists a $2^{n/2}$-dimensional subspace containing $\ket{\psi}$ such that
\[ \Vert\ket{\Phi^f_\psi}-\ket{\Phi^f_S}\Vert_2 \leq 2q(n)\cdot 2^{-n/6}.\]
This means $|\brakett{\Phi^f_\psi}{\Phi^f_S}| \geq 1 - q(n)^2\cdot 2^{-n/3}$. If $\Phi^f_\psi$ denotes the reduced state on the output registers of $\ket{\Phi^f_\psi}$, by Uhlmann's theorem, there exists some state $\ket{\phi}$ (this depends on $\ket{\psi}$ but we are suppressing the dependence here) on the non-output registers of $\clB$ such that $\F(\state{\psi}^{\otimes k}\otimes\state{\phi}, \state{\Phi^f_\psi}) = |(\bra{\psi}^{\otimes k}\bra{\phi})\ket{\Phi^f_\psi}| \geq \frac{99}{100}$. Therefore, $\F\left(\state{\psi}^{\otimes k}, \Phi^f_S\right) \geq |(\bra{\psi}^{\otimes k}\bra{\phi})\ket{\Phi^f_S}| \geq \frac{99}{100}\cdot(1 - q(n)^2\cdot 2^{-n/3})$.

Now we can give a CPTP map $T_S$ (depending on $S$) that takes in states $\ket{f'_\psi}$ for each $\ket{\psi} \in S$, and produces states that have high overlap with $\state{\psi}^{\otimes k}$.
$T_S$ simply applies the query algorithm $\clB$ to $\ket{f_\psi}$, using $U_S$ as the oracle. $T_S$ knows the subspace $S$, so it can implement the oracle $U_S$ by itself. Finally, all registers other than the output registers of $\clB'_1, \ldots, \clB'_k$ are traced out. It is clear that $T_S(\state{f'_\psi}) = \Phi^f_S$, and we have for every $\ket{\psi} \in S$, $\F\left(\state{\psi}^{\otimes k}, T_S(\state{f'_\psi})\right) = \bra{\psi}^{\otimes k}T_S(\state{f'_\psi})\ket{\psi}^{\otimes k} \geq \frac{99}{100}\cdot(1 - q(n)^2\cdot 2^{-n/3})$. In particular, the average success probability of the cloning map $T_S$ on states corresponding to a $k$-design for $S$ high. This contradicts Corollary \ref{cor:opt-clone}.
\end{proof}

\subsection{Diagonalization}

\begin{theorem}\label{thm:diag-suQMA}
There exists a quantum oracle $U$ and a language $L$ such that $L\in \suQMA^U$.
\end{theorem}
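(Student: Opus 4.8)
The plan is to lift the query-complexity statement Theorem~\ref{thm:query-suQMA} to an oracle separation by a standard diagonalization. First I would fix the shape of the oracle and the language: for each input length $n$ the oracle $U$ contains a block implementing a $\QOR_n$ instance, i.e.\ either the reflection $U_{\psi_n}=\Id-2\state{\psi_n}$ for an $n$-qubit state $\psi_n$ still to be selected, or the identity; and I set $1^n\in L$ precisely when the $n$-th block has the form $U_{\psi_n}$. With this setup, $L\in\QMA^U$ is immediate from Lemma~\ref{lem:oracleQMA}: on a yes-instance $1^n$ the honest witness is $\ket{\psi_n}$ and the verifier of that lemma accepts, while on a no-instance the $n$-th block is the identity and that same fixed verifier rejects every purported witness (one can amplify the gap, or just note the protocol has perfect completeness).

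The work is in making $L\in\suQMA^U$. I would enumerate all pairs $(M,p)$, where $M=(\clA,\clB)$ consists of a polynomial-time cloner $\clA$ that outputs on some polynomial number of registers together with a polynomial-time verifier $\clB$ run on each register, and $p$ is a polynomial bounding a candidate proof length; this is a countable list. To the $i$-th pair I dedicate a fresh length $n_i$, chosen so large relative to the running times of all machines enumerated so far that any polynomial-time computation on input $1^{n_i}$ can touch only oracle blocks at lengths $\le n_i$. Every length that is not some $n_i$ gets the identity block, hence is a no-instance. At length $n=n_i$ with $M=(\clA,\clB)$ and polynomial $p$, let $k(n)=O(p(n)/n)$ be the value that Theorem~\ref{thm:query-suQMA} furnishes for proofs of size $p(n)$; I must choose $\psi_n$ so that, assuming $\clA$ outputs on exactly $k(n)$ registers, the attack given by $\clA$ followed by $k(n)$ copies of $\clB$ cannot jointly decide $\QOR_n$ with polynomial soundness gap on oracle $U_{\psi_n}$, \emph{no matter which} $p(n)$-qubit state is handed to $\clA$ as the candidate proof.

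The delicate step, and the one I expect to be the main obstacle, is precisely this choice of $\psi_n$: the quantifier order in Definition~\ref{def:su-QMA} allows the candidate proof at $1^{n_i}$ to depend on the entire not-yet-built oracle, so the collection of candidate proofs is not something we can enumerate. What rescues us is that the universal quantifier over proof families already sits inside Theorem~\ref{thm:query-suQMA}. Suppose, for contradiction, that for \emph{every} $n$-qubit $\psi$ there is a $p(n)$-qubit state with which $M$'s attack jointly decides $\QOR_n$ on $U_\psi$ with the required gap; then the assignment $\psi\mapsto(\text{any such state})$ is a $p(n)$-qubit candidate-proof family, and Theorem~\ref{thm:query-suQMA} applied to it, with the matching $k(n)=O(p(n)/n)$, says $M$ must fail on some $\psi$ — a contradiction. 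Hence some $\psi_n$ defeats $M$ against all $p(n)$-qubit proofs, and that is the block installed at $n_i$. After the construction, given an arbitrary candidate-proof family $\{\ket{\phi_x}\}$ of polynomial size $p$, one takes $k(n)=O(p(n)/n)$, and every $(k(n)+1)$-tuple $(\clA,\clB_1,\dots,\clB_{k(n)})$ — after the routine normalizations that make the $\clB_i$ copies of a single (index-taking) machine and force $\clA$ to output on exactly $k(n)$ registers by padding or stripping — coincides with one of the enumerated $M$'s, and by construction it fails on input $1^{n_i}$, so the suQMA condition holds. The remaining points are bookkeeping: letting the $n_i$ grow fast enough to decouple the blocks, discarding enumerated pairs whose register count cannot match any valid tuple, and padding candidate proofs to a common length so that Theorem~\ref{thm:query-suQMA} applies verbatim.
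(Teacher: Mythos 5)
Your proposal is correct and takes essentially the same route as the paper's: enumerate tuples of candidate proof size, cloner, and verifiers; dedicate a length to each tuple; and at that length invoke Theorem~\ref{thm:query-suQMA} to pick an oracle block (either $\Id$ or some $U_{\psi}$) that defeats the tuple. The one genuine addition of value in your write-up is that you make explicit the quantifier-swapping argument for selecting $\psi_n$ (the observation that if every $\psi$ had a working $p(n)$-qubit proof, the map $\psi\mapsto\ket{f'_\psi}$ would be a candidate family contradicting Theorem~\ref{thm:query-suQMA}); the paper compresses this into the single sentence ``By Theorem~\ref{thm:query-suQMA}, $\clB_1,\ldots,\clB_k$ are not jointly complete and sound.'' You also use sparse dedicated lengths $n_i$ with all other blocks forced to $\Id$, which cleanly decouples the iterations, whereas the paper uses a bijection of tuples onto all sufficiently large lengths and is a bit cavalier about machines querying blocks at lengths above $N(s)$ that have not yet been fixed.

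One sentence in your proof is literally false and should be repaired: ``any polynomial-time computation on input $1^{n_i}$ can touch only oracle blocks at lengths $\le n_i$.'' A machine running in time $q(n_i)$ can touch blocks of length up to $q(n_i) > n_i$. What you actually need (and clearly intend) is that the gaps are chosen so that $n_{i+1}$ exceeds the running time of the $i$-th machine on inputs of length $n_i$; then every block the $i$-th machine can reach is either one of the already-committed blocks $n_1,\ldots,n_i$ or lies strictly between consecutive $n_j$'s and is therefore permanently $\Id$, so the machine's behavior at length $n_i$ is fully determined once iteration $i$ finishes. With that correction the bookkeeping is sound, and the rest of the argument — the normalization to a single index-taking verifier, the padding of $\clA$'s output to $k(n)$ registers, and the application of Lemma~\ref{lem:oracleQMA} for $\QMA^U$ membership — matches the paper.
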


\begin{proof}
Consider all tuples $(p, \clA, \clB_1, \ldots, \clB_{k})$ where $p(n)$ is a polynomial growth rate of the quantum proof, $k$ is the function of $p$ given by Theorem \ref{thm:query-suQMA}, $\clA$ is an oracle BQP algorithm that takes as input a $p(n)$-qubit quantum state along with an $n$-bit classical string as input, and outputs a quantum state on registers $B_1, \ldots, B_{k}$, and $\clB_1, \ldots, \clB_{k}$ are oracle BQP algorithms that take in the registers $B_1, \ldots, B_k$ as their proof register. We can assume the algorithms $\clB_1, \ldots, \clB_k$ run in $p(n)$ time and $\clA$ runs in $kp(n)$ time. Further, we can take $p(n) = n^c + c$ for some positive integer $c$ so that $p(n)$ is efficiently computable and there are countably many such growth rates. Then the set of tuples of the form we are considering is also countable: this is because, for a fixed $p$, the set of tuples $(\clA, \clB_1, \ldots, \clB_{k})$ is countable, and a union of countably many countable sets is countable. Consider a bijection $N$ from the set of these tuples to the natural numbers $n$ that are large enough for Theorem \ref{thm:query-suQMA} to be non-trivial (there are only finitely many $n$ for which it's not). For a tuple $s = (p, \clA, \clB_1, \ldots, \clB_{k})$, let $N(s)$ be the natural number the bijection maps it to. We'll construct the language $L$ and the oracle $U$ iteratively, starting from the $s$ with the minimum $N(s)$, by a diagonalization argument. For $n$ smaller than this, $L\cap\{0,1\}^n$ and $U_n$ can be picked arbitrarily, as long as it satisfies the form we describe next.

Overall, $L$ will be a unary language and for any $n$: if $0^n \in L$, then $U_n$ will be equal to $U_{\psi_n}$ for some $n$-qubit state $\ket{\psi_n}$; if $0^n\notin L$, then $U_\psi$ will be the $n$-qubit identity operator $\Id_n$ (we once again observe that the oracle technically needs to be controlled version of these unitaries). By Lemma \ref{lem:oracleQMA}, this $L$ is in $\QMA^U$, with the proof for $0^n\in L$ being $\ket{\psi_n}$. So we need to make sure for every tuple $s=(p, \clA, \clB_1, \ldots, \clB_k)$, one of the $\clB_i$-s fails to compute the language when $p(n)$-sized states are given to $\clA$, with the oracle we fix.

Assume we are at a tuple $s=(p,\clA,\clB_1,\ldots, \clB_k)$ in the iteration such that the language $L$ has been fixed on $n$-bit inputs and so has $U_n$ for $n < N(s)$. We will fix the language on $N(s)$-bit inputs and $U_{N(s)}$ in this iteration. Fixing queries that the algorithms $\clB_1, \ldots, \clB_k$ to $U_n$ for $n\neq N(s)$, we get $k$ algorithms that only make queries to the oracle at the $N(s)$-th location. By Theorem \ref{thm:query-suQMA}, $\clB_1,\ldots, \clB_k$ are not jointly complete and sound for the proof size $p(n)$ and cloning algorithm $\clA$. Let $\clB_i$ be the algorithm that misbehaves: it should either output 1 with too high probability on oracle on $\Id_{N(s)}$ or 0 with too high probability on $U_\psi$, for some $N(s)$-qubit state $\psi$. If the misbehaviour is on $\Id_{N(s)}$, we don't put $0^{N(s)}$ in $L$, and we fix $U_{N(s)} = \Id_{N(s)}$; if the misbehaviour is on $U_\psi$, then we put $0^{N(s)}$ in $L$ and fix $U_{N(s)}=U_\psi$. This ensures that the algorithm $\clB_i$ does not satisfy either soundness or completeness on $L\cap \{0,1\}^{N(s)}$. Therefore, for cloning procedure $\clA$ that acts on $p(n)$-sized proofs, $\clB_1, \ldots, \clB_k$ cannot jointly compute $L$ relative to $U$ with their proof registers from $\clA$.

Continuing this way, for each tuple $(p,\clA,\clB_1,\ldots, \clB_k)$, we ensure that for proofs of size $p$, if we have $\clA$ as the cloning procedure and $k$ algorithms $\clB_1, \ldots, \clB_k$, there is an $n$ such that $\clB_1, \ldots, \clB_k$ are not jointly complete and sound on $L \cap \{0,1\}^n$. Therefore, for any set of proof states $\{\ket{\psi_{x,U}}\}_{x\in L}$, there exists a polynomial $k(n)$ such that for any polynomial-time cloning algorithm $\clA$ and polynomial-time algorithms $\clB_1, \ldots, \clB_k$, the algorithms $\clB_i$ are not jointly complete and sound for $L$ with the joint proof states $\{\clA(\ket{x}\ket{\psi_{x,U}})\}$. That is, $L$ is in $\suQMA^U$.
\end{proof}

The above theorem and Lemma \ref{lem:c-suQMA} have the following corollary.
\begin{corollary}
There exists a quantum oracle $U$ such that $\QMA^U \neq \cQMA^U$.
\end{corollary}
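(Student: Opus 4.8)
The plan is to obtain this corollary as a direct combination of the two results just proved. First I would invoke Theorem~\ref{thm:diag-suQMA} to fix a quantum oracle $U$ together with a language $L$ satisfying $L \in \suQMA^U$. Then I would apply Lemma~\ref{lem:c-suQMA} with $\clO = U$, which gives $\suQMA^U \subseteq \QMA^U \setminus \cQMA^U$, and hence $L \in \QMA^U$ while $L \notin \cQMA^U$. Since the oracle version of the classes (as set up in Section~\ref{sec:def}) satisfies $\cQMA^U \subseteq \QMA^U$ by definition, exhibiting a single language that lies in $\QMA^U$ but not in $\cQMA^U$ witnesses $\QMA^U \neq \cQMA^U$, which is exactly the statement.

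There is essentially no real obstacle here: the entire difficulty has already been absorbed into the diagonalization of Theorem~\ref{thm:diag-suQMA} and into the cloning-amplification argument of Lemma~\ref{lem:c-suQMA}. The one bookkeeping point I would make sure of in writing this up is that both ingredients refer to the \emph{same} oracle model --- controlled reflections $U_{\psi_n} = \Id - 2\state{\psi_n}$ (or $\Id_n$) indexed by the unary input length --- and to the same polynomial-time, oracle-access conventions for the verifier $\clQ$ and the cloner $\clA$. This ensures that any candidate family of proof states exhibited in the definition of $\cQMA^U$ is also a legitimate candidate family for the $\suQMA^U$ requirement, so that membership of $L$ in $\suQMA^U$ genuinely precludes membership in $\cQMA^U$ via Lemma~\ref{lem:c-suQMA}. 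Given how the oracle versions are defined, this compatibility is immediate, so the proof is a one-line deduction.
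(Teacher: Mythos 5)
Your proposal is correct and matches the paper's own reasoning exactly: the corollary is stated immediately after Theorem~\ref{thm:diag-suQMA} with the remark that it follows from that theorem together with Lemma~\ref{lem:c-suQMA}. Nothing further is needed.
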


\section{Strictly uncloneable oracle BQP/qpoly}\label{sec:BQPsuq}
\para{Oracle BQP/qpoly problem:} For $L: \{0,1\}^n \to \{0,1\}$ and a Haar-random $n$-qubit state $\ket{\psi}$, let $U_{\psi,L}$  be the following oracle:
\[ U_{\psi,L}\ket{x}\ket{\phi} = \begin{cases} (-1)^{L(x)}\ket{x}\ket{\phi} & \text{ if } \ket{\phi} = \ket{\psi} \\ \ket{x}\ket{\phi} & \text{ if } \brakett{\phi}{\psi} = 0.\end{cases} \]
\begin{lemma}[\cite{AK07}]\label{lem:BQP-ub}
There is an algorithm that given query access to $U_{\psi,L}$, and $\ket{\psi}$ as advice, computes $L$ with zero error on all $x\in \{0,1\}^n$ with polynomially many queries.
\end{lemma}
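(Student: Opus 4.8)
The plan is to turn the oracle's action into a single-query phase-kickback computation. First observe that, extending linearly from its definition, $U_{\psi,L}$ is the unitary $\sum_{x}\ketbra{x}{x}\otimes\left(\Id-(1-(-1)^{L(x)})\state{\psi}\right)$: conditioned on the index register holding $x$, it acts as $\Id$ on the witness register when $L(x)=0$ and as the reflection $\Id-2\state{\psi}$ when $L(x)=1$. In both cases $\ket{\psi}$ is an eigenvector with eigenvalue $(-1)^{L(x)}$, so querying $U_{\psi,L}$ on $\ket{x}\ket{\psi}$ imprints the phase $(-1)^{L(x)}$ and returns the advice register unchanged. The only real work is converting this otherwise unobservable phase into a bit.

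Following the convention recorded in the footnote to Lemma~\ref{lem:oracleQMA} that we may assume access to a controlled version of the oracle (equivalently, a bit oracle), I would run, on input $x$: prepare a control qubit in $\ket{+}$, the index register in $\ket{x}$, and the advice register in $\ket{\psi}$; apply controlled-$U_{\psi,L}$ to obtain $\frac{1}{\sqrt{2}}\big(\ket{0}+(-1)^{L(x)}\ket{1}\big)\ket{x}\ket{\psi}$; apply a Hadamard to the control qubit, giving $\ket{L(x)}\ket{x}\ket{\psi}$; and measure the control qubit, outputting the outcome. This equals $L(x)$ with probability exactly $1$, so the algorithm is zero-error, and it uses a single (controlled) query. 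Since the advice register comes out in the state $\ket{\psi}$, one copy of the advice in fact suffices for all inputs.

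I do not expect a genuine obstacle here; the points that merit a sentence each are (i) that $U_{\psi,L}$ is a well-defined unitary, which holds because $\Id-2\state{\psi}$ is a reflection and the definition extends linearly from the orthogonality clause; (ii) that the advice $\ket{\psi}$ is not consumed, so it is legitimate to supply it only once; and (iii) that if one wishes to avoid assuming the controlled oracle, a phase oracle together with a known eigenvalue is interchangeable with a bit oracle for this purpose, as already noted in the paper's oracle conventions. Either way the query count is $O(1)$ and the error is exactly zero, which easily meets the stated bound of polynomially many queries.
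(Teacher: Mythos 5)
Your proof is correct and is the same phase-kickback argument underlying the cited lemma from Aaronson–Kuperberg: $\ket{\psi}$ is an eigenvector of $U_{\psi,L}\ket{x}\!\cdot$ with eigenvalue $(-1)^{L(x)}$, and a single controlled query followed by a Hadamard reads out that eigenvalue exactly, with the controlled-oracle convention justified by the same footnote the paper invokes for Lemma~\ref{lem:oracleQMA}. The observation that the advice register is returned intact is a nice bonus, though not needed for the stated form of the lemma.
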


First we prove that a random $L$ cannot be computed with non-negligible bias with polynomial-sized quantum advice.
\begin{lemma}\label{lem:rand-qadv}
Let $L:\{0,1\}^n\to\{0,1\}$ be a random function, and let $\ket{\psi_L}$ be a $p(n)$-qubit quantum advice state depending on $L$. Then for any unbounded-time quantum algorithm $\clQ$ that takes $\ket{\psi_L}$ as advice and does not know the function $L$ otherwise, we have,
\[ \Pr_{x,L}[\clQ(x,\ket{\psi_L}) = L(x)] \leq \frac{1}{2} +  O\left(\left(\frac{p(n)}{2^n}\right)^{1/3}\right).\]
\end{lemma}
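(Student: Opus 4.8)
The plan is to recast the pair $(\clQ,\{\ket{\psi_L}\}_L)$ as a quantum random access code that packs a uniformly random bit string into $p(n)$ qubits, and then apply Nayak's lower bound for such codes. Since there is no oracle here, the unbounded running time of $\clQ$ plays no role: the only resource is the $p(n)$-qubit advice, which can carry about $p(n)$ bits of information about $L$, whereas $L$ is a uniformly random $2^n$-bit object. Write $\delta := \Pr_{x,L}[\clQ(x,\ket{\psi_L}) = L(x)] - \tfrac12$ and assume $\delta > 0$ (otherwise there is nothing to prove); the goal is $\delta = O\big((p(n)/2^n)^{1/3}\big)$.

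The first step is to pass from the average-case advantage $\delta$ to a worst-case advantage on a large sub-domain. For $x \in \{0,1\}^n$ set $\delta_x := \Pr_L[\clQ(x,\ket{\psi_L}) = L(x)] - \tfrac12$, so $\bbE_x[\delta_x] = \delta$ and $\delta_x \le \tfrac12$. Taking $G := \{x : \delta_x \ge \delta/2\}$ and splitting $\bbE_x[\delta_x]$ over $G$ and its complement gives $\delta \le \tfrac{|G|}{2\cdot 2^n} + \tfrac{\delta}{2}$, hence $|G| \ge \delta\,2^n$. Next, write the random $L$ as the independent pair $(L|_G, L|_{\bar G})$ and encode the $|G|$-bit string $a = L|_G$ by the $p(n)$-qubit mixed state $\rho_a := \bbE_{L|_{\bar G}}\big[\state{\psi_L}\big]$ (first fixing $\clQ$'s internal randomness to its optimal value, which only increases success probabilities). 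For every $x \in G$, the two-outcome measurement implemented by $\clQ(x,\cdot)$ recovers $a_x = L(x)$ from $\rho_a$ with success probability, averaged over $a$, equal to $\tfrac12 + \delta_x \ge \tfrac12 + \tfrac{\delta}{2}$. Thus $\big(a \mapsto \rho_a,\ \{\clQ(x,\cdot)\}_{x\in G}\big)$ is an average-case, mixed-encoding quantum random access code packing $|G|$ uniform bits into $p(n)$ qubits with per-bit bias at least $\delta/2$.

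Finally, Nayak's lower bound \cite{Nayak99} states that any such code on $N$ bits into $m$ qubits with per-bit bias $\beta$ obeys $m \ge \big(1 - \H(\tfrac12 + \beta)\big)N = \Omega(\beta^2 N)$; this holds just as well for mixed encodings and average-case recovery, since it follows from $p(n) \ge \I(a : E) \ge \sum_{x\in G}\I(a_x : E) \ge |G|\big(1 - \H(\tfrac12 + \tfrac{\delta}{2})\big)$ using Fano's inequality and superadditivity of mutual information for independent messages. With $N = |G| \ge \delta\,2^n$ and $\beta = \delta/2$ this yields $p(n) \ge \Omega(\delta^2)\cdot|G| \ge \Omega(\delta^3)\cdot 2^n$, and rearranging gives $\delta = O\big((p(n)/2^n)^{1/3}\big)$. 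I do not anticipate a serious obstacle here: the only delicate points are the bookkeeping in the QRAC reduction — averaging out $L|_{\bar G}$ and $\clQ$'s randomness to get an honest encoding of $L|_G$ alone on exactly $p(n)$ qubits — and using the average-case rather than worst-case form of Nayak's bound; both are routine and affect neither the correctness nor the order of the estimate.
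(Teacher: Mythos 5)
Your proposal is correct, and it reaches the same cubic bound by a genuinely different and cleaner route than the paper. The paper's proof first does two Markov arguments (over $L$ and then over $x$) to isolate a set $S$ of functions and a set $T$ of inputs on which the per-pair bias is $\Omega(\eps)$, then amplifies the advice to $O(p(n)/\eps^2)$ qubits by giving the verifier $O(1/\eps^2)$ independent copies so that Fano can be invoked at a \emph{constant} error $1/6$, and finally compares the upper bound $\I(L:A)\le O(p(n)/\eps^2)$ against a lower bound $\I(L:A)=\Omega(\eps 2^n)$ obtained through the chain rule, a relative max-entropy step to condition on $L\in S$, and Fano. You avoid both the amplification and the conditioning on $S$: a single Markov over $x$ gives $|G|\ge \delta 2^n$, and then the standard information-theoretic proof of Nayak's bound — superadditivity of $\I(\,\cdot\,:E)$ for independent bits of $L|_G$, data processing, and Fano applied directly at bias $\delta/2$ — yields $p(n)\ge \I(L|_G:E)\ge |G|\bigl(1-h(\tfrac12+\tfrac{\delta}{2})\bigr)=\Omega(\delta^2|G|)=\Omega(\delta^3 2^n)$. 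In the paper the three powers of $\eps$ are split between the upper bound (amplification blow-up) and the lower bound (size of $T$), whereas you place all the $\delta$-dependence on the lower-bound side; this is tidier and keeps the advice at exactly $p(n)$ qubits throughout. Your notes on mixed encodings and average-case recovery are correct: the mutual-information bound $\I(a:E)\le p(n)$ and the Fano step go through without any purity or worst-case assumption, and the superadditivity $\I(a:E)\ge\sum_{x\in G}\I(a_x:E)$ holds for the independent uniform bits $a_x=L(x)$ by the chain rule together with non-negativity of quantum conditional mutual information.
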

\begin{proof}
Suppose $\clQ$ computes $L$ correctly with probability $\frac{1}{2} + \eps$, with $p(n)$-qubit advice $\ket{\psi_L}$. By Markov's inequality, the probability over all $L$ that the success probability (over $x$) of the adversary is at most $\frac{1}{2} + \frac{\eps}{2}$ is at most $\frac{\eps}{2}$. Therefore, there exists a set $S$ of functions of size at least $2^{2^n-1}\cdot\eps$ on which $\clQ$ successfully predicts $L(x)$ with probability $\frac{1}{2} + \frac{\eps}{2}$ over $x$ and internal randomness. Similarly, for the functions in $S$, there exists a set $T$ of inputs of size at least $2^{n-2}\cdot\eps$ on which $\clQ$ predicts $L(x)$ successfully with probability at least $\frac{1}{2} + \frac{\eps}{4}$. The behaviour of $\clQ$ can be modelled as follows: on getting input $x$, it performs a measurement $M^x$ on the advice state $\ket{\psi_L}$, to give its guess for $L(x)$. The success probability of $\clQ$ on $(L,x)$ from $S\times T$ can be amplified to $\frac{1}{2} + \frac{1}{3}$ by repeating the procedure $\frac{2\ln(1/3)}{\eps^2}$ times, with as many copies of $\ket{\psi_L}$, and giving the majority vote of the outcomes (note that the algorithm does not need to know that the function is in the good set $S$ to do this). Let's call the algorithm which does this $\clQ'$, and its advice state $\ket{\phi^L}$ (we're switching the $L$ from the subscript to the superscript for convenience), which is $O(p(n)/\eps^2)$ qubits.

Suppose the states $\ket{\phi^L}$ are in a register $A$, and define the CQ state
\[ \phi'_{LA} = \frac{1}{2^{2^n}}\sum_L\state{L}_L\otimes\state{\phi^L}_A.\]
This state must satisfy $\I(L:A)_{\phi'} \leq 2\log|A| = O(p(n)/\eps^2)$. Taking $\sigma_{LA}$ to be the state $\phi'_{LA}$ conditioned on $L$ being in the set $S$ we have that $\phi'_{LA} = \eps\sigma_{LA} + (1-\eps)\sigma'_{LA}$ for some $\sigma'_{LA}$. This implies $\D_\infty(\sigma_{LA}\Vert\phi'_{LA}) \leq \log(1/\eps)$. Therefore,
\begin{align}
\I(L:A)_\sigma & \leq \D_\infty(\sigma_{LA} \Vert \phi'_L\otimes\phi'_A) \nonumber \\
 & \leq \D_\infty(\sigma_{LA}\Vert\phi'_{LA}) + \D_\infty(\phi_{LA}\Vert\phi'_L\otimes\phi'_{A}) \nonumber \\
 & \leq \log(1/\eps) + \I(L:A) = \log(1/\eps) + O(p(n)/\eps^2). \label{eq:Imax}
\end{align}

On the other hand, letting $L_y$ represent the value of $L(y)$, and assuming a canonical ordering of $y\in\{0,1\}^n$, we have,
\begin{align*}
\I(L:A)_{\sigma} & = \sum_{y=1}^{2^n}\I(L_y:A|L_1\ldots L_{y-1})_{\sigma} \\
 & = \sum_{y=1}^{2^n}\left(\H(L_y|L_1\ldots L_{y-1})_{\sigma} - \H(L_y|AL_1\ldots L_{y-1})_{\sigma}\right) \\
 & \geq \sum_{y=1}^{2^n}\left(\H(L_y|L_1\ldots L_{y-1})_{\sigma} - \H(L_y|A)_{\sigma}\right) \\
 & = \H(L)_\sigma - \sum_{y\notin T}\H(L_y|A)_\sigma - \sum_{y\in T}\H(L_y|A)_\sigma.
\end{align*}
Note that $\H(L)_\sigma$ is just $\log|S| \geq 2^n - \log(1/\eps)-1$. For $y\notin T$, $\H(L_y|A) \leq 1$, since $L_y$ is just a bit. For $y\in T$, let $Z_y$ denote the outcome obtained by doing the measurement $M^y$ on $\phi^L_{A}$ (which is the same as $\sigma^L_{A}$, since only the $L$ register was conditioned on to get $\sigma$ from $\phi$). Then by the data processing inequality, we have, $\H(L_y|A)_\sigma \leq \H(L_y|Z_y)_\sigma$. On the other hand, since $Z_y=F_y$ with probability at least $\frac{5}{6}$ for $y\in T$, for these $y$ we have by Fano's inequality, $\H(L_y|Z_y)_\sigma \leq h(\frac{5}{6})$, where $h$ is the binary entropy function. This gives us,
\begin{align*}
\I(L:A)_\sigma & \geq 2^n - \log(1/\eps) - 1 - \sum_{y\notin T}1 - \sum_{y\in T}h(5/6) \\
 & \geq 2^n - \log(1/\eps) - 1 - \left(1-\frac{\eps}{4}\right)\cdot 2^n - \frac{\eps}{4}\cdot 2^n \cdot h(5/6) \\
 & = \frac{\eps}{4}\cdot 2^n(1-h(5/6)) - \log(1/\eps) - 1. 
\end{align*}
Using the upper bound on $\I(A:L)_\sigma$ from \eqref{eq:Imax}, this gives us
\[
O(p(n)/\eps^2) + \log(1/\eps) \geq \frac{\eps}{4}\cdot 2^n(1-h(5/6)) - \log(1/\eps) - 1 \quad \quad \Rightarrow \quad \quad \eps^3 \leq \frac{O(p(n))}{2^n(1-h(5/6))}.
\]
This proves the required upper bound on the average-case success probability of $\clQ$.
\end{proof}

Next, using Lemma \ref{lem:rand-qadv}, we prove the equivalent of Lemma \ref{lem:1wit} for a random language with quantum advice.
\begin{lemma}
Suppose for a random function $L:\{0,1\}^n\to\{0,1\}$, there is a quantum algorithm $\clQ$ which given oracle access to $U_{\psi,L}$ and polynomial-sized $\ket{f_{\psi,L}}$ as quantum advice, and input $x$, computes $L(x)$ correctly on average (over $x$) with $\frac{1}{s(n)}$ bias (for some polynomial $s(n)$), by making $p(n)$ queries to $U_{\psi,L}$. Then there is an algorithm $\clQ'$ which receives $\ket{f_{\psi,L}}$, makes $O(p(n)^3s(n)^2\log(1/\delta))$ queries to $U_{\psi,L}$, and prepares a state $g_{\psi,L}$ such that $\bbE_L \F(\state{\psi}, g_{\psi,L}) \geq 1-\delta$.
\end{lemma}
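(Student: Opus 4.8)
The plan is to mirror the proof of Lemma~\ref{lem:1wit}, with Lemma~\ref{lem:rand-qadv} taking over the role of the soundness gap. The point is that $U_{\psi,L}$ acts nontrivially only on the span of $\{\ket{x}\ket{\psi}\}_x$, so it is the only part of $\clQ$'s input that depends on $L$ beyond the $p(n)$-qubit advice $\ket{f_{\psi,L}}$. If we ran $\clQ$ with the same advice but against the identity oracle in place of $U_{\psi,L}$, the resulting algorithm would predict $L$ from $\ket{f_{\psi,L}}$ alone, which by Lemma~\ref{lem:rand-qadv} --- together with a routine averaging argument inside its proof, applied to both the algorithm and its output-flipped version --- achieves, for all but a negligible fraction of $L$, bias $O((p(n)/2^n)^{1/3})=o(1/s(n))$ over uniform $x$. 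Interpreting the hypothesis as saying that for all but a negligible fraction of $L$ the algorithm $\clQ$ against $U_{\psi,L}$ has bias $1/s(n)$ over uniform $x$, we conclude that for almost every $L$ the final state of $\clQ$ differs, averaged over the input $x$, by $\Omega(1/s(n))$ between oracle $U_{\psi,L}$ and oracle $\Id$.

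Next I would run the hybrid argument exactly as in Lemma~\ref{lem:1wit}, with the oracle-target register playing the role of the query register and $\ket{\psi}$ the distinguished basis vector; since $U_{\psi,L}-\Id = -2\big(\sum_{x:L(x)=1}\state{x}\big)\otimes\state{\psi}$, the same computation (using $\sum_{x:L(x)=1}\state{x}\le\Id$) bounds the squared distance of the two final states by $4p(n)\sum_{t}\bra{\psi}\Phi^t_{\psi,L}\ket{\psi}$, where $\Phi^t_{\psi,L}$ is the reduced state of the oracle-target register just before the $t$-th query. Combining, $\frac1{p(n)}\sum_t\bra{\psi}\Phi^t_{\psi,L}\ket{\psi}=\Omega(1/(p(n)^2 s(n)^2))$ on average over $x$, for almost every $L$. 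Let $\clQ''$ sample a uniform input $x$ and a uniform $t\in[p(n)]$, run $\clQ$ for $t$ steps against $U_{\psi,L}$, and output its oracle-target register; its purification $\ket{\Phi'^f_{\psi,L}}$ then satisfies $\bra{\Phi'^f_{\psi,L}}(\state{\psi}\otimes\Id)\ket{\Phi'^f_{\psi,L}}=\Omega(1/(p(n)^2 s(n)^2))$ for almost every $L$. Finally, applying fixed-point search (Theorem~\ref{thm:search}) with projector $\state{\psi}\otimes\Id$ and initial state $\ket{\Phi'^f_{\psi,L}}$, with $O(p(n)^2 s(n)^2\log(1/\delta))$ reflections we obtain $g_{\psi,L}$ with $\F(\state{\psi},g_{\psi,L})\ge1-\delta$ for almost every $L$, hence $\bbE_L\F(\state{\psi},g_{\psi,L})\ge1-\delta$; each reflection about $\ket{\Phi'^f_{\psi,L}}$ costs $O(p(n))$ queries (re-running $\clQ''$ and its inverse), for a total of $O(p(n)^3 s(n)^2\log(1/\delta))$.

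The step that genuinely differs from Lemma~\ref{lem:1wit}, and which I expect to need the most care, is implementing fixed-point search's reflection about $\ket{\psi}$ on the oracle-target register: here the oracle is not the clean $\Id-2\state{\psi}$ but $U_{\psi,L}$, which reflects about $\ket{\psi}$ only on the input branches with $L(x)=1$. I would realize this reflection with a two-query gadget using a fresh $n$-qubit ancilla initialized to $\ket{+}=2^{-n/2}\sum_x\ket{x}$: apply $U_{\psi,L}$ (which sends $\ket{+}\ket{\psi}$ to $2^{-n/2}\sum_x(-1)^{L(x)}\ket{x}\ket{\psi}$ and fixes $\ket{+}\ket{\psi^\perp}$), apply the reflection about $\ket{+}$ on the ancilla, apply $U_{\psi,L}$ again, and uncompute the ancilla; since $2^{-n/2}\sum_x(-1)^{L(x)}\ket{x}$ is within $\widetilde O(2^{-n/2})$ of being orthogonal to $\ket{+}$ for typical $L$, this implements $\Id-2\state{\psi}$ on the target up to error $\widetilde O(2^{-n/2})$ per call, with the atypical $L$ (a negligible fraction) absorbed into the ``bad'' set. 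An alternative would be to first learn one input $x_0$ with $L(x_0)=1$ (by amplifying $\clQ$ and running it on random inputs) and then use $U_{\psi,L}$ with the input register pinned to $\ket{x_0}$; either way the extra queries are of lower order. The rest is bookkeeping: checking that the several ``negligible fractions of $L$'' --- from Lemma~\ref{lem:rand-qadv}, from atypical $L$, and from fixed-point search succeeding --- combine to leave almost every $L$ good, so that the claimed average-over-$L$ fidelity bound holds.
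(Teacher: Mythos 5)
Your proposal follows the paper's own proof at every major step: compare the final states of $\clQ$ under $U_{\psi,L}$ and $\Id$, invoke Lemma~\ref{lem:rand-qadv} to control the $\Id$-branch, run the hybrid method exactly as in Lemma~\ref{lem:1wit} (with the second oracle-input register as the distinguished register, using $U_{\psi,L}-\Id = -2\bigl(\sum_{x:L(x)=1}\state{x}\bigr)\otimes\state{\psi}$ and $\sum_{x:L(x)=1}\state{x}\le\Id$), define $\clQ''$ by sampling $(x,t)$ uniformly, and amplify via Theorem~\ref{thm:search}. Where you genuinely add value is in spotting two things the paper's phrase ``this can be amplified as before'' sweeps over. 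First, in Lemma~\ref{lem:1wit} the oracle $U_\psi$ \emph{is} the reflection $\Id-2\state{\psi}$ that Theorem~\ref{thm:search} requires, whereas here $U_{\psi,L}$ only conditionally reflects the target when $L(x)=1$; your two-query gadget with a fresh ancilla in $\ket{+}$ (apply $U_{\psi,L}$, reflect about $\ket{+}$, apply $U_{\psi,L}$, discard) does implement a $\widetilde O(2^{-n/2})$-approximate reflection about $\state{\psi}$ for typical $L$, and this is a correct and necessary fix that the paper does not supply. Second, you correctly note that fixed-point search needs a per-$L$ lower bound on the initial overlap with $\state{\psi}$, not merely the average-over-$L$ bound that the hybrid computation yields directly; the paper's statement ``on expectation over $L$\ldots this can be amplified'' is loose on this point. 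One caveat on your own fix: the ``for all but a negligible fraction of $L$'' bound on the $\Id$-branch bias does \emph{not} follow from the statement of Lemma~\ref{lem:rand-qadv} together with Markov (the per-$L$ bias can be negative, so a one-sided Markov argument only gives a $1-\Omega(1/s(n))$ tail, not a negligible one). What actually does the job is re-running the information-theoretic argument inside the proof of Lemma~\ref{lem:rand-qadv} at the threshold $\Theta(1/s(n))$, which shows via the $\D_\infty(\sigma\Vert\phi')\le\log(1/\Pr[S])$ step that the set $S$ of ``bad'' $L$ has measure $2^{-\Omega(2^n/s(n))}$. Your parenthetical ``together with a routine averaging argument inside its proof'' gestures at this, but ``routine'' undersells it and the appeal to the output-flipped version is a red herring; you should make the tail-bound argument explicit.
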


\begin{proof}
Let $\ket{\Phi^f_{L,\psi,x,U}}$ be the overall final state of the algorithm $\clQ$ after receiving the advice $\ket{f_{\psi,L}}$ and making queries to $U_{\psi,L}$. Let $\ket{\Phi^f_{L,\psi,\Id,x}}$ be the corresponding state if the oracle $U_{\psi,L}$ were replaced by the identity oracle instead. Note that giving access to the identity oracle is equivalent to not giving the algorithm access to any oracle at all, and just giving it quantum advice $\ket{f_{\psi,L}}$. Therefore, we are in the situation described by Lemma \ref{lem:rand-qadv}: the state $\ket{\Phi^f_{\psi,L,\Id,x}}$ can be used to compute $L(x)$ with average-case (over $x$ and $L$) probability at most $\frac{1}{2}+\negl(n)$. On the other hand, by assumption, the state $\ket{\Phi^f_{\psi,L,U,x}}$ can be used to compute $L(x)$ with average-case success (over $x$, for every $L$) probability at least $\frac{1}{2} + \frac{1}{s(n)}$. Therefore, we must have,
\[ \bbE_{L,x}\Vert\ket{\Phi^f_{\psi,L,U,x}} - \ket{\Phi^f_{\psi,L,\Id,x}}\Vert_2 = \Omega(1/s(n)).\]

On the other hand, the oracles $U_{\psi,L}$ and $\Id$ only differ when the second query register is $\ket{\psi}$. So by the same type of hybrid argument as the proof of Lemma \ref{lem:1wit}, we have,
\[ \Vert\ket{\Phi^f_{\psi,L,U,x}} - \ket{\Phi^f_{\psi,L,\Id,x}}\Vert_2 \leq 2 \sum_{t=1}^{p(n)}\sqrt{\bra{\psi}\Phi^t_{\psi,L,x}\ket{\psi}}\]
where $\Phi^t_{\psi,L,x}$ is the reduced state of the second query register of $\clQ$ just before it makes the $t$-th query to $U_{\psi,L}$. This gives us,
\[ \bbE_{L,x}\left[\frac{1}{p(n)}\sum_{t=1}^{p(n)}\bra{\psi}\Phi^t_{\psi,L,x}\ket{\psi}\right] = \Omega\left(\frac{1}{p(n)^2s(n)^2}\right).\]
Now we can construct an algorithm $\clQ''$ which when given access to oracle $U_{\psi,L}$ and advice state $\ket{f_\psi}$, samples uniformly $x\in \{0,1\}^n$ and $t\in [p(n)]$, runs $\clQ$ on input $x$ for $t$ steps, and measures its second query register. On expectation over $L$, the fidelity between the output of $\clQ'$ and $\ket{\psi}$ is $\frac{1}{p(n)^2s(n)^2}$, and this can be amplified as before to get the algorithm $\clQ'$ which achieves the required fidelity.
\end{proof}

Using the above lemma, the following lemma is proved in the exact same way as Lemma \ref{lem:kwit}.

\begin{lemma}
Suppose for a random language $L:\{0,1\}^n\to\{0,1\}$, there is a polynomial-sized family of states $\ket{f_{\psi,L}}$ on registers $B_1, \ldots, B_k$, and algorithms $\clB_1, \ldots, \clB_k$ which each have access to $U_{\psi,L}$, and $\clB_i$ takes the register $B_i$ as its quantum advice. If $\clB_1, \ldots, \clB_k$ satisfy joint average-case correctness for $L$ with bias $\frac{1}{s(n)}$ with this advice, then there are algorithms $\clB'_1,\ldots, \clB'_k$ that receive the state $\ket{f_{\psi,L}}$, act on registers $B_1, \ldots, B_k$ respetively, make $O(p(n)^3s(n)^2\log(k/\delta))$ queries to $U_{\psi,L}$, and jointly prepare a state $g_{\psi,L}$ such that $\bbE_L\F(\state{\psi}^{\otimes k},g_{\psi,L}) \geq 1 - \delta$ (where it is understood that the $i$-th copy of $\ket{\psi}$ is on the same register as the output of $\clB'_i$).
\end{lemma}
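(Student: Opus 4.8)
The plan is to run the argument of Lemma~\ref{lem:kwit} essentially verbatim, with $U_{\psi,L}$ in place of $U_\psi$ and with an expectation over the random language $L$ carried through every estimate. First I would reduce the joint hypothesis to a per-receiver one: joint average-case correctness with bias $1/s(n)$ means $\Pr_{x^1,\dots,x^k}[\forall i:\clB_i\text{ outputs }L(x^i)]\ge\frac12+\frac1{s(n)}$, so marginalising over the inputs $x^j$ with $j\ne i$ shows that each $\clB_i$, run with the reduced state of $\ket{f_{\psi,L}}$ on $B_i$ as its (generally mixed) advice and with oracle $U_{\psi,L}$, computes $L$ on a uniformly random input with bias at least $1/s(n)$. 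Mixedness of the advice is harmless: the hybrid estimate in the previous lemma only tracks the query weight placed on $\ket\psi$ during $\clB_i$'s run, and Lemma~\ref{lem:rand-qadv} constrains performance purely through the dimension of the advice register; equivalently, one can hand $\clB_i$ a purification of $B_i$.

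I would then apply the previous lemma (the advice analogue of Lemma~\ref{lem:1wit}) to each $\clB_i$ separately, obtaining an algorithm $\clB'_i$ acting on $B_i$, making $O(p(n)^3 s(n)^2\log(1/\eta))$ queries to $U_{\psi,L}$, and preparing on its output register a state $\rho_{i,L}$ with $\bbE_L\F(\state\psi,\rho_{i,L})\ge 1-\eta$ for a parameter $\eta=\eta(\delta,k)$ to be fixed below. Since $\clB'_i$ touches only $B_i$ and its own workspace, the measurements $M_i$ that project each $\clB'_i$'s output register onto $\ket\psi$ act on disjoint registers, hence commute, and applying them in sequence to the joint output $g_{\psi,L}$ the gentle measurement lemma gives that all of them succeed with probability at least $1-\sum_i\sqrt{\delta_{i,L}}$, where $\delta_{i,L}=1-\bra\psi\rho_{i,L}\ket\psi\le 2(1-\F(\state\psi,\rho_{i,L}))$. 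Taking the expectation over $L$ and using $\bbE_L\sqrt{\delta_{i,L}}\le\sqrt{\bbE_L\delta_{i,L}}\le\sqrt{2\eta}$, the all-succeed probability is at least $1-k\sqrt{2\eta}$ in expectation over $L$; since $M_1\otimes\dots\otimes M_k$ is exactly the projector onto $\state\psi^{\otimes k}$ and $\sqrt x\ge x$ on $[0,1]$, this yields $\bbE_L\F(\state\psi^{\otimes k},g_{\psi,L})\ge 1-k\sqrt{2\eta}$. Choosing $\eta=\delta^2/(2k^2)$ makes this at least $1-\delta$ and turns the query count into $O(p(n)^3 s(n)^2\log(k/\delta))$, matching the statement.

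The step I expect to require the most care is keeping the expectation over the random $L$ under control throughout: because Lemma~\ref{lem:rand-qadv} --- the fact that makes the verifiers place nontrivial query weight on $\ket\psi$ in the first place --- is inherently a statement about a random $L$, every quantitative bound above holds only on average over $L$, and one has to push $\bbE_L$ carefully through linearity, through concavity of the square root, and through the gentle measurement lemma, as indicated. Inherited from the single-receiver lemma is the subtlety that the amplitude amplification of Theorem~\ref{thm:search} is an instance-wise procedure needing the initial overlap with $\ket\psi$ to exceed a fixed threshold, so where a per-$L$ guarantee is actually needed one passes, by a reverse-Markov argument, to the non-negligible fraction of $L$ whose overlap is above threshold --- which is all that the downstream diagonalization requires. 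One should also note that the reflection $\Id-2\state\psi$ called for by the fixed-point search is available here: fixing the input register of $U_{\psi,L}$ to any $x_0$ with $L(x_0)=1$ realises exactly $\Id-2\state\psi$ on the target register (on the orthogonal complement of $\ket\psi$ the oracle is the identity), while the reflection about $\clB'_i$'s initial state is obtained, as in Lemma~\ref{lem:1wit}, by re-running the sampling subroutine a constant number of times.
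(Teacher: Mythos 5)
Your proof is correct and takes the same approach as the paper, which simply notes that this lemma is ``proved in the exact same way as Lemma~\ref{lem:kwit}''; your argument reproduces that combination faithfully, and correctly pushes the expectation over $L$ through Jensen's inequality and the gentle measurement bound, with a clean parameter choice $\eta=\delta^2/(2k^2)$ that recovers the claimed $O(p(n)^3 s(n)^2\log(k/\delta))$ query count. One caveat on a side remark you make: realising the reflection $\Id - 2\state{\psi}$ by ``fixing an input $x_0$ with $L(x_0)=1$'' is not directly available to the algorithm, which does not know $L$ and so cannot identify such an $x_0$; some additional device is required (for instance, observing that the fixed-point iteration is a global-phase no-op when the oracle at a chosen $x_0$ acts trivially, so that repeating with several independently random $x_0$'s fails only on the exponentially small fraction of $L$ that vanishes on all of them), but this concerns the preceding single-receiver lemma rather than the $k$-receiver combination proved here.
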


\begin{remark}
Since in our definition of joint average-case correctness, the algorithms $\clB_1, \ldots, \clB_k$ receive independent inputs $x_1, \ldots, x_k$, in the proof of the above lemma, the algorithms $\clB'_1, \ldots, \clB'_k$ can sample inputs randomly, then run $\clB_1, \ldots, \clB_k$ on their sampled inputs, and measure the query registers to get $\ket{\psi}$. If we required the distribution to be identical rather than independent for joint-average case correctness instead, they would have to pre-share randomness in order to be able to sample the same random input, and then run the algorithm on that input, to be able to prepare $\ket{\psi}^{\otimes k}$. This is the only difference in the identical input case.
\end{remark}

The following theorem is proved in the same way as Theorem \ref{thm:query-suQMA}, and in fact it is true for a random language $L:\{0,1\}^n \to \{0,1\}$, rather than just worst-case as in the statement of the theorem (we only really need the worst case statement, which is why we state it that way).
\begin{theorem}\label{thm:query-suBQP}
For a polynomial $p(n)$, there is a $k=O(p(n)/n)$ such that any algorithm $\clA$ that takes in a $p(n)$-qubit quantum advice and outputs a polynomial-sized state on registers $B_1, \ldots, B_k$ after making polynomially many queries, and algorithms $\clB_1, \ldots, \clB_k$ that take in the registers $B_1, \ldots, B_k$ as their quantum advice, and make polynomially many queries, there exists an $L: \{0,1\}^n \to \{0,1\}$ and an $n$-qubit state $\ket{\psi}$ such that $\clB_1, \ldots, \clB_k$ cannot satisfy joint average-case correctness for $L$ with oracle access to $U_{\psi,L}$ and the state provided by $\clA$ as advice.
\end{theorem}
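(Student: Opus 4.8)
The plan is to follow the proof of Theorem~\ref{thm:query-suQMA} essentially line by line, making two substitutions: the QMA rigidity lemmas get replaced by their quantum-advice counterparts (the analogue of Lemma~\ref{lem:kwit} proved above from Lemma~\ref{lem:rand-qadv}), and the reflection oracle $U_\psi$ gets replaced by the structured oracle $U_{\psi,L}$. First I fix $k$ to be the smallest integer with $p(n)\le\lfloor\log d(n/2,k-1)\rfloor$, so that $k(n)=O(p(n)/n)$ exactly as before. Arguing by contradiction, suppose that for some family $\{\ket{f'_{\psi,L}}\}$ of $p(n)$-qubit candidate advice states, every $L\colon\{0,1\}^n\to\{0,1\}$ and every $n$-qubit $\ket\psi$, the algorithms $\clB_1,\ldots,\clB_k$, given oracle $U_{\psi,L}$ and the registers $B_1,\ldots,B_k$ produced by $\clA(1^n,\ket{f'_{\psi,L}})$, jointly satisfy average-case correctness for $L$ with bias at least $1/s(n)$. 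By the $k$-receiver analogue of Lemma~\ref{lem:kwit} with error parameter $\delta=1/100$, there are polynomial-query algorithms $\clB'_1,\ldots,\clB'_k$ acting on $B_1,\ldots,B_k$ that jointly prepare a state $g_{\psi,L}$ with $\bbE_L\,\F(\state\psi^{\otimes k},g_{\psi,L})\ge 99/100$ for every $\ket\psi$. Let $\clB$ denote the composite $q(n)$-query algorithm obtained by running $\clA$ and then $\clB'_1,\ldots,\clB'_k$ on $\ket{f'_{\psi,L}}$, and write $\ket{\Phi^f_{\psi,L}}$ for its final state with oracle $U_{\psi,L}$; note that the reduced state of $\ket{\Phi^f_{\psi,L}}$ on the output registers is $g_{\psi,L}$.

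Next I establish the analogue of Lemma~\ref{lem:subspace} for $U_{\psi,L}$. Writing $U_{\psi,L}=\Id-2(\Pi_L\otimes\state\psi)$ with $\Pi_L=\sum_{x:\,L(x)=1}\state x$, the oracles $U_{\psi,L}$ and $U_{S,L}:=\Id-2(\Pi_L\otimes S)$ differ only on basis states $\ket x\ket b$ with $L(x)=1$ and $\ket b\in S\setminus\{\ket\psi\}$, so the hybrid argument of Lemma~\ref{lem:subspace}, run on the second query register, shows that for a uniformly random $2^{n/2}$-dimensional subspace $S$ of the $n$-qubit Hilbert space containing $\ket\psi$ one has $\bbE_S\,\bigl\Vert\ket{\Phi^f_{\psi,L}}-\ket{\Phi^f_{S,L}}\bigr\Vert_2^2\le 4q(n)^2 2^{-n/3}$, where $\ket{\Phi^f_{S,L}}$ is the final state of $\clB$ run on oracle $U_{S,L}$ from the same initial advice. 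Since $U_{S,L}$ is implementable with no queries given $L$ and a classical description of $S$, I define $T_{S,L}$ to be the CPTP map that runs $\clB$ with oracle $U_{S,L}$ and traces out everything except the output registers of $\clB'_1,\ldots,\clB'_k$, so that $T_{S,L}(\state{f'_{\psi,L}})=\Phi^f_{S,L}$. Combining the bound above with Uhlmann's theorem exactly as in Theorem~\ref{thm:query-suQMA} yields, for each $\ket\psi$ and each $L$,
\[ \bbE_{S\ni\psi}\bigl[\bra\psi^{\otimes k}T_{S,L}(\state{f'_{\psi,L}})\ket\psi^{\otimes k}\bigr]\ \ge\ \F\bigl(\state\psi^{\otimes k},g_{\psi,L}\bigr)^2-O\bigl(q(n)2^{-n/6}\bigr). \]

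To finish I average over $\ket\psi$ and over $L$ and pass to a single fixed pair $(S,L^\ast)$. Parametrise a $2^{n/2}$-dimensional subspace as the image of a Haar-random isometry $V$ and take the $k$-design $\{V\ket{\theta_j}\}_{j=1}^N$ for it, where $\{\ket{\theta_j}\}_{j=1}^N$ is a fixed $(n/2)$-qubit $k$-design; then each $V\ket{\theta_j}$ is Haar-distributed with $\mathrm{im}(V)$ a uniformly random subspace containing it, so by permutation symmetry and the previous display,
\begin{align*}
\bbE_V\,\bbE_L\!\left[\frac1N\sum_{j=1}^N\bra{V\theta_j}^{\otimes k}T_{S,L}\bigl(\state{f'_{V\theta_j,L}}\bigr)\ket{V\theta_j}^{\otimes k}\right]
&\ \ge\ \bbE_\psi\,\bbE_L\bigl[\F(\state\psi^{\otimes k},g_{\psi,L})^2\bigr]-o(1) \\
&\ \ge\ \bbE_\psi\bigl[(\bbE_L\F(\state\psi^{\otimes k},g_{\psi,L}))^2\bigr]-o(1)\ \ge\ \tfrac{98}{100}
\end{align*}
for all sufficiently large $n$, using Jensen's inequality and the rigidity bound. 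Hence there is an isometry $V$ (hence a subspace $S$) and a function $L^\ast$ for which $T_{S,L^\ast}$, a CPTP map depending only on $S$ and $L^\ast$, clones the $k$-design $\{V\ket{\theta_j}\}$ of $S$ with average success at least $\tfrac{98}{100}$, starting from the states $\{f'_{V\theta_j,L^\ast}\}$, which have at most $p(n)\le\lfloor\log d(n/2,k-1)\rfloor\le\lfloor\log d(n/2,k)\rfloor$ qubits. This contradicts Corollary~\ref{cor:opt-clone}, whose bound $\tfrac{k+1}{2^{n/2}+k}$ is exponentially small. Therefore the contradiction hypothesis fails, i.e.\ there exist an $L$ and an $n$-qubit $\ket\psi$ with which $\clB_1,\ldots,\clB_k$ do not jointly satisfy average-case correctness given oracle $U_{\psi,L}$ and $\clA$'s output as advice.

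The step I expect to be the main obstacle is the bookkeeping around the two layers of averaging. Unlike in Theorem~\ref{thm:query-suQMA}, where the rigidity statement is worst-case over $\ket\psi$, the advice-setting rigidity lemma only controls the cloning fidelity \emph{on average over} the random function $L$, so one must interleave the averages over $L$ and over the random subspace $S$ (via Jensen's inequality and permutation symmetry of the $k$-design) in order to land on one fixed pair $(S,L^\ast)$ at which the impossibility bound of Corollary~\ref{cor:opt-clone} can be invoked. By contrast, the hybrid analysis for $U_{\psi,L}$ is routine once one observes that $U_{\psi,L}=\Id-2(\Pi_L\otimes\state\psi)$ is a reflection about $\mathrm{span}\{\ket x\ket\psi:L(x)=1\}$, so that replacing $\state\psi$ by the projector onto a subspace $S$ behaves just as in Lemma~\ref{lem:subspace}, and the cloner can implement $U_{S,L}$ itself once it knows $L$ and $S$.
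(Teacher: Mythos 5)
Your proof is correct and follows the paper's intended approach: the paper offers no explicit argument for this theorem beyond stating that it ``is proved in the same way as Theorem~\ref{thm:query-suQMA},'' and you have correctly identified and resolved the one genuine new wrinkle, namely that the advice-setting rigidity lemma only controls cloning fidelity on average over $L$ (rather than worst-case over $\ket\psi$ as in the QMA case), so one must interleave the averages over $L$, over $\ket\psi$, and over the random subspace $S$ — via Jensen's inequality and a reparametrisation of the Haar measures — before landing on a single fixed pair $(S,L^\ast)$ at which Corollary~\ref{cor:opt-clone} gives the contradiction. One small technical caveat: your reparametrisation of $\bbE_\psi\bbE_{S\ni\psi}$ as an average over a Haar-random isometry $V$ and a uniform $k$-design index $j$ requires $S$ to be Haar-distributed over $2^{n/2}$-dimensional subspaces containing $\ket\psi$, whereas Lemma~\ref{lem:subspace} is proved for a specific discrete distribution (random subsets of a fixed orthonormal basis). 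The hybrid computation adapts immediately to the Haar case (and in fact gives the slightly better bound $\bbE_S\bra b\Pi_S\ket b\le 2^{-n/2}$ for $\ket b\perp\ket\psi$), but this adaptation should be stated rather than invoked as Lemma~\ref{lem:subspace} verbatim; similarly, the fact that for Haar $V$ and uniform $j$ the conditional law of $\mathrm{im}(V)$ given $V\ket{\theta_j}=\ket\psi$ is Haar over subspaces containing $\ket\psi$ (by invariance of the conditional law of $V$ under right-multiplication by unitaries fixing $\ket{\theta_j}$) deserves a line of justification.
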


Finally, using Theorem \ref{thm:query-suBQP}, we can make a diagonalization argument.
\begin{theorem}\label{thm:diag-BQPsuq}
There exists a quantum oracle $U$ and a language $L$ such that $L \in \BQPsuq^U$.
\end{theorem}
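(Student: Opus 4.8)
The plan is to mirror the diagonalization in the proof of Theorem~\ref{thm:diag-suQMA}, now using the query-complexity statement of Theorem~\ref{thm:query-suBQP} in place of Theorem~\ref{thm:query-suQMA} and the upper bound of Lemma~\ref{lem:BQP-ub} in place of Lemma~\ref{lem:oracleQMA}. First I would enumerate all tuples $s = (p, \clA, \clB_1, \ldots, \clB_k)$, where $p(n) = n^c + c$ ranges over the countably many candidate advice-size polynomials, $k = k(p) = O(p(n)/n)$ is the function supplied by Theorem~\ref{thm:query-suBQP}, $\clA$ is an oracle quantum algorithm taking $1^n$ and a $p(n)$-qubit state and outputting a state on registers $B_1 \ldots B_k$, and $\clB_1, \ldots, \clB_k$ are oracle BQP algorithms each taking an $n$-bit input and one register $B_i$ as advice. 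This set is countable, so fix an injection $N$ from it into a sufficiently sparse set of sufficiently large integers, and process the tuples in increasing order of $N(s)$.

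The oracle $U$ will be a direct sum of blocks $U_n$: for each $n$, either $U_n = \Id_n$ (and then no string of length $n$ is in $L$), or $U_n = U_{\psi_n, L_n}$ for some $n$-qubit state $\ket{\psi_n}$ and some $L_n : \{0,1\}^n \to \{0,1\}$ (and then $x \in L$ iff $L_n(x) = 1$). All lengths not in the image of $N$, all lengths below the first $N(s)$, and the ``reserved'' lengths lying strictly between $N(s)$ and the running-time bound of the algorithms in $s$, are set to the $\Id_n$ branch at the time $s$ is processed. When processing $s$ at length $n = N(s)$, every oracle block at a length $\ne n$ that its algorithms can touch has then been decided, so hardwiring those queries turns $\clA, \clB_1, \ldots, \clB_k$ into query algorithms acting only on the single oracle $U_n$. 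Theorem~\ref{thm:query-suBQP} then yields an $L_n$ and $\ket{\psi_n}$ such that, with $U_n = U_{\psi_n, L_n}$ and the state output by $\clA$ handed to $\clB_1, \ldots, \clB_k$ as their joint advice, the $\clB_i$ fail joint average-case correctness at length $n$; I make exactly this choice.

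Once the construction is complete, $L \in \BQPq^U$ by Lemma~\ref{lem:BQP-ub}: on inputs of length $n$ hand over $\ket{\psi_n}$ (or an arbitrary state if $U_n = \Id_n$); the algorithm there computes the slice $L \cap \{0,1\}^n$ with zero error and polynomially many queries to $U_n$. For the strict-uncloneability requirement, take any polynomial-sized advice family $\{\ket{\phi_n}\}_n$, pad it to a uniform size $p(n) = n^c + c$, and set $k = k(p)$. For any polynomial-time cloner $\clA$ and verifiers $\clB_1, \ldots, \clB_k$, the tuple $s = (p, \clA, \clB_1, \ldots, \clB_k)$ was processed at length $N(s)$, where joint average-case correctness was made to fail, so $(\clB_1, \ldots, \clB_k, \{\clA(1^n, \ket{\phi_n})\}_{n\in\bbN})$ does not satisfy joint average-case correctness for $L$; hence $L \in \BQPsuq^U$.

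I expect no genuinely new difficulty here: the only delicate point is the bookkeeping of the diagonalization, namely spacing the image of $N$ far enough apart (and reserving and zeroing out the intervening lengths) that the reduction to the single-oracle query model underlying Theorem~\ref{thm:query-suBQP} is legitimate. This is the same issue already handled in Theorem~\ref{thm:diag-suQMA}; the one substantive change is that $L$ is no longer unary, so at length $n$ the slice $L \cap \{0,1\}^n$ is the arbitrary subset cut out by the function $L_n$ from Theorem~\ref{thm:query-suBQP}, and the BQP/qpoly advice used in the completeness direction is the corresponding $\ket{\psi_n}$.
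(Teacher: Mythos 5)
Your proposal is correct and takes essentially the same approach as the paper's proof, which likewise reruns the diagonalization of Theorem~\ref{thm:diag-suQMA} with Theorem~\ref{thm:query-suBQP} replacing Theorem~\ref{thm:query-suQMA} and Lemma~\ref{lem:BQP-ub} supplying membership in $\BQPq^U$. Your extra care about sparsifying the image of $N$ and zeroing out the reserved intervening lengths is a correct spelling-out of bookkeeping that the paper leaves implicit, and you correctly identify the one substantive change, namely that $L$ is no longer unary but is instead the arbitrary slice $L_n$ furnished by Theorem~\ref{thm:query-suBQP}.
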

\begin{proof}
This proof works the same way as Theorem \ref{thm:diag-suQMA}: we consider tuples $(p,\clA, \clB_1, \ldots, \clB_k)$, where $p(n)$ is the polynomial growth rate of the quantum advice, and $k$ is the function of $p$ given by Theorem \ref{thm:query-suBQP}. For a tuple $s=(p,\clA, \clB_1, \ldots, \clB_k)$, let $N(s)$ be the natural number corresponding to it. We fix the $L\cap \{0,1\}^{N(s)}$ to be the function $L_{N(s)}: \{0,1\}^{N(s)}\to\{0,1\}$ (thought of as a language) that $\clA, \clB_1, \ldots, \clB_k$ fail on when the oracle is $U_{\psi, L_{N(s)}}$, due to Theorem \ref{thm:query-suBQP}. The oracle at $N(s)$ is fixed to be $U_{\psi, L_{N(s)}}$. By Lemma \ref{lem:BQP-ub}, this language overall is in $\BQPq$, so by the diagonalization construction, it is in $\BQPsuq$.
\end{proof}

\section{Strictly uncloneable FEQP/qpoly}\label{sec:FEQPsuq}
\subsection{One-way communication complexity}
In the one-way communication complexity setting, Alice and Bob want to jointly compute a function or relation $R \subseteq \clX\times\clY\times\clZ$, for which Alice gets an input from $\clX$ and Bob gets an input from $\clY$. Alice sends a classical or quantum message depending on her input to Bob, and Bob is required to product a valid output for the relation for both of their inputs, using Alice's message. We call the total number of qubits in Alice's message for a particular protocol the communication complexity of that protocol.

We use $\D^{A\to B}(R)$ to denote the deterministic one-way communication complexity of $R$, i.e, the minimum communication complexity of a deterministic one-way communication protocol that computes $R$. $\D^{A \to B}_\eps(R,p)$ for some probability distribution $p$ on $\clX \times \clY$ is the distributional one-way  communication complexity with error $\eps$, i.e, the minimum communication of a deterministic one-way protocol that computes $R$ with probability at least $1-\eps$ over inputs from $p$. The randomized one-way communication complexity $\R^{A\to B}_\eps(R)$ with error $\eps$ is the minimum communication of a one-way randomized protocol that computes $R$ with probability at least $1-\eps$ on every input from $\clX\times\clY$. We allow Alice and Bob to share public coins in this definition. By Yao's lemma, $\R_\eps^{A\to B}(R) = \max_p\D^{A\to B}_\eps(R,p)$, so $\D^{A\to B}_\eps(R,p)$ with respect to any $p$ provides a lower bound on $\R^{A\to B}_\eps(R)$. The quantum one-way communication complexity $\Q^{A\to B}_\eps(R)$ with error $\eps$ is the minimum quantum communication complexity of a one-way quantum protocol that computes $R$ with probability at least $1-\eps$ on every input. For quantum communication complexity, we can in general consider models where Alice and Bob share entanglement, but in all the protocols we will be talking about here, they will not share prior entanglement.

\textbf{$k$-receiver one-way communication.} We will also consider a non-standard $k$-receiver one-way communication setting, where there is one sender Alice, and multiple receivers $\clB_1, \ldots, \clB_k$ who are separated. Alice gets a single input from $\clX$, and each of $\clB_1, \ldots, \clB_k$ get independent inputs from $\clY$. They are required to produce a valid output for $R$ for Alice's input, and each of their respective inputs. Alice sends a message to each of $\clB_1, \ldots, \clB_k$, and in the quantum case, the message registers may be entangled between them. But the receivers are required to produce their outputs only acting on the part of the overall message that Alice sent to them; in the quantum case this means that the measurements performed by the $k$ receivers are in tensor product. This $k$-receiver one-way model is similar to the `asymmetric direct sum' model considered in \cite{gilboa2024data} except for one critical difference: in their model, there are multiple receivers with different inputs, but they are not separated, and can act on the same registers.

Worst-case joint error or joint correctness in the $k$-receiver model is defined similar to how we've defined worst-case joint correctness elsewhere: we require that the probability that all the players output a valid output for their respective inputs with probability at least $1-\eps$. However, in the $k$-receiver case, we will always be considering perfectly correct protocols, so the joint correctness probability will always be $1$.

\subsection{One-way communication for Hidden Matching}
\para{Hidden Matching problem:} In the Hidden Matching problem ($\HMn$), Alice has input a boolean function $f:\bits^n \rightarrow \bits$ and Bob has input $x \in \{0,1\}^n$. The valid outputs for the relation are $(y, f(y)\oplus f(y\oplus x))$, for any $y\in \{0,1\}^n$.

We will now prove a result about the min-entropy of the output distribution of a one-way quantum protocol for parallel-repeated $\HMn$. To do so, we will need a result about the Hidden Matching problem due to \cite{BJK08}, as well as a result relating quantum and deterministic one-way communication complexities of boolean functions, due to \cite{Aar05, KP14}. The proof of this result will be very similar to the proof of Theorem 8.4 in \cite{YZ22}.
\begin{theorem}[\cite{BJK08}]\label{thm:HM-lb}
For the Hidden Matching problem, $\Q^{A\to B}_0(\HMn) =n$, but for $\delta<\frac{1}{8}$, we have $\D^{A\to B}_{\delta}(\HMn, D_k\otimes U) = \Omega(2^{n/2}) - k$, where $D_k\otimes U$ is the product of a distribution with $2^n-k$ min-entropy on $f$, and the uniform distribution on $x$. In particular, for $k=\poly(n)$, the distributional communication complexity is $\Omega(2^{n/2})$.
\end{theorem}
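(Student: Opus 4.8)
The plan is to treat the two bounds separately. For the upper bound $\Q^{A\to B}_0(\HMn)\le n$: on input $f$ Alice sends the $n$-qubit state $\ket{\psi_f}=2^{-n/2}\sum_{z\in\bits^n}(-1)^{f(z)}\ket{z}$. On input $x$ (if $x=0^n$ Bob outputs $(0^n,0)$, which is always valid, so assume $x\ne 0^n$) Bob measures in the orthonormal basis $\{(\ket{y}\pm\ket{y\oplus x})/\sqrt2\}$, with one representative $y$ per pair $\{y,y\oplus x\}$. Grouping the sum over $z$ into these pairs, $\ket{\psi_f}=2^{-(n-1)/2}\sum_{\mathrm{pairs}}(-1)^{f(y)}(\ket{y}+(-1)^{f(y)\oplus f(y\oplus x)}\ket{y\oplus x})/\sqrt2$, so the measurement returns a uniformly random pair $\{y,y\oplus x\}$ together with the sign $(-1)^{f(y)\oplus f(y\oplus x)}$, from which Bob reads off $(y,f(y)\oplus f(y\oplus x))$ with certainty. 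The matching lower bound $\Q^{A\to B}_0(\HMn)\ge n$ is standard (and is not needed for the applications, which use only the upper bound), so I would just cite it.

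For the distributional lower bound it suffices to show that any deterministic one-way protocol, with $c$-bit message function $A$ (Alice) and output function $B$ (Bob, mapping a message and $x$ to a pair-and-bit), that is correct with probability $\ge 1-\delta$ over $f\sim D_k$, $x\sim U$, must have $c=\Omega(2^{n/2})-k$. The first step is to extract a large, structured subpopulation. By Markov over $f$, the set $G$ of functions for which Bob errs with probability $\le \tfrac14$ over $x$ satisfies $\Pr_{D_k}[G]\ge 1-4\delta\ge\tfrac12$; since $\H_\infty(D_k)\ge 2^n-k$, this forces $|G|\ge 2^{2^n-k-1}$, and averaging over the $2^c$ messages yields a message $m$ with $|S|\ge 2^{2^n-c-k-1}$, where $S:=G\cap A^{-1}(m)$. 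Identifying functions with strings in $\bits^{2^n}$, $S$ is a set of ``effective codimension'' $D:=c+k+1$ on which Bob's fixed strategy (it sees only $m$ and $x$) is correct with probability $\ge \tfrac34$ over $f\in S$, $x\sim U$. Writing $\beta_{\{y,y'\}}:=|\bbE_{f\in S}[(-1)^{f(y)\oplus f(y')}]|$ for the bias of a pair, and noting that the best guess of $f(y)\oplus f(y\oplus x)$ given $(m,x)$ succeeds with probability $\tfrac12(1+\max_{e\in M_x}\beta_e)$, where $M_x=\{\{y,y\oplus x\}\}_y$, averaging over $x$ gives $\bbE_x[\max_{e\in M_x}\beta_e]\ge\tfrac12$. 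It remains to show this forces $D=\Omega(2^{n/2})$.

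The naive first-moment bound is not enough: Parseval for the Boolean Fourier expansion of the density of $S$ gives $\sum_{\{y,y'\}}\beta_{\{y,y'\}}^2\le 2^D$, so $\eps$-biased pairs number at most $2^D/\eps^2$, but since each pair $\{y,y'\}$ lies in exactly the shift matching $M_{y\oplus y'}$ this only yields $D=\Omega(n)$. The gain to $\Omega(2^{n/2})$ comes from the fact that biased pairs must cluster, which is transparent in the extremal affine case $S=V+f_0$ with $\dim V=2^n-D$: there $\beta_{\{y,y'\}}\in\{0,1\}$, with $\beta_{\{y,y'\}}=1$ iff the weight-$2$ vector $e_y+e_{y'}$ lies in the $D$-dimensional space $V^\perp$. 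The weight-$2$ vectors of $V^\perp$ are transitively closed, hence form a disjoint union of cliques on vertex sets $B_1,\dots,B_r\subseteq\bits^n$ whose $\mathbb{F}_2$-edge-spans are independent of dimensions $|B_l|-1$, so $\sum_l(|B_l|-1)\le D$. An edge $\{y,y\oplus x\}$ of $M_x$ is ``determined'' only when $x\in B_l\oplus B_l$ for some $l$, so $\max_{e\in M_x}\beta_e=1$ for at most $\sum_l|B_l|^2\le (D+1)^2$ values of $x$ and $=0$ otherwise. Then $\tfrac12\le\bbE_x[\max_e\beta_e]\le (D+2)^2/2^n$ gives $D=\Omega(2^{n/2})$, hence $c=\Omega(2^{n/2})-k$.

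The main obstacle is carrying this through for general (non-affine) $S$: one needs a robust version of the clustering step, namely that the set of $x$ for which $M_x$ has an $\eps$-biased edge has size at most $\poly(1/\eps)\cdot D^2$, so that, choosing $\eps$ a small constant, $\bbE_x[\max_e\beta_e]\le\eps+\poly(1/\eps)\cdot D^2/2^n$ still forces $D=\Omega(2^{n/2})$. Showing that the approximately-determined parities $f(y)\oplus f(y')$ of a dense set cannot be spread across too many difference classes $y\oplus y'$ is exactly the technical core of the Bar-Yossef--Jayram--Kerenidis analysis \cite{BJK08}, which I would invoke; the shift-matching structure, where each pair belongs to the single matching $M_{y\oplus y'}$, only simplifies their bookkeeping. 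Finally, the distribution $D_k$ enters solely through the additive ``$+k$'' in $D=c+k+1$, which is the source of the $-k$ term in the statement; since $k=\poly(n)$ in all our applications, this loss is immaterial, as $\Omega(2^{n/2})-k=\Omega(2^{n/2})$.
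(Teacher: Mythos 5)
Your proposal takes essentially the same approach as the paper: the paper treats Theorem~\ref{thm:HM-lb} as a cited result from \cite{BJK08}, remarking only that the single place their proof uses the uniform distribution --- turning $\Pr_f[\text{event}]\ge\tfrac12$ into $|\text{event}|\ge 2^{2^n-1}$ --- becomes $|\text{event}|\ge 2^{2^n-k-1}$ under a min-entropy-$(2^n-k)$ distribution, which is exactly your ``$|G|\ge 2^{2^n-k-1}$'' step and the source of the additive $-k$. You additionally supply the standard zero-error upper-bound protocol and a nice sketch of the extremal affine case of the deterministic lower bound, but you, like the paper, ultimately defer the robust (non-affine) clustering core to \cite{BJK08}, so the two proofs are not materially different in route.
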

The above theorem is a generalization of the result in \cite{BJK08}, which states a result for the uniform distribution on $f$.
The uniform distribution over $f$ is only used in one place in their proof, to make the following argument: if the probability of some event over $f$ is at least $\frac{1}{2}$, then at least half of all $f$-s must be in the event. With a high min-entropy distribution, we can instead say that at least $2^{2^n - k-1}$ $f$-s are in the event, and the rest of the argument proceeds the same way as the original proof.
\begin{theorem}[\cite{Aar05,KP14}]\label{thm:QtoD}
For any partial boolean function $f:\{0,1\}^n\times\{0,1\}^m\to\{0,1\}$, if Alice gets the $n$-bit input and Bob the $m$-bit input, then $\D^{A\to B}(f) = O(m\cdot \Q^{A\to B}(f))$.
\end{theorem}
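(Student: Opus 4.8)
The plan is to convert an efficient quantum one‑way protocol for $f$ into a deterministic one at the cost of only an $O(m)$ factor, by exploiting the fact that a $q$-qubit message can be pinned down to the accuracy Bob needs after revealing only $O(q)$ of Bob's $2^m$ possible inputs together with the corresponding values of $f$. First I would take a quantum one‑way protocol of cost $q := \Q^{A\to B}(f)$ and amplify it by a constant number of parallel repetitions with a majority vote, so that, writing $\rho_x$ for Alice's (now $O(q)$-qubit) message on input $x$ and $\{M_y,\Id-M_y\}$ for Bob's measurement on input $y$, we have $\Tr(M_y\rho_x)\ge \tfrac{7}{8}$ when $f(x,y)=1$ and $\Tr(M_y\rho_x)\le\tfrac{1}{8}$ when $f(x,y)=0$, for every promise pair $(x,y)$.

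Next I would run a version-space / online-learning argument on Alice's side. View $q$-qubit density matrices as hypotheses, each $\sigma$ predicting $\Tr(M_y\sigma)\in[0,1]$ for every $y$. Alice, who knows $x$, performs the following purely combinatorial procedure: maintain a set $V$ of candidate $q$-qubit states (a fine $\epsilon$-net of all of them); while there is a promise input $y$ on which $V$ is margin-ambiguous, meaning some $\sigma\in V$ has $\Tr(M_y\sigma)\ge\tfrac34$ while some $\sigma'\in V$ has $\Tr(M_y\sigma')\le\tfrac14$, pick the lexicographically least such $y$, record the pair $(y,f(x,y))$, and delete from $V$ every state lying on the wrong side of the $\tfrac34/\tfrac14$ cut for that value of $f(x,y)$. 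The quantitative heart of the argument is that this loop halts after $t=O(q)$ rounds: a long margin-ambiguous run would exhibit a large $\tfrac14$-fat-shattered set of measurements, contradicting Aaronson's bound that the $\gamma$-fat-shattering dimension of $\{\sigma\mapsto\Tr(E\sigma)\}$ over $q$-qubit states is $O(q/\gamma^2)=O(q)$ at constant margin. When the loop stops, every state still consistent with the recorded pairs — in particular $\rho_x$ itself, which is always consistent — agrees with $\rho_x$ on which side of $\tfrac12$ each $\Tr(M_y\cdot)$ falls.

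The deterministic protocol is then immediate: Alice sends the $t=O(q)$ recorded pairs $(y_i,f(x,y_i))$, costing $t(m+1)=O(qm)$ bits; Bob, on a promise input $y$, brute-force searches the net for any $q$-qubit state $\sigma$ consistent with all recorded pairs, computes $\Tr(M_y\sigma)$, and outputs $1$ iff it exceeds $\tfrac12$, which is correct by the stopping property (the $\epsilon$-net discretization only perturbs thresholds by $o(1)$, which the $\tfrac34/\tfrac14$ gap absorbs). Since Bob's running time is irrelevant for communication complexity, this yields $\D^{A\to B}(f)=O(qm)=O(m\cdot\Q^{A\to B}(f))$. I expect the main obstacle to be exactly the quantitative core — proving that the ambiguity loop terminates in $O(q)$ steps — which is where one must invoke (or reprove) the fat-shattering bound for quantum states and check it drives this iterative, adaptive shrinkage rather than merely an i.i.d. learning guarantee; the amplification and discretization steps are routine, and the constants ($\tfrac18$-error protocol, $\tfrac14$ margin) only need to be chosen consistently with the learning bound.
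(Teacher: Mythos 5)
Your high-level scheme (amplify, then have Alice send $O(\Q^{A\to B}(f))$ labeled pairs $(y_i,f(x,y_i))$, each costing $m+1$ bits) is the right skeleton, and since the paper only cites this theorem to \cite{Aar05,KP14} without reproving it, the question is whether your termination argument is sound. It is not: the implication ``a long margin-ambiguous run exhibits a large $\tfrac14$-fat-shattered set of measurements'' fails. Each round of your loop only certifies that, among states in $V_i$ (those matching $\rho_x$'s labels with margin on $y_1,\ldots,y_{i-1}$), both labels for $y_i$ are achievable. That builds a chain, not a shattering: it produces a state flipping the label at any \emph{single} round $i$ while matching $\rho_x$ on $[1,i-1]$, but fat-shattering demands that \emph{every} subset $S\subseteq[t]$ of flips be realizable simultaneously, and your construction gives no state flipping at two or more rounds. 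The length of such a chain is a Littlestone/teaching-type quantity, which for real-valued margin classes is not controlled by fat-shattering dimension --- soft thresholds on $[0,1]$ have constant-margin fat-shattering dimension $O(1)$, yet the greedy ambiguity loop over an $\epsilon$-net of parameters can last $\Theta(\log(1/\epsilon))$ rounds, and the generic halving bound only yields $\log|V|$, which for a net of $q$-qubit states is of order $4^q$. So the ``quantitative heart'' you flagged is a genuine gap, and Aaronson's fat-shattering lemma alone cannot close it.

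The known proof in \cite{Aar05} uses the fat-shattering bound as an $L_\infty$-packing bound, not as an adaptive mistake bound. After amplification, the functions $g_x\colon y\mapsto\Tr(M_y\rho_x)$ lie in a class of constant-margin fat-shattering dimension $O(q)$, and distinct row classes of $f$ yield functions that are $\Omega(1)$-separated in $L_\infty$ over the at most $2^m$ columns. The generalized Sauer--Shelah covering lemma for scale-sensitive dimensions (Alon, Ben-David, Cesa-Bianchi, Haussler) then bounds the number of pairwise $\Omega(1)$-separated restrictions on $2^m$ points by $2^{O(mq)}$ up to logarithmic factors, so $f$ has at most that many row classes, and $\D^{A\to B}(f)\le\log(\#\text{row classes})$, since a deterministic one-way protocol need only transmit the row class. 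The same ingredients appear in your sketch, but there as a non-adaptive counting bound on the number of reachable rows rather than a termination bound for an adaptive pruning loop; that distinction is exactly where your argument breaks.
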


Next, we consider the $n$-fold parallel repetition of $\HMn$, which we denote by $\HMn^n$, which involves solving $n$ independent instances of $\HMn$, with functions $f_1\ldots f_n$ and inputs $x_1\ldots x_n$. With some abuse of notation, we will also be denoting the inputs of $\HMn^n$ by $f$ and $x$, where it is to be understood that $f$ consists of $n$ functions from $\{0,1\}^n$ to $\{0,1\}$; this $f$ can also be thought of as a function from $[n]\times\{0,1\}^n\to\{0,1\}$. The following is a consequence of the direct product theorem for bounded-round communication from \cite{JPY12}.
\begin{theorem}\label{thm:HM-DP}
Let $D_{1,k}, \ldots, D_{n,k}$ be distributions over functions from $\{0,1\}^n \to \{0,1\}$ each with min-entropy at least $2^n - k$, and let $D^n_k = D_{1,k}\otimes\ldots \otimes D_{n,k}$. With $U^n$ denoting the $n$-fold tensor product of the uniform distribution on $\{0,1\}^n$, we have
\[ \D^{A\to B}_{1-2^{-\Omega(n)}}(\HMn^n, D_k^n\otimes U^n) = \Omega(n(2^{n/2} - k)).\footnote{Technically, the theorem statement in \cite{JPY12} has $n$ i.i.d. copies of the same ``hard'' distribution on Alice and Bob's inputs for the $n$ inputs to $\HMn^n$, instead of independent but distinct distributions like $D_{1,k}, \ldots, D_{n,k}$. But their proof actually proves the more general version we state here, as long as the communication lower bound for the single copy of $\HMn$ holds against all the distributions $D_{i,k}$.}\]
\end{theorem}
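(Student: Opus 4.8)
The plan is to derive the statement by combining the single-copy distributional lower bound for $\HMn$ from Theorem~\ref{thm:HM-lb} with the direct product theorem for bounded-round public-coin communication of \cite{JPY12}. The key observation making this clean is that a one-way protocol is just a $1$-round protocol, so the round parameter in the direct product theorem is a constant, and passing from one copy to $n$ copies is exactly the operation that amplifies a constant distributional error up to $1-2^{-\Omega(n)}$.

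First I would fix a constant error $\delta = \frac{1}{16} < \frac{1}{8}$ and apply Theorem~\ref{thm:HM-lb} coordinate by coordinate: since each $D_{i,k}$ has min-entropy at least $2^n - k$, the distribution $D_{i,k}\otimes U$ is a \emph{product} distribution on the inputs of $\HMn$ with $\D^{A\to B}_\delta(\HMn, D_{i,k}\otimes U) = \Omega(2^{n/2}) - k$. (If $k \geq 2^{n/2}$ the claimed bound is vacuous, so we may assume $k = o(2^{n/2})$ and write $C := \Omega(2^{n/2}-k)$ for the common single-copy bound.) Next, noting that $D^n_k\otimes U^n = \bigotimes_{i=1}^n (D_{i,k}\otimes U)$ is a product distribution on the inputs of the direct product $\HMn^n$, I would invoke \cite{JPY12} specialized to one round: any one-way protocol for $\HMn^n$ using $o(nC)$ bits of communication (up to the lower-order additive terms in the round complexity, which here is the constant $1$) is correct under $D^n_k\otimes U^n$ with probability at most $2^{-\Omega(n)}$. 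Equivalently, $\D^{A\to B}_{1-2^{-\Omega(n)}}(\HMn^n, D^n_k\otimes U^n) = \Omega(nC) = \Omega(n(2^{n/2}-k))$, which is the statement.

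The one subtle point, which I expect to be the main obstacle, is that \cite{JPY12} phrase their theorem for the $n$-fold repetition of a single problem under $\mu^{\otimes n}$ for one hard product distribution $\mu$, whereas here the $i$-th coordinate carries its own distribution $D_{i,k}\otimes U$. To handle this I would go into their proof and observe that it is a hybrid / rejection-sampling reduction which at its $i$-th step embeds a fresh copy of the single-copy problem into coordinate $i$ and invokes only the single-copy distributional lower bound \emph{for that coordinate's distribution}; it never uses that the coordinates are identically distributed, only that each is individually hard with the same bound $C$. Hence the generalization to distinct $D_{i,k}$ goes through verbatim. I would also need to double-check that the machinery of \cite{JPY12} applies to relations/search problems, so that it covers the relation $\HMn$ and not just Boolean functions; this is expected to be the case. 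Granting these two points --- which are essentially what the footnote to the theorem asserts --- the conclusion is immediate.
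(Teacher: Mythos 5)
Your proposal matches the paper's argument: the paper offers no independent proof of this theorem, but simply cites the direct product theorem of \cite{JPY12} applied with the single-copy distributional bound of Theorem~\ref{thm:HM-lb}, and the footnote in the theorem statement is exactly your observation that the hybrid argument of \cite{JPY12} only uses per-coordinate hardness and so tolerates non-identical product factors. The one extra caveat you raise --- that \cite{JPY12} must cover relations and not just Boolean functions --- is a legitimate thing to verify, but the paper does not spell it out either; it is implicitly taken for granted in the same way.
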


In the following, we use $\clP(f,x)$ to denote a (probabilistic) outcome of a one-way communication protocol $\clP$ between Alice and Bob, on inputs $f$ and $x$.

\begin{theorem}\label{thm:minent-1}
Let $\clP$ be a quantum one-way quantum communication protocol for $\HMn^n$ with error at most $\frac{1}{32}$, and polynomially many qubits of communication. Let $D^n_k=D_{1,k}\otimes\ldots\otimes D_{n,k}$ be a probability distribution on $f$ such that each $D_{i,k}$ has min-entropy $2^n - k$ for some $k=\poly(n)$, and $U^n$ be the uniform distribution on $x$. Then,
\[ \Pr_{f,x\sim D_k^n\otimes U^n}\left[\max_z\Pr[\clP(f,x)=z] > 2^{-n/9} \right] = 2^{-\Omega(n)}.\]
\end{theorem}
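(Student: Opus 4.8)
The plan is to prove this by contradiction, following the template of the proof of Theorem~8.4 in \cite{YZ22}. First, I read ``$\max_z$'' in the statement as the maximum over those $z$ that are \emph{valid} outputs for $(f,x)$ (equivalently, $\max_{\vec y}\Pr[\clP\text{ outputs the correct answer }((y^i,f_i(y^i)\oplus f_i(y^i\oplus x_i)))_i]$); this is forced, since a protocol that outputs a fixed string with probability $\tfrac1{32}$ and otherwise runs the exact $n$-qubit-per-instance protocol of Theorem~\ref{thm:HM-lb} has error $\le\tfrac1{32}$ but low min-entropy over \emph{all} $z$. With this reading, the goal is: assuming the bad set is not tiny, build an efficient \emph{deterministic} one-way protocol for $\HMn^n$ over $D^n_k\otimes U^n$ that contradicts the direct-product lower bound of Theorem~\ref{thm:HM-DP}.

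\smallskip\noindent\textbf{Extracting a deterministic function of the inputs.} Write $q=q(n)=\poly(n)$ for the communication of $\clP$, let $\rho_f$ be Alice's message and $\{M^x_z\}_z$ Bob's POVM, so $\Pr[\clP(f,x)=z]=\Tr(M^x_z\rho_f)$, and suppose for contradiction that $\mu:=\Pr_{f,x}[\exists\text{ valid }z:\Pr[\clP(f,x)=z]>2^{-n/9}]>2^{-\gamma n}$ for a small constant $\gamma>0$ fixed at the end. Set $\tau=2^{-n/9}$ and introduce two independent public coins: a threshold $\theta$ uniform on a fine grid in $(\tau/2,\tau)$, and a uniformly random total order $\pi$ on output strings. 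For fixed $\theta,\pi$ define $W_{\theta,\pi}(f,x)$ to be the $\pi$-minimal $z$ with $\Pr[\clP(f,x)=z]>\theta$. The key protocol is: Alice sends $T=2^{4n/9}\poly(n)$ copies of $\rho_f$; Bob measures each with $\{M^x_z\}$, getting $T$ i.i.d.\ samples of $\clP(f,x)$, identifies the strings whose empirical weight exceeds $\theta$, and outputs the $\pi$-minimal one. Call $\theta$ \emph{good for} $(f,x)$ if no $\Pr[\clP(f,x)=z]$ lies within the sampling error of $\theta$; since there are only $O(1/\tau)$ weights above $\tau/2$ packed into an interval of width $\tau$, choosing the additive sampling error polynomially below $\tau^2=2^{-2n/9}$ (which is why $T\approx\mathrm{err}^{-2}=2^{4n/9}\poly(n)$) makes a random $\theta$ good for a given $(f,x)$ with probability $1-o(1)$, and for a good $\theta$ a Chernoff/union bound (over the $O(1/\tau)$ heavy weights and over all low-weight strings) shows Bob recovers the exact true heavy set, hence outputs $W_{\theta,\pi}(f,x)$, except with exponentially small probability. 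Therefore, for each good $\theta$ and each bit index $j$, the Boolean function $(f,x)\mapsto(W_{\theta,\pi}(f,x))_j$ has a bounded-error quantum one-way protocol of cost $Tq=2^{4n/9}\poly(n)$; since Bob's input for $\HMn^n$ has length $n^2$, Theorem~\ref{thm:QtoD} converts this to an \emph{exact} deterministic one-way protocol of cost $O(n^2\cdot Tq)$. Concatenating over all $O(n^2)$ output bits yields a deterministic one-way protocol $\Pi_{\theta,\pi}$ computing $W_{\theta,\pi}(f,x)$ exactly, with communication $2^{4n/9}\poly(n)=o(n\,2^{n/2})$.

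\smallskip\noindent\textbf{Validity and the contradiction.} It remains to argue $W_{\theta,\pi}(f,x)$ is a \emph{valid} $\HMn^n$ answer often enough. Fix $(f,x)$ in the bad set and a good $\theta<\tau$: there are at most $2/\theta\le 2^{n/9+2}$ heavy strings and, by definition of the bad set, at least one of them is valid; since $\pi$ is a uniformly random order, the $\pi$-minimal heavy string is uniform among the heavy strings, hence valid with probability at least $2^{-n/9-2}$. Averaging over $\theta$ and $\pi$ against $\mu$, there is a good $\theta^*$ and an order $\pi^*$ with $\Pr_{f,x\sim D^n_k\otimes U^n}[W_{\theta^*,\pi^*}(f,x)\text{ is valid for }(f,x)]\ge\mu\cdot 2^{-n/9-3}\ge 2^{-(\gamma+1/9)n-3}$. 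Then $\Pi_{\theta^*,\pi^*}$ solves $\HMn^n$ on $D^n_k\otimes U^n$ with success probability $\ge 2^{-(\gamma+1/9)n-3}$ and communication $o(n\,2^{n/2})$; since $k=\poly(n)$ gives $n(2^{n/2}-k)=\Theta(n2^{n/2})$, choosing $\gamma$ small enough that $\gamma+\tfrac19$ is below the constant in the $2^{-\Omega(n)}$ error of Theorem~\ref{thm:HM-DP} contradicts that theorem. Hence $\mu\le 2^{-\gamma n}=2^{-\Omega(n)}$, which is the claim.

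\smallskip\noindent\textbf{Main obstacle.} The delicate step is not the parameter bookkeeping but the passage from the easily-located heavy string to a \emph{valid} output: a heavy string need not be valid, and one cannot search directly for the lexicographically first valid heavy string because Bob cannot test validity without knowing $f$. The randomized tie-break above sidesteps this by arguing that at least one of the $\le 2^{n/9}$ heavy strings is valid and a random order hits it with probability $\ge 2^{-n/9}$; this $2^{-n/9}$ loss in success probability is exactly why the min-entropy threshold must be taken as small as $2^{-n/9}$ --- small enough that $T\approx 2^{4n/9}$ keeps communication $o(2^{n/2})$ (using $4/9<1/2$) and that $\gamma+1/9$ stays under the direct-product exponent of Theorem~\ref{thm:HM-DP}. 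The other thing to be careful about is that, with up to $2^{n/9}$ heavy weights crammed into an interval of width $2^{-n/9}$, the threshold must be chosen robustly; taking the sampling error polynomially below $2^{-2n/9}$ (affordable, since $2^{4n/9}\poly(n)=o(2^{n/2})$) resolves this.
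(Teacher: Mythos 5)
Your approach mirrors the paper's very closely in overall structure: assume the bad set is not tiny, estimate the output distribution by sending many copies ($T\approx 2^{4n/9}\poly(n)$, matching the paper's $\clQ_{\gamma,\delta}$), extract an almost-deterministic output string with a randomized threshold, convert each output bit to an exact deterministic one-way protocol via Theorem~\ref{thm:QtoD}, and contradict Theorem~\ref{thm:HM-DP}. Your observation that $\max_z$ must effectively be read over \emph{valid} $z$ is correct and worth flagging --- your counterexample (output a fixed invalid string with probability $\tfrac1{32}$) does falsify the statement as literally written, and the paper's proof implicitly assumes that the lex-min heavy string is frequently valid, which really only holds cleanly when the protocol is zero-error (as it is in every downstream use, e.g.\ Lemma~\ref{lem:minent-chain}).

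The genuine gap is in your validity step, and it is exactly where you diverge from the paper. The paper chooses a threshold from an $M=\lceil 64/\eps\rceil$-point grid and takes the \emph{lexicographically smallest} heavy string; it then argues (via a count of invalid heavy strings, total invalid mass $\le\tfrac1{32}$) that only a $\tfrac{1}{64}$ fraction of grid points yield an invalid output. This keeps the success-probability loss to a \emph{constant} factor, so the constructed deterministic protocol succeeds with probability $2^{-o(n)}$, which contradicts $2^{-\Omega(n)}$ in Theorem~\ref{thm:HM-DP} regardless of the hidden constant. You instead use a uniformly random total order $\pi$ and pay a factor $\ge 2^{-n/9}$ because a random order hits the (possibly unique) valid heavy string with probability only about $\theta\approx 2^{-n/9}$. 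Your final success probability is therefore $\ge\mu\cdot 2^{-n/9-3}$, and the contradiction requires $\gamma+\tfrac19$ to be strictly below the constant $c$ in the $2^{-\Omega(n)}$ of Theorem~\ref{thm:HM-DP}. Since $\gamma>0$, this forces $c>\tfrac19$. But $c$ comes from the bounded-round direct product theorem of \cite{JPY12} applied at single-instance error threshold $<\tfrac18$, and the constant those arguments yield is much smaller than $\tfrac19$. So the final step ``choosing $\gamma$ small enough that $\gamma+\tfrac19$ is below the constant'' does not go through as written: there is no admissible $\gamma>0$. To salvage the argument along your lines you would need to either (i) replace the random-order tie-break by the paper's lex-min-with-fine-grid device so the loss is $O(1)$ rather than $2^{-n/9}$, or (ii) weaken the threshold in the theorem from $2^{-n/9}$ to $2^{-\delta n}$ for $\delta$ smaller than the direct-product constant (and smaller than $\tfrac12$ for the communication bound), which would no longer prove the theorem as stated.

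Two minor remarks on the bookkeeping: Theorem~\ref{thm:QtoD} applies to \emph{partial} Boolean functions, and you implicitly restrict the domain to pairs $(f,x)$ for which $\theta$ is good --- that restriction should be made explicit, since otherwise the quantum protocol is not bounded-error. Also, the number of heavy strings is $\le 1/\theta\le 2^{n/9+1}$ (you wrote $2/\theta\le 2^{n/9+2}$; both are fine but the tighter bound matters only by a constant).
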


\begin{proof}
Let the communication complexity of $\clP$ be $p(n)$. Assume for the sake of contradiction that the  satisfies
\begin{equation}\label{eq:delta-contr}
\Pr_{f,x\sim D_k^n\otimes U^n}\left[\max_z\Pr[\clP(f,x)=z] > 2^{-n/9} \right] = 2^{-o(n)}.
\end{equation}
We'll call the set of $(f,x)$ for which $\max_z\Pr[\clP(f,x)=z] > 2^{-n/9}$ the good set. Now using the good set of $(f,x)$, we'll give a classical protocol for $\HMn^n$ that violates Theorem \ref{thm:HM-DP} by a series of reductions.

First we show that there is a one-way protocol $\clQ_{\gamma,\delta}$ that has $O\left(\frac{n^2p(n)}{\gamma^2}\log(1/\delta)\right)$ qubits of communication, and in which on inputs $f, x$, Bob outputs a list $\{P^{f,x}(z)\}_{z\in \{0,1\}^{n(n+1)}}$ such that for all $f, x$,
\[ \Pr\left[\forall z \in \bits^{n(n+1)}\left|P^{f,x}(z) - \Pr[\clP(f,x) = z]\right| \leq \gamma\right] \geq 1-\delta. \]
Like in the proof of Lemma 8.5 of \cite{YZ22}, this is done by repeating the protocol $\clP$ $N$ times, where $N= \Theta\left(\frac{1}{\gamma^2}\log(2^{n(n+1)}/\delta)\right)$, and outputting $K_z/N$ as the estimate $P^{f,x}(z)$, where $K_z$ is the number of times the output $z$ occurs. By the Chernoff bound, for each $P^{f,x}(z)$ is within $\eps$ of its true value with probability $1-\frac{\delta}{2^{n(n+1)}}$, and hence by the union bound, they are all within $\gamma$ of their true values with probability $1-\delta$. $N$ copies of the message corresponding to $\clP$ need to be sent in order to repeat it $N$ times, which gives the overall communication complexity of $\clQ_{\gamma,\delta}$.

Let $\eps \coloneqq 2^{-n/9}$ and $M \coloneqq \lceil \frac{64}{\epsilon} \rceil$. For each $i \in [M]$, we now define some `quantized' protocols $\clQ_i$ that work as follows on inputs $f, x$: 
\begin{itemize}
    \item Run $\clQ_{\frac{\eps}{4M},\frac{1}{5}}$ so that Bob obtains the list $\{P^{f,x}(z)\}$. 
    \item Bob outputs the value $z$ which satisfies $P^{f,x}(z) > \eps(1+ \frac{2i-1}{4M})$. If there are multiple such $z$, output the lexicographically smallest such value. If there is no such $z$, output $0^{n(n+1)}$.
\end{itemize}
Let us denote the most likely output of $\clQ_i(f,x)$ by $\hat{z}^i(f,x)$. We claim that there exists a subset of $(f,x)$ that has probability $2^{-o(n)}$ over $D_k^n\otimes U^n$, such that for each $(f,x)$ in the subset, with probability at least $\frac{31}{32}$ over $i\in [M]$, it holds that:
\begin{equation}\label{eq:hatz-prob}
\Pr[\clQ_i(f,x) = \hat{z}^i(f,x) \land \hat{z}^i(f,x) \in \HMn^n(f,x)] \geq \frac{4}{5}.
\end{equation}
The $2^{-o(n)}$ probability subset of $(f,x)$ is just going to be the good subset for which \eqref{eq:delta-contr} is true, and so we focus on such a pair. For this fixed $(f,x)$ pair, for at least $1-\frac{1}{32}$ fraction of $i\in [M]$, there does not exist a $z$ satisfying
\begin{equation}\label{eq:i-bad}
\left|\Pr[\clP(f,x)=z] - \eps\left(1+ \frac{2i-1}{4M}\right)\right| < \frac{\eps}{4M}.
\end{equation}
This is because such a $z$ must satisfy $\Pr[\clP(f,x)=z] > \eps$, so there can be at most $\frac{1}{\eps}$ of them. But since the step size in our quantization is $\frac{\eps}{4M}$, and the right-hand side of the above inequality is $\frac{\eps}{4M}$, each $z$ can satisfy the inequality with at most one $i$. Therefore, the fraction of $i$ for which there is some $z$ satisfying the inequality is $\frac{1}{\eps M} = \frac{1}{64}$. For $i$ for which \eqref{eq:i-bad} does not hold for any $z$, if $\clQ_{\frac{\eps}{4M},\frac{1}{5}}$ succeeds, it must find $\hat{z}^i$, and this must happen with probability at least $\frac{4}{5}$.
Moreover, each $\clQ_i$ outputs a unique $z$ which has probability more than $\eps$ of being output by $\clP$. Since $\clP$ is wrong on $\HMn^n(f,x)$ with probability at most $\frac{1}{32}$, the maximum number of $i$-s which output a $z$ that is not valid output for $\HMn^n(f,x)$ is at most $\frac{1}{32\eps}$, which is at most a $\frac{1}{64}$ fraction of all $i\in [M]$. Overall, for a good $(f,x)$, the probability over $i$ that $\clQ_i(f,x)$ outputs $\hat{z}^i(f,x)$ and that this output is in $\HMn^n(f,x)$ is at least $1-\frac{1}{64}-\frac{1}{64} = \frac{31}{32}$. Therefore, for these $i$, \eqref{eq:hatz-prob} holds.
We call this set of $i$ the good set this fixed good $(f,x)$.

Now for $j \in [n(n+1)]$, define protocols $\clQ_{ij}$ in which Bob just outputs the $j$-th bit of the output of $\clQ_i$. Further, for the set of $(f,x)$ and $i$ for which \eqref{eq:hatz-prob} holds, we also define the set of boolean functions $g_{ij}$ given by $g_{ij}(f,x)\coloneqq \hat{z}^i_j(f,x)$.
Then due to \eqref{eq:hatz-prob}, we have a set of one-way quantum protocols $\clQ_{ij}$ for $g_{ij}(f,x)$ that succeed with probability at least $\frac{4}{5}$. The communication complexity of $\clQ_{ij}$ is the same as that of $\clQ_{\eps/4M,1/5}$, which is $O\left(\frac{n^2p(n)}{\eps^4}\right)$.

Now by Theorem \ref{thm:QtoD}, there also exist deterministic one-way protocols $\clD_{ij}$ for $g_{ij}$ of complexity $O(\frac{n^4p(n)}{\eps^4})$. Note that for $(f,x)$ and $i$ for which \eqref{eq:hatz-prob} holds, $\clD_{ij}$ succeeds with probability $1$ for each $j\in [n(n+1)]$. For the bad $(f,x)$ and $i$ for which \eqref{eq:hatz-prob} does not hold, $\clD_{ij}$ outputs something, and we don't care what it is.

Using the protocols $\clD_{ij}$, we are going to give a randomized protocol $\clR$ for $\HMn^n$ that works with high probability over its internal randomness and the distribution $D_k^n\otimes U^n$ over $(f,x)$. By fixing the randomness of $\clR$, there is then also a deterministic protocol $\clD$ that works with high probability over this distribution $(f,x)$. $\clR$ works as follows:
\begin{itemize}
\item Alice samples $i\in [M]$ uniformly at random, and sends the message for $\clD_{ij}$ for each $j$, along with $i$, to Bob.
\item For each $j\in [n(n+1)]$, Bob computes his output bit for Alice's message in $\clD_{ij}$. He then outputs the concatenation of all the $n(n+1)$ bits.
\end{itemize}
If $(f,x)$ is good, and the $i$ sampled by Alice is good for this pair, then $\clR$ outputs $z\in \HMn^n(f,x)$ (since in this case, $\hat{z}^i(f,x)\in \HMn^n$, and $\clD_{ij}$ is correct for each $j$). We know that a $2^{-o(n)}$ probability subset of $(f,x)$ are good, and for each good pair, $1-\frac{1}{32}$ fraction of $i$ are good. Therefore, the overall success probability of $\clR$ over $D_k^n\otimes U^n$ is $2^{-o(n)}$. Fixing the randomness of $\clR$, we can also get a deterministic protocol $\clD$ which has the same success probability over the same distribution, and the same communication. The success probability of $\clD$ over $D^n_k\otimes U^n$ is $2^{-o(n)}$, and its communication complexity is $O(\frac{n^6p(n)}{\eps^4})$. Putting in the value of $\eps$, this is $o(2^{n/2})$ for any polynomial $p(n)$, which contradicts Theorem \ref{thm:HM-DP}. Therefore, \eqref{eq:delta-contr} cannot be true.
\end{proof}

Now we'll generalize Theorem \ref{thm:minent-1} to the $k$-receiver one-way communication setting. We'll use $f$ to denote Alice's input for this protocol as before, and $x^1\ldots x^k$ and $z^1\ldots z^k$ to denote the $k$ receivers' input and output variables. Moreover, we'll use $\bar{x}$ to refer to $x^1\ldots x^k$, $\bar{x}^{< \ell}$ for $x^1\ldots x^{\ell-1}$ and $\bar{x}^{\leq \ell}$ for $x^1\ldots x^\ell$; similar notation will be used for $z$. We'll use $\clP(f,\bar{x})$ to denote the (probabilistic) outcomes of all receivers in the $k$-receiver protocol $\clP$ on inputs $f$ and $\bar{x}$, and $\clP(f,x^\ell)$ to denote the outcome of the $\ell$-th receiver.

First, we'll prove Lemma \ref{lem:minent-chain} about the distribution of $z^\ell$ conditioned on $\bar{z}^{<\ell}$, on average over $\bar{x}$. We will then use this to prove in Lemma \ref{lem:minent-k} that there exists a $\bar{x}$ for which the min-entropy of the whole string $\bar{z}$ is high for a good fraction of $f$.

\begin{lemma}\label{lem:minent-chain}
Let $\clP$ be a $k$-receiver one-way communication protocol for $\HMn^n$ with perfect joint correctness. Then for any set of inputs $\bar{x}^{< \ell}$ and outputs $\bar{z}^{< \ell}$ for the first $\ell-1$ receivers, we have,
\[ \bbE_{f \sim D_{k'}^n, x^\ell \sim U^n}\left[\max_{z^\ell}\Pr\left[\clP(f,x^\ell) = z^\ell\middle|\clP(f,\bar{x}^{<\ell}) = \bar{z}^{< \ell}\right]\right] = 2^{-\Omega(n)},\]
for any $k'=\poly(n)$.
\end{lemma}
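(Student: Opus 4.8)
The plan is to reduce the statement to the single-receiver min-entropy bound of Theorem~\ref{thm:minent-1} by passing to the \emph{post-measurement} state of Alice's message conditioned on receivers $1,\ldots,\ell-1$ producing $\bar{z}^{<\ell}$. Fix $\bar{x}^{<\ell}$ and $\bar{z}^{<\ell}$, and write $E_f$ for the event $\clP(f,\bar{x}^{<\ell}) = \bar{z}^{<\ell}$. Because the $k$ receivers act on the disjoint tensor-product registers $B_1,\ldots,B_k$ of Alice's message, whenever $\Pr[E_f] > 0$ there is a well-defined normalized state $\sigma_f$ on register $B_\ell$: the reduced state on $B_\ell$ after the (mutually commuting) measurements of $\clB_1,\ldots,\clB_{\ell-1}$ on $B_1,\ldots,B_{\ell-1}$ with inputs $\bar{x}^{<\ell}$ have been performed and post-selected on the outcomes $\bar{z}^{<\ell}$. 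Since these measurements commute with anything $\clB_\ell$ does, for every $z^\ell$ the quantity $\Pr[\clP(f,x^\ell)=z^\ell \mid E_f]$ equals the probability that $\clB_\ell$ on input $x^\ell$ and advice $\sigma_f$ outputs $z^\ell$; when $\Pr[E_f]=0$ we read the conditional probability as $0$.

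First I would package this into an ordinary single-receiver one-way protocol $\clP^*$ for $\HMn^n$: on input $f$, Alice computes $\sigma_f$ (she knows $f$ and the fixed strings $\bar{x}^{<\ell},\bar{z}^{<\ell}$, and communication complexity places no efficiency restriction on her) and sends register $B_\ell$ if $\Pr[E_f]>0$ and an arbitrary fixed state otherwise; Bob runs $\clB_\ell$ on his input and outputs the result. Since $\clP$ has perfect joint correctness, in \emph{every} run conditioned on $E_f$ all receivers — in particular $\clB_\ell$ — output valid answers, so $\clP^*$ is perfectly correct on the promise set $G=\{f:\Pr[E_f]>0\}$, and its communication is $|B_\ell|\le p(n)=\poly(n)$.

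The key observation is that the argument never needs a lower bound on $\Pr[E_f]$: I would re-run the proof of Theorem~\ref{thm:minent-1} on $\clP^*$, taking the ``good set'' to be $H = \{(f,x^\ell): f\in G \text{ and } \max_{z^\ell}\Pr[\clP^*(f,x^\ell)=z^\ell] > 2^{-n/9}\}$, and assuming for contradiction that $\Pr_{f,x^\ell\sim D_{k'}^n\otimes U^n}[(f,x^\ell)\in H] = 2^{-o(n)}$. Estimating the output distribution of $\clP^*$ by repetition, quantizing with a random threshold, converting each output bit into a partial boolean function and invoking Theorem~\ref{thm:QtoD}, and concatenating the resulting deterministic bit-protocols, yields a deterministic one-way protocol for $\HMn^n$ of communication $o(2^{n/2})$ that is correct on $H$ for a constant fraction of the quantization thresholds; fixing a good threshold gives success probability $\ge c\cdot\Pr[(f,x^\ell)\in H] = 2^{-o(n)}$ over $D_{k'}^n\otimes U^n$ for a constant $c>0$, contradicting Theorem~\ref{thm:HM-DP} (using $k'=\poly(n)\ll 2^{n/2}$). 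Crucially, the reduction only ever uses correctness of $\clP^*$ on inputs lying in $H\subseteq G\times\{0,1\}^n$, where $\clP^*$ is perfectly correct, so the presence of a promise causes no trouble. Hence $\Pr_{f,x^\ell}[(f,x^\ell)\in H] = 2^{-\Omega(n)}$; since for $f\notin G$ the conditional max-probability is $0$, this yields $\Pr_{f,x^\ell}[\max_{z^\ell}\Pr[\clP(f,x^\ell)=z^\ell\mid E_f] > 2^{-n/9}] = 2^{-\Omega(n)}$, and splitting the expectation over this event and its complement gives the claimed bound $2^{-n/9}+2^{-\Omega(n)} = 2^{-\Omega(n)}$.

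I expect the only real obstacle to be convincing oneself that conditioning on the (possibly exponentially unlikely) event $E_f$ does not wreck the reduction. The point is that one should \emph{not} try to argue $\Pr[E_f]$ is large; instead one treats the conditional state $\sigma_f$ itself as the message, notes that this is a genuine one-way protocol for $\HMn^n$ that is perfectly correct exactly on the set $G$ where $\sigma_f$ is defined, and observes that the lower-bound reduction of Theorem~\ref{thm:minent-1} only ever examines inputs on which the output distribution is far from flat — all of which lie in $G$. A secondary point is to track parameters so that the induced deterministic protocol still has communication $o(2^{n/2})$; this is the same count already carried out in the proof of Theorem~\ref{thm:minent-1}, now with communication $|B_\ell|=\poly(n)$ in place of $p(n)$.
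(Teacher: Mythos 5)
Your proposal is correct and follows essentially the same route as the paper: reduce to the single-receiver case by having Alice send the post-measurement state of register $B_\ell$ conditioned on the first $\ell-1$ receivers' measurements yielding $\bar{z}^{<\ell}$, and then invoke the single-receiver min-entropy bound of Theorem~\ref{thm:minent-1}. The paper phrases Alice's preparation as rejection sampling ("prepare many copies and post-select"), while you phrase it as Alice directly computing $\sigma_f$ --- the same thing, since Alice is computationally unbounded. Where you add value is in spelling out two points the paper elides: (i) the edge case $\Pr[E_f]=0$, where you sensibly set the conditional probability to $0$ so that it contributes nothing to the expectation; and (ii) the observation that the resulting single-receiver protocol is only promised-correct on $G=\{f:\Pr[E_f]>0\}$, so one should re-trace the reduction inside Theorem~\ref{thm:minent-1} and verify it only ever exercises correctness on the good set $H\subseteq G\times\{0,1\}^n$, rather than blithely applying the theorem to a protocol whose correctness is conditional. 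You also explicitly pass from the probability-tail statement of Theorem~\ref{thm:minent-1} to the expectation bound claimed by the lemma, which the paper leaves implicit. None of this changes the substance, but your treatment is more careful on these technicalities than the paper's own two-paragraph argument.
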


\begin{proof}
By Theorem \ref{thm:minent-1} for all 1-receiver protocols for $\HMn^n$, we have,
\[ \bbE_{f \sim D_k^n, x \sim U^n}\left[\max_z\Pr\left[\clP'(f,x)=z\right]\right] = 2^{-\Omega(n)}.\]
We can get a 1-receiver protocol $\clP'$ with perfect correctness for $\HMn^n$, whose output probabilities are exactly equal to the output probabilities of the $\ell$-th receiver of $\clP$, conditioned on the inputs and outputs of the first $\ell-1$ receivers being $\bar{x}^{< \ell}$ and $\bar{z}^{< \ell}$ respectively, in the following way. In $\clP'$, Alice prepares many copies of the whole message corresponding to the $k$ players for $\clP$, and performs the measurements of the first $\ell-1$ players corresponding to inputs $\bar{x}^{< \ell}$ on them until she gets outcomes $\bar{z}^{< \ell}$. Alice might need to prepare very many copies in order to do this, but this is costless. Alice then sends the $\ell$-th player's register to Bob, corresponding to a copy where she got the desired outcome. Since $\clP$ has perfect correctness, the $\ell$-th receiver's output is correct conditioned on any outcome of the first $\ell-1$ receivers, and therefore $\clP'$ has perfect correctness. If the number of qubits send to the $\ell$-th player in $\clP$ is polynomial, then the communication in $\clP'$ is also polynomial. This proves the lemma.
\end{proof}

The above lemma is the first place our proof begins to fail for protocols that are not zero-error. For a $k$-receiver protocol with bounded error, if we condition on arbitrary $\bar{x}^{< \ell}, \bar{z}^{< \ell}$, the $\ell$-th receiver's output $z^\ell$ is not necessarily correct with high probability. Therefore, we cannot get a protocol $\clP'$ that we can apply Theorem \ref{thm:minent-1} to, with this conditioning.

\begin{lemma}\label{lem:minent-k}
Let $\clP$ be a $k$-receiver one-way communication protocol for $\HMn^n$ with perfect joint correctness. Then there exists inputs $\bar{x}^* = x^{1,*}\ldots x^{k,*}$ to the $k$ players such that with probability $2^{-\Omega(kn)}$ over functions $f:[n]\times\{0,1\}^n \to \{0,1\}$, we have
\[ \max_{\bar{z}}\Pr[\clP(f,\bar{x})=\bar{z}] = 2^{-\Omega(kn)}.\]
\end{lemma}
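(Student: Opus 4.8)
The plan is to build the inputs $x^{1,*},\ldots,x^{k,*}$ and the good set of functions one receiver at a time, by an induction on $\ell$ that repeatedly invokes Lemma~\ref{lem:minent-chain} together with Markov's inequality and an averaging step over inputs. For a tuple of inputs $\bar{x}^{\leq\ell}$ and a function $f$ write $P_\ell(f,\bar{x}^{\leq\ell}) := \max_{\bar{z}^{\leq\ell}}\Pr[\clP(f,\bar{x}^{\leq\ell})=\bar{z}^{\leq\ell}]$ for the largest probability of any joint output of the first $\ell$ receivers, so that $P_k(f,\bar{x})$ is the quantity in the statement. I would prove, by induction on $\ell=0,1,\ldots,k$, that there exist inputs $x^{1,*},\ldots,x^{\ell,*}$ and a set $\mathcal{F}_\ell$ of functions with $\Pr_{f\sim U}[f\in\mathcal{F}_\ell]\geq 2^{-\Omega(\ell n)}$ such that $P_\ell(f,\bar{x}^{*,\leq\ell})\leq 2^{-\Omega(\ell n)}$ for every $f\in\mathcal{F}_\ell$; the case $\ell=k$ is the lemma.

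For the inductive step I would use the chain-rule factorisation
\[ P_\ell(f,\bar{x}^{\leq\ell}) = \max_{\bar{z}^{<\ell}}\Bigl(\Pr[\clP(f,\bar{x}^{<\ell})=\bar{z}^{<\ell}]\cdot\max_{z^\ell}\Pr[\clP(f,x^\ell)=z^\ell\mid\clP(f,\bar{x}^{<\ell})=\bar{z}^{<\ell}]\Bigr). \]
For $f\in\mathcal{F}_{\ell-1}$ the first factor is already at most $2^{-\Omega((\ell-1)n)}$, and the inner maximum is exactly the quantity $q_\ell$ controlled by Lemma~\ref{lem:minent-chain}; so it suffices to make $q_\ell$ small on every prefix $\bar{z}^{<\ell}$ that is not already ``light'', i.e.\ whose probability is not already small enough to keep $P_\ell$ small by itself. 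Lemma~\ref{lem:minent-chain} says that for each fixed prefix the expectation of $q_\ell$ over $f$ (drawn from the current distribution) and $x^\ell$ is $2^{-\Omega(n)}$, and Markov then makes $q_\ell$ small for all but a $2^{-\Omega(n)}$ fraction of $(f,x^\ell)$. Taking a union bound over the (boundedly many) non-light prefixes, averaging over $x^\ell$ to fix one good input $x^{\ell,*}$, and discarding from $\mathcal{F}_{\ell-1}$ the functions on which some non-light prefix keeps $q_\ell$ large, yields $\mathcal{F}_\ell$; the thresholds are chosen so that $\mathcal{F}_\ell$ still has density $2^{-\Omega(\ell n)}$ and $P_\ell\leq 2^{-\Omega(\ell n)}$ on it.

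I expect the crux, and the main obstacle, to be the parameter bookkeeping: we want to gain $\Omega(n)$ bits of min-entropy at each of the $k$ steps, for a total of $\Omega(kn)$, while the union bound over the ``heavy'' output prefixes at step $\ell$ --- whose number grows with the min-entropy accumulated so far --- threatens to swallow the $2^{-\Omega(n)}$ Markov budget and the $2^{-\Omega(\ell n)}$ density. This is exactly where passing to the parallel-repeated problem $\HMn^n$ rather than plain $\HMn$ is used: the single-receiver output min-entropy furnished by Theorem~\ref{thm:minent-1} (even with its linear-in-$n$ slack relative to the expected $\Omega(n^2)$) is large enough to absorb the union-bound loss while still leaving $\Omega(n)$ bits of gain per step. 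A second point needing care is that Lemma~\ref{lem:minent-chain} is stated for $f$ drawn from a product of per-coordinate high-min-entropy distributions, so at step $\ell$ I must keep the distribution of $f$ conditioned on membership in $\mathcal{F}_{\ell-1}$ (and on the relevant output prefix) of that form; here I would use perfect joint correctness and the structure of $\HMn^n$ --- fixing a valid joint output $\bar{z}^{<\ell}$ imposes, coordinatewise, only $\ell-1\le k$ affine constraints on each $f_i$, so the support stays a product set with each factor of min-entropy at least $2^n-k$, matching the hypothesis of Lemma~\ref{lem:minent-chain}.
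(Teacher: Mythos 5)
Your high-level plan (induction on $\ell$, Lemma~\ref{lem:minent-chain} plus Markov at each step, averaging over $x^\ell$ to fix $x^{\ell,*}$) is the same as the paper's, and you correctly identify the two places that need care: the bookkeeping and the product-structure hypothesis of Lemma~\ref{lem:minent-chain}. But the specific way you propose to handle both points would fail.

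First, the union bound over heavy prefixes does not close, and parallel repetition does not rescue it the way you hope. Theorem~\ref{thm:minent-1} only gives an $\Omega(n)$ per-receiver min-entropy and a $2^{-\Omega(n)}$ failure probability (not $\Omega(n^2)$); the paper itself notes it is giving up that factor. Plugging this in: if you want to gain $\Omega(n)$ bits per step, the heavy prefixes at step $\ell$ are those with probability $\geq 2^{-\Omega((\ell+1)n)}$, of which there can be up to $2^{\Omega((\ell+1)n)}$, while Markov on the conditional-max gives only a $2^{-\Omega(n)}$ failure budget \emph{per} prefix. The union bound loses $2^{\Omega(\ell n)}$, which for $\ell$ anywhere near $k$ swamps $2^{-\Omega(n)}$ no matter how the constants are tuned. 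If instead you shrink the per-step gain to $O(n/k)$ so the union bound does close, you end up with total min-entropy $O(n)$, not the $\Omega(kn)$ the lemma claims and that Theorem~\ref{thm:comm-FEQP} needs. The paper never union-bounds over output prefixes at all: it applies Markov coordinate-by-coordinate on the $n$ components $f_1,\ldots,f_n$ of $f$, losing only a constant factor of density in each component (hence $(1/4t)^{\ell}$ overall), and then invokes Lemma~\ref{lem:minent-chain} for a \emph{fixed} prefix $\bar z^{\le\ell}$ inside the resulting expectation. The inner $\bbE_f[\max_{z^{\ell+1}}\Pr[\cdot\mid\bar z^{\le \ell}]]\le 2^{-n/t}$ bound is combined multiplicatively with the worst-case $\Pr[\clP(f)=\bar z^{\le\ell}]\le 2^{-\ell n/2t+n/2t}$ bound that the Markov step just produced, avoiding any blowup in the number of prefixes.

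Second, your argument for preserving the product hypothesis of Lemma~\ref{lem:minent-chain} is aimed at the wrong object. Lemma~\ref{lem:minent-chain} requires the \emph{prior} distribution of $f$ to be a product of high-min-entropy distributions; it does not condition $f$ on the prefix event (the prefix appears only inside the conditional probability $\Pr[\cdot\mid\clP(f,\bar x^{<\ell})=\bar z^{<\ell}]$). So the "$\ell-1$ affine constraints per coordinate" observation is not what you need. What you actually need to keep product is your set $\mathcal{F}_\ell$, which in your scheme is defined by discarding the $f$'s on which some heavy prefix has a large conditional max; that condition couples the coordinates $f_1,\ldots,f_n$ and does not produce a product set. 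The paper's device for this is precisely the coordinate-wise Markov: the $\ell$-th pruning step thresholds, for $f_i$, the expectation of the relevant quantity over $f_{-i}$, so the surviving set is a product of per-coordinate subsets $D_{i,\ell+1}$, each still uniform on a $\geq(1/4t)^{\ell}$ fraction of all functions $\{0,1\}^n\to\{0,1\}$. That is the mechanism you should replace both of your steps with.
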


\begin{proof}
The proof will be done by induction over each player's outcome. Let $t \geq 1$ be a constant such that by Lemma \ref{lem:minent-chain} we have for all $\bar{x}^{<\ell}, \bar{z}^{<\ell}$,
\[ \bbE_{f \sim D'^n_{k^2}, x^\ell \sim U^n}\left[\max_{z^\ell}\Pr\left[\clP(f,x^\ell) = z^\ell\middle|\clP(f,\bar{x}^{<\ell}) = \bar{z}^{< \ell}\right]\right] \leq 2^{-n/t},\]
for all product distributions $D'^n_{k^2}$, where the distribution on each $f_i$ has min-entropy at least $2^n - k^2$. Then at the end of the $\ell$-th induction step, we'll have inputs $\bar{x}^{\leq \ell,*}$ such that there is a distribution $D_{1,\ell}\otimes\ldots\otimes D_{n,\ell}$ such that each $D_{i,\ell}$ is a uniform distribution over at least $\left(\frac{1}{4t}\right)^{\ell-1}$ fraction of functions $f_i:\{0,1\}^n \to \{0,1\}$ (i.e., each $D_{i,\ell}$ has min-entropy $2^n - (\ell-1)\log(4t)$), and we have,
\[ \bbE_{f \sim D^n_\ell}\left[\max_{\bar{z}^{\leq \ell}}\Pr[\clP(f,\bar{x}^{\leq \ell, *})=\bar{z}^{\leq \ell}]\right] \leq 2^{-n\ell/2t},\]
where we are using $D^n_\ell$ to refer to $D_{1,\ell}\otimes\ldots\otimes D_{n,\ell}$ as before.

As before we'll use $D^n_\ell$ to refer to $D_{1,\ell}\otimes\ldots\otimes D_{n,\ell}$. In the base case of $\ell=1$, each $D_{i,1}$ is just the uniform distribution over $f_i$ for each $i$. We have by Lemma \ref{lem:minent-chain} that
\[ \bbE_{f\sim D^n_1,x^1\sim U^n}\left[\max_{z_1}\Pr[\clP(f,x^1)=z^1]\right] \leq 2^{-n/t},\]
for some $t \geq 1$. Therefore, we can find some $x^{1,*}$ for which the expectation over $f$ is at least $2^{-n/t}$.

For the induction step, we'll find a subdistribution of $D^n_{\ell}$ such that for every $f$ in the support of that subdistribution, $\max_{\bar{z}^{\leq \ell}}\Pr[\clP(f,\bar{x}^{\leq \ell,*}) = \bar{z}^{\leq \ell}] \leq 2^{-\ell n/2t+n/2t}$. We'll do this by applying Markov's inequality on each distribution $D_{i,\ell}$ separately (so that the overall distribution remains product). By Markov's inequality,
\[ \Pr_{f_1\sim D_{1,\ell}}\left[\bbE_{f_2\ldots f_n \sim D_{2,\ell}\otimes\ldots\otimes D_{n,\ell}}\max_{\bar{z}^{\leq \ell}}\Pr[\clP(f,\bar{x}^{\leq \ell, *}) = \bar{z}^{\leq \ell}] \geq 2^{-n\ell/2t + 1/2t}\right] \leq 2^{-1/2t}.\]
This means with probability at least $1-2^{-1/2t} \geq 1 - e^{-\ln(2)/2t} \geq \frac{\ln(2)}{2t} \geq \frac{1}{4t}$ over $f_1$ from $D_{1,\ell}$, the expectation over $f_2\ldots f_n$ is at most $2^{-n\ell/2t + 1/2t}$. We now set $D_{1,\ell+1}$ to be $D_{1,\ell}$ conditioned on this good $\frac{1}{4t}$ probability set. Since $D_{1,\ell}$ was a uniform distribution on at least $\left(\frac{1}{4t}\right)^{\ell-1}$ fraction of functions $f_1$, $D_{1,\ell+1}$ is a uniform distribution on $\left(\frac{1}{4t}\right)^{\ell}$ fraction of $f_1$. Doing a similar similar process for $f_2, \ldots, f_n$ in order we get distributions $D_{2,\ell+1}, \ldots, D_{n,\ell+1}$ such that for every $f_1\ldots f_n$ in the support of $D_{1,\ell+1}\otimes\ldots\otimes D_{n,\ell+1}$, we have $\max_{\bar{z}^{\leq \ell}}\Pr[\clP(f,\bar{x}^{\leq \ell,*}) = \bar{z}^{\leq \ell}] \leq 2^{-\ell n/2t+n/2t}$.

Now since $\ell \leq k = \poly(n)$, the distribution $D^n_{\ell+1} = D_{1,\ell+1}\otimes\ldots\otimes D_{n,\ell+1}$ satisfies the conditions of Lemma \ref{lem:minent-chain}. Therefore, by Lemma \ref{lem:minent-chain} we have for $\bar{x}^{\leq \ell, *}$ and every $\bar{z}^{\leq \ell}$,
\[ \bbE_{f\sim D^n_{\ell+1}, x^{\ell+1}\sim U^n}\left[\max_{z^{\ell+1}}\Pr\left[\clP(f,x^{\ell+1}) = z^{\ell+1}\middle|\clP(f,\bar{x}^{\leq \ell,*}) = \bar{z}^{\leq \ell}\right]\right] \leq 2^{-n/t}.\]
Since for every $f$ in the support of $D^n_{\ell+1}$, we have bounded the maximum output probability of $\bar{z}^{\leq \ell}$, we have,
\[ \bbE_{f\sim D^n_{\ell+1}, x^{\ell+1,} \sim U^n}\left[\max_{z^{\ell+1}}\Pr\left[\clP(f,\bar{x}^{\leq \ell,*}x^{\ell+1}) = z^{\leq(\ell+1)}\right]\right] \leq 2^{-\ell n/2t+n/2t}\cdot 2^{-n/t} \leq 2^{-(\ell+1) n/2t}.\]
Therefore, we can find an $x^{\ell+1,*}$ for which the expectation over $f$ is bounded.

Finally, applying Markov's inequality again at the end of $k$ steps, we get that with probability $\left(\frac{1}{4t}\right)^k$ over each of $f_1, \ldots, f_n$ (which means with probability at least $2^{-\Omega(kn)}$ over $f$),
\[ \max_{\bar{z}}\Pr[\clP(f,\bar{x}^*)=\bar{z}] = 2^{-kn/2t}. \qedhere\]
\end{proof}

\subsection{Diagonalization}
We are almost ready to prove the main result of this section (Theorem \ref{thm:comm-FEQP}): a family of $p(n)$-qubit states $\ket{\psi_f}$ cannot jointly be used by $k=O(p(n))$ many receivers to solve $\HMn^n$ (after the action of a cloner). The main components of the proof will be Lemma \ref{lem:minent-k} and the following well-known result about small pairwise inner products between a large number of vectors requiring high dimension.

\begin{theorem}[Theorem 8.3.2 in \cite{dW01}]\label{thm:finger}
Suppose that a family $\{\ket{h_x}\}_{x \in \{0,1\}^m}$ of $b$-qubit pure states satisfies $| \braket{h_x|h_y} | \leq \eps$ for any distinct $x,y\in \{0,1\}^m$, for $\eps > 2^{-m}$.  Then, $b = \Omega\left(\log\left(\frac{m}{\eps^2}\right)\right)$.
\end{theorem}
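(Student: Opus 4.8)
The plan is to argue entirely at the level of the $N\times N$ Gram matrix $G$ of the family, where $N=2^m$: set $G_{xy}=\braket{h_x|h_y}$, a Hermitian positive semidefinite matrix of rank at most $d:=2^b$ with $G_{xx}=1$ and $|G_{xy}|\le\eps$ for $x\ne y$. The statement is equivalent to an upper bound on $N$ in terms of $d$ and $\eps$, and one wants bounds of the form $N\le\poly(d,1/\eps)$ that translate into $b=\Omega(\log m+\log(1/\eps^2))$.

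First I would record the elementary rank--trace estimate: for any Hermitian $M$ of rank $r$, $(\Tr M)^2\le r\cdot\Tr(M^2)$ by Cauchy--Schwarz over the nonzero eigenvalues. Applied to $M=G$ this gives $N^2\le d\big(N+N(N-1)\eps^2\big)\le dN(1+N\eps^2)$, hence $N(1-d\eps^2)\le d$. When $d\eps^2\le\frac12$ this already forces $N\le 2d$, i.e.\ $b\ge m-1$; together with the hypothesis $\eps>2^{-m}$ (which gives $\log(1/\eps^2)<2m$, so $\log(m/\eps^2)<3m$) this yields $b=\Omega(\log(m/\eps^2))$ in this regime. Thus the interesting case, where the crude bound is useless (on its own it only ever delivers $b=\Omega(\min(m,\log(1/\eps^2)))$), is $d\eps^2>\frac12$.

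For that case the idea is to amplify the small off-diagonal entries by passing to a Hadamard power: for $s\in\bbN$ to be chosen, let $H:=G^{\circ s}$, so $H_{xy}=G_{xy}^s$. By the Schur product theorem $H$ is again Hermitian positive semidefinite, with $H_{xx}=1$ and $|H_{xy}|\le\eps^s$. The crucial point is the rank bound $\mathrm{rank}(H)\le{{d+s-1}\choose{s}}$: indeed $H_{xy}=\braket{w_x|w_y}$ where $w_x\in\mathrm{Sym}^s(\bbC^d)$ has coordinates $(w_x)_{i_1\cdots i_s}=\prod_j (h_x)_{i_j}$, so $H$ is the Gram matrix of vectors in the symmetric subspace, whose dimension ${{d+s-1}\choose{s}}$ is far below the naive $d^s$ --- and this is exactly what makes the argument go through. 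The rank--trace estimate applied to $H$ then gives $N\big(1-{{d+s-1}\choose{s}}\eps^{2s}\big)\le{{d+s-1}\choose{s}}$. I would choose $s=\lceil c\,d\eps^2\rceil$ for a suitable absolute constant $c$ (using $d\eps^2>\frac12$, so $s\ge 1$ and $s=\Theta(d\eps^2)$), which makes ${{d+s-1}\choose{s}}\eps^{2s}\le\big(e(d+s)\eps^2/s\big)^s\le 2^{-s}\le\frac12$, hence $N\le 2{{d+s-1}\choose{s}}\le 2\big(1/(2\eps^2)\big)^s$. Taking logarithms, $m=\log N\le O(d\eps^2)\cdot\log(1/\eps^2)+O(1)$, i.e.\ $d\ge\Omega\!\big(m/(\eps^2\log(1/\eps^2))\big)$, and therefore $b=\log d\ge\log m+\log(1/\eps^2)-\log\log(1/\eps^2)-O(1)=\Omega(\log(m/\eps^2))$, the $\log\log(1/\eps^2)$ term being lower order.

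I expect the main obstacle to be precisely this $\eps$-dependence: the plain rank argument collapses once $\eps\gg 1/\sqrt d$, and what rescues it is the combination of the Hadamard power $G^{\circ s}$, the sharp symmetric-subspace rank bound ${{d+s-1}\choose{s}}$, and the near-optimal calibration $s=\Theta(d\eps^2)$ --- taking $s$ much smaller fails to drive ${{d+s-1}\choose{s}}\eps^{2s}$ below $\frac12$, while taking $s$ much larger wastes dimension and weakens the conclusion. Two minor points need care: the estimate ${{d+s-1}\choose{s}}\le(e(d+s)/s)^s$ with $s=\Theta(d\eps^2)$ genuinely beats the $\eps^{2s}$ factor only when $\eps$ is below a fixed absolute constant, which is harmless for the intended application (there $\eps=2^{-\Omega(k)}$); and the unavoidable $\log\log(1/\eps)$ slack must be absorbed into the $\Omega(\cdot)$.
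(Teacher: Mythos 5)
The paper does not supply a proof of this statement: it is quoted directly from de Wolf's thesis \cite{dW01}, so there is no in-paper argument to compare against. That said, your proof is correct, and the underlying mechanism -- pass to the Gram matrix, note its rank is at most $2^b$, and then use the Hadamard power $G^{\circ s}$ (equivalently, tensor powers of the fingerprint states landing in $\mathrm{Sym}^s(\bbC^d)$) to amplify the off-diagonal decay while keeping the rank growth under control -- is exactly the standard argument for this fingerprinting lower bound and the one used in the cited source. You correctly identified the two essential points: that the naive one-shot rank bound $N(1-d\eps^2)\le d$ degenerates once $d\eps^2>\tfrac12$, and that the cure is the symmetric-subspace dimension ${{d+s-1}\choose{s}}$ rather than the naive $d^s$, calibrated at $s=\Theta(d\eps^2)$. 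The rank--trace inequality $(\Tr M)^2\le\mathrm{rank}(M)\cdot\Tr(M^2)$, the Schur product theorem, and the reduction $H_{xy}=\braket{h_x^{\otimes s}|h_y^{\otimes s}}$ are all used correctly, and the $\log\log(1/\eps^2)$ slack is indeed absorbed into the $\Omega(\cdot)$. Two small remarks: (i) you correctly flag that the calibration step requires $\eps$ below some absolute constant; one should note that the theorem as literally stated is actually false for $\eps\to 1$ (a single qubit carries $2^m$ states with pairwise overlaps barely below $1$), so an implicit bounded-$\eps$ assumption is present in the source as well, and this is harmless for the paper's application where $\eps=2^{-\Omega(k)}$; (ii) the hypothesis $\eps>2^{-m}$ is only used to convert the $d\eps^2\le\tfrac12$ branch into the stated form $\Omega(\log(m/\eps^2))$, which you handle correctly.
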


We'll also need the following lemma about the probability that a joint probability distribution of $k$ variables takes too few distinct values across all $k$ variables.
\begin{lemma}\label{lem:distinct}
Let $A^1\ldots A^k$ be random variables on $\{0,1\}^{Nk}$ such that $\max_{a^1\ldots a^k}\Pr[a^1\ldots a^k] \leq 2^{-Nk/t}$. Then for $ \log k \ll N$ we have,
\[ \Pr\left[A^1\ldots A^k \text{ takes at most $\frac{k}{2t}$ distinct values}\right] \leq 2^{-Nk/4t}.\]
\end{lemma}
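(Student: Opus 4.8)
The plan is to prove this by an elementary counting argument combined with the joint min‑entropy hypothesis $\max_{a^1\ldots a^k}\Pr[A^1\ldots A^k = a^1\ldots a^k] \leq 2^{-Nk/t}$. Write $m := \lfloor \tfrac{k}{2t}\rfloor$; since the number of distinct values taken by a tuple is an integer, the event ``$A^1\ldots A^k$ takes at most $\tfrac{k}{2t}$ distinct values'' is the same as the event that it takes at most $m$ distinct values, and rounding down can only shrink the estimates below. First I would bound the number of tuples $(a^1,\ldots,a^k) \in (\bits^N)^k$ whose entries take at most $m$ distinct values: any such tuple has its set of distinct entries contained in some $m$‑element subset $S$ of $\bits^N$, and for a fixed such $S$ there are at most $m^k$ tuples with all entries in $S$. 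Hence the number of such tuples is at most $\binom{2^N}{m}\, m^k \leq 2^{Nm}\, m^k$.

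Next, since each individual outcome $(a^1, \ldots, a^k)$ has probability at most $2^{-Nk/t}$, the probability of the event is at most the number of such tuples times $2^{-Nk/t}$, which is at most
\[ 2^{Nm}\, m^k\, 2^{-Nk/t} \;\leq\; 2^{Nk/(2t)}\cdot k^{k} \cdot 2^{-Nk/t} \;=\; 2^{\,k\log k \,-\, Nk/(2t)}, \]
where I used $m \leq k/(2t)$ and $m \leq k$ (the latter because $t\geq 1$). Finally, the hypothesis $\log k \ll N$ (with $t$ a fixed constant, e.g.\ the one coming from Lemma~\ref{lem:minent-chain}) means that for all sufficiently large $n$ we have $k\log k \leq \tfrac{Nk}{4t}$, so the bound becomes at most $2^{-Nk/(2t) + Nk/(4t)} = 2^{-Nk/(4t)}$, which is exactly the claimed inequality.

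I do not anticipate any real obstacle in this lemma: it is a one‑shot union bound over which $\leq m$ values are taken and how the $k$ coordinates are assigned to them, together with the trivial per‑outcome probability cap. The only mild points of care are (i) handling the integrality of $m = k/(2t)$, which is absorbed by taking the floor and only helps, and (ii) turning the informal condition ``$\log k \ll N$'' into the precise inequality $\log k \leq N/(4t)$ that is needed to dominate the combinatorial factor $m^k = 2^{k\log k}$; both are routine, and in the intended application $k=\poly(n)$ while $N$ is polynomial in $n$, so $\log k = O(\log n)$ easily satisfies this.
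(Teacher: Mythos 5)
Your proposal is correct and follows essentially the same strategy as the paper: bound the number of tuples $(a^1,\ldots,a^k)$ with at most $k/2t$ distinct values, then multiply by the per-outcome probability cap $2^{-Nk/t}$ and use $\log k \ll N$ to absorb the combinatorial factor. Your counting via $\binom{2^N}{m} m^k$ (choose the set of distinct values, then assign each coordinate to one of them) is in fact a bit cleaner than the paper's, which informally argues that ``the first $k/2t$ coordinates pick arbitrary values and the rest repeat'' — a statement that, read literally, does not account for tuples whose distinct values are not all witnessed in the first $k/2t$ positions. Both counts simplify to the same $2^{Nk/2t}\cdot 2^{k\log k}$ upper bound, so the final inequality is identical.
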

\begin{proof}
The number of strings $a^1\ldots a^k$ with at most $\lfloor k/2tn\rfloor$ distinct values in $[2^{N}]$ is at most $2^{N k/2t}\cdot\left(\frac{k}{2t}\right)^{k- k/2t}$, because the first $k/2t$ $a^i$-s can take $2^{N}$ values each, and the rest of the $a^i$-s each have to take one of the at most $k/2t$ values previously taken. Since each string has probability at most $2^{-Nk/t}$, we have,
\begin{align*}
\Pr_{A^1\ldots A^k}\left[a^1\ldots a^k \text{ has at most $\frac{k}{2t}$ distinct values in $[2^{N}]$}\right] & \leq 2^{N k/2t}\cdot\left(\frac{k}{2t}\right)^{k- k/2t}\cdot 2^{-Nk/t} \\
 & \leq 2^{Nk/2t}\cdot 2^{k\log k}\cdot 2^{-Nk/t} \leq 2^{-Nk/4t}
\end{align*}
for $\log k \ll N$.
\end{proof}

\begin{theorem}\label{thm:comm-FEQP}
Suppose $\{\ket{\psi_f}\}_f$ is a family of $p(n)$-qubit states for each $f:[n]\times\{0,1\}^n \to \{0,1\}$. Then there exists a $k=O(p(n))$ such that for any algorithm $\clA$ that takes in $\ket{\psi_f}$ and outputs $\ket{\psi'_f}$ (without having access to $f$) on registers $B_1\ldots B_k$, and any measurements $k$ receivers $\clB_1,\ldots, \clB_k$ perform on registers $B_1, \ldots, B_k$ respectively, $\clB_1, \ldots, \clB_k$ cannot jointly solve $\HMn^n$ with perfect correctness with the one-way message states from Alice being $\ket{\psi'_f}$.
\end{theorem}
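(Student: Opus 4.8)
The statement is information‑theoretic, so I would treat $\clA$ and the $\clB_i$'s as arbitrary. First I would normalize the setup: dilate the channel $\clA$ to an isometry (absorbing its environment into register $B_1$, which $\clB_1$ simply ignores), and assume each $\clB_i$ applies a unitary $U_{x^i}$ to $B_i$ and then measures its output register in the computational basis. Set $k := C\,p(n)$ for a constant $C$ to be fixed at the end, and suppose for contradiction that such $\clA,\clB_1,\dots,\clB_k$ jointly solve $\HMn^n$ with perfect correctness. Viewing ``Alice prepares $\ket{\psi_f}$, applies $\clA$, and sends $B_i$ to receiver $i$'' as a $k$‑receiver one‑way protocol $\clP$ with perfect joint correctness, Lemma~\ref{lem:minent-k} supplies inputs $\bar{x}^*=(x^{1,*},\dots,x^{k,*})$ and a set $F_{\mathrm{good}}$ of functions of density $2^{-\Omega(kn)}$ such that $\max_{\bar z}\Pr[\clP(f,\bar{x}^*)=\bar z]=2^{-\Omega(kn)}$ for every $f\in F_{\mathrm{good}}$. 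For such $f$ write $\ket{\phi_{f,\bar{x}^*}} := (U_{x^{1,*}}\otimes\dots\otimes U_{x^{k,*}})\,\clA\ket{\psi_f}$ for the joint state just before the receivers measure. The elementary but crucial point is that, since $\clA$ and the $U$'s are isometries and $\clA$ does \emph{not} see $f$, we have $\brakett{\phi_{f,\bar{x}^*}}{\phi_{f',\bar{x}^*}} = \brakett{\psi_f}{\psi_{f'}}$ for all $f,f'$.

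The heart of the argument is an upper bound on these inner products. Perfect (zero‑error) correctness forces the output registers of $\ket{\phi_{f,\bar{x}^*}}$ to be supported on strings validly encoding $\HMn^n$‑answers for $f,\bar{x}^*$; that is, $\ket{\phi_{f,\bar{x}^*}} = \sum_{\bar y}\ket{E_f(\bar y)}\ket{v_{f,\bar y}}$, where $E_f(\bar y)$ records for each receiver $i$ and coordinate $j$ the pair $(y^i_j,\ f_j(y^i_j)\oplus f_j(y^i_j\oplus x^{i,*}_j))$ and $\|v_{f,\bar y}\|^2 = p_f(\bar y)$ is the probability the receivers output $\bar y$. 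Since $\langle E_f(\bar y)\,|\,E_{f'}(\bar y')\rangle$ vanishes unless $\bar y=\bar y'$ and all the parity bits agree, Cauchy--Schwarz gives $|\brakett{\psi_f}{\psi_{f'}}|^2 \le \Pr_{\bar y\sim p_f}[\,\forall i,j:\ f_j(y^i_j)\oplus f_j(y^i_j\oplus x^{i,*}_j)=f'_j(y^i_j)\oplus f'_j(y^i_j\oplus x^{i,*}_j)\,]$. The event on the right says $f'$ lies in a fixed affine subspace over $\mathbb{F}_2$ whose codimension is the rank of the associated system of ``edge'' constraints, and a short graph‑theoretic estimate bounds this rank below by (roughly half) the total number of distinct values occurring among the $y^i_j$ across all coordinates. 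For $f\in F_{\mathrm{good}}$ I would then feed in the min‑entropy bound: by a counting argument in the spirit of Lemma~\ref{lem:distinct} (a high‑min‑entropy string cannot concentrate on few distinct values), with $p_f$‑probability $1-2^{-\Omega(kn)}$ the output $\bar y$ has $\Omega(k)$ distinct values, hence lies in a subspace of codimension $\Omega(k)$. Averaging over a uniformly random $f'$ gives expectation $\le 2^{-\Omega(k)}$, and Markov then yields $|\brakett{\psi_f}{\psi_{f'}}|\le 2^{-\Omega(k)}$ for all but a $2^{-\Omega(k)}$ fraction of $f'$.

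Finally I would convert this into a contradiction via fingerprinting. Using the previous step one extracts, by a Tur\'an‑type clique argument on $F_{\mathrm{good}}$ (or equivalently a Gram‑matrix / Frobenius‑norm robustness argument), a subfamily $F^\star\subseteq F_{\mathrm{good}}$ of size $2^{\Omega(k)}$ all of whose pairwise inner products are at most $\eps = 2^{-\Omega(k)}$, with $\log|F^\star| > \log(1/\eps)$. Since every $\ket{\psi_f}$ has only $p(n)$ qubits, Theorem~\ref{thm:finger} forces $p(n) = \Omega(\log(|F^\star|/\eps^2)) = \Omega(k)$. Choosing the constant $C$ in $k=C\,p(n)$ large enough makes $\Omega(k) > p(n)$, a contradiction; since $k=O(p(n))$ this proves the theorem.

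The step I expect to be the main obstacle is the inner‑product bound, and in particular getting the quantitative dependence right: the probability that a uniformly random $f'$ is a ``bad partner'' of a fixed good $f$ must be made small enough relative to the density of $F_{\mathrm{good}}$ for the clique‑extraction (equivalently the robust fingerprinting) to still deliver a family of size $2^{\Omega(k)}$ with the required pairwise bound. Making this accounting go through is exactly where the min‑entropy lower bound of Lemma~\ref{lem:minent-k}, the zero‑error (weak‑rigidity) structure of $\ket{\phi_{f,\bar{x}^*}}$, and the linear algebra of the edge‑constraint system all have to be combined with care.
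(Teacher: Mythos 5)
Your proposal follows the same high-level blueprint as the paper's proof: dilate the cloner and receivers to isometries so that $\brakett{\psi_f}{\psi_{f'}} = \brakett{\phi_{f,\bar{x}^*}}{\phi_{f',\bar{x}^*}}$; invoke Lemma~\ref{lem:minent-k} to obtain an input $\bar{x}^*$ and a set $\tilde{F}$ of density $2^{-\Omega(kn)}$ on which the joint output distribution is flat; use Cauchy--Schwarz to bound $|\brakett{\psi_f}{\psi_{f'}}|^2$ by the probability, under the output distribution, that $f$ and $f'$ agree on the relevant edge parities; and finish by extracting a large near-orthogonal subfamily (Tur\'an) and invoking Theorem~\ref{thm:finger}. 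These are exactly the ingredients the paper uses, and your Cauchy--Schwarz step is stated more cleanly than the paper's.

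However, you have already put your finger on the place where your argument does not close, and the gap is real. With the parameters you describe, Markov gives: for a fixed $f\in\tilde{F}$, all but a $2^{-\Omega(k)}$ fraction of \emph{uniformly random} $f'$ satisfy $|\brakett{\psi_f}{\psi_{f'}}|\leq 2^{-\Omega(k)}$. But the density of $\tilde{F}$ is only $2^{-\Omega(kn)}$, which is exponentially smaller for $n>1$, so after conditioning $f'$ on lying in $\tilde{F}$ the bad-pair fraction is $2^{-\Omega(k)}/2^{-\Omega(kn)}\geq 1$, and the Tur\'an extraction gives nothing. (There is also a smaller slip: applying Lemma~\ref{lem:distinct} to $\bar y$, whose blocks are $n^2$-bit and whose max-probability is $2^{-nk/t}$, yields $\Omega(k/n)$ distinct blocks, not $\Omega(k)$.) What closes the accounting in the paper is a moment-amplification step that is absent from your proposal: instead of bounding $\bbE\bigl[|\brakett{\phi_{f,\bar{x}^*}}{\phi_{f',\bar{x}^*}}|^2\bigr]$, one bounds $\bbE\bigl[|\brakett{\phi_{f,\bar{x}^*}}{\phi_{f',\bar{x}^*}}|^{c}\bigr]$ for $c=\Theta(nt\log t)$, writing the $c$-th power as an expectation over the $c$-fold product distribution $\tilde y \sim p_{f'}^{\otimes c}$ (which is still flat, since the $c$ blocks are i.i.d.) and applying Lemma~\ref{lem:distinct} to this $ck$-block string. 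This drives the Markov failure probability down to $2^{-\Omega(ck/t)} = 2^{-\Omega(kn\log t)}$, which is strictly smaller than the density of $\tilde{F}$, and is precisely what lets the Tur\'an / fingerprinting step deliver a family of size $2^{\Omega(kn)}$ with pairwise inner products at most $2^{-\Omega(k/t)}$ and a contradiction via Theorem~\ref{thm:finger}. Without this (or an equivalent) amplification trick, your proof does not go through.
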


\begin{proof}
Suppose for the sake of contradiction that $\{\ket{\psi'_f}\}$ is a valid family of message states for a $k$-receiver one-way protocol for $\HMn^n$ with perfect correctness, for $k=O(p(n))$ (we'll specify the constant in the big $O$ later). We can assume without loss of generality that $\ket{\psi_f}$ and $\ket{\psi'_f}$ are related by an isometry, and since the isometry does not depend on $f$, we have, $\brakett{\psi_f}{\psi_{f'}} = \brakett{\psi'_f}{\psi'_{f'}}$ for any $f, f'$.

Since $\clB_1, \ldots, \clB_k$ can jointly solve $\HMn^n$ with perfect correctness, using Lemma \ref{lem:minent-k}, for some input string $\bar{x}^*$ to the Bobs and for $(4t)^{-kn}$ fraction of functions $f:[n]\times\{0,1\}^n\to\{0,1\}$, we have for all output strings $\bar{z}$,
\[
\Pr[\clP(f,\bar{x}^*)=\bar{z}] \leq 2^{-nk/t},
\]
for some constant $t \geq 1$. Let $\tilde{F}$ denote the $(4t)^{-kn}$ fraction of functions for which this holds. A single player's output $z^\ell$ for $\HMn^n$ consists of $((y^\ell_1,f_1(y^\ell_1)\oplus f_1(y^\ell_1\oplus x^\ell_1)),\ldots,(y^\ell_n,f_n(y^\ell_n)\oplus f_n(y^\ell_n\oplus x^\ell_n)))$. Since the protocol is zero-error, for a fixed $x^\ell$ and $f$, $f_1(y^\ell_1)\oplus f_1(y^\ell_1\oplus x^\ell_1), \ldots, f_n(y^\ell_n)\oplus f_n(y^\ell_n\oplus x^\ell_n)$ are fixed once $y^\ell = y^\ell_1\ldots y^\ell_n$ is fixed. So the distribution of $z^\ell$ is really the distribution of $y^\ell$, and we have for $f \in \tilde{F}$ and all $\bar{y}$,
\begin{equation}\label{eq:prob-bound}
\Pr[\clP(f,\bar{x}^*)=\bar{y}] \leq 2^{-nk/t}
\end{equation}

Note that for the $\HMn$ problem, outputting $(y_i,f_i(y_i)\oplus f_i(y_i\oplus x_i))$ and $(y_i\oplus x_i, f_i(y_i)\oplus f_i(y_i\oplus x_i))$ are equivalent on input $x_i$. Therefore, we'll assume there is a single probability of each receiver outputting one of these, corresponding to their input. Moreover, we'll assume the measurement basis state corresponding to the output $(\{y_i,y_i\oplus x_i\}, 0)$ is $\frac{1}{\sqrt{2}}(\ket{y_i}+\ket{y_i\oplus x_i})$, and that corresponding to $(\{y_i, y_i\oplus x_i\}, 1)$ is $\frac{1}{\sqrt{2}}(\ket{y_i}-\ket{y_i\oplus x_i})$. Note that the states $\left\{\frac{\ket{y_i}\pm\ket{y_i\oplus x_i}}{\sqrt{2}}\right\}_{y_i < y_i\oplus x_i}$ (where we are enforcing $y_i < y_i\oplus x_i$ according to some lexicographic ordering so as not to overcount) form a full orthonormal basis for any $x_i$. Extending this to $\HMn^n$, the measurement basis state corresponding to output $(y_1\ldots y_n,(f_1(y_1)\oplus f_1(y_1\oplus x_1),\ldots, f_n(y_n)\oplus f_n(y_n\oplus x_n)))$ (and also all $2^n$ strings formed by XORing some of the $y_i$-s with $x_i$-s) is $\bigotimes_{i=1}^n\frac{\ket{y_i}+(-1)^{f_i(y_i)\oplus f_i(y_i\oplus x_i)}\ket{y_i\oplus x_i}}{\sqrt{2}}$. With some abuse of notation, we'll use $\frac{\ket{y}+(-1)^f\ket{y\oplus x}}{2^{n/2}}$ to denote this state, and use $\frac{\ket{y}\pm\ket{y\oplus x}}{2^{n/2}}$ to refer to the measurement basis used when the input is $x$.

Our assumption for the receivers $\clB_1, \ldots, \clB_k$ is that each receiver applies some unitary depending on their input $x^\ell$ to the state they receive from $\clA$, before measuring in the $\frac{\ket{y}\pm\ket{y\oplus x^\ell}}{2^{n/2}}$ basis. Since the protocol has perfect correctness, $\clB_\ell$'s measurement register just before the final measurement can have non-zero amplitude on the state $\frac{1}{\sqrt{2}}(\ket{y}+(-1)^{f}\ket{y\oplus x^\ell})$, but not on any of the states orthogonal to it in the basis, when their input in $x^\ell$.

We'll denote the measurement register of the $\ell$-th player by $M_\ell$, and their other registers by $N_\ell$. Let us call the state just before the final measurement of each player for input $\bar{x}$, $\ket{\phi_{f,\bar{x}}}$. By the previous discussion, the state must look like,
\begin{align*}
\ket{\phi_{f,\bar{x}}} = \sum_{\bar{y}< \bar{y}\oplus \bar{x}}\alpha_{f,\bar{x}}(\bar{y})\left(\frac{\ket{y^1}+(-1)^{f}\ket{y^1\oplus x^1}}{2^{n/2}}\right)_{M_1}\otimes\ldots \otimes\left(\frac{\ket{y^k}+(-1)^{f}\ket{y^k\oplus x^k}}{2^{n/2}}\right)_{M_k}\otimes \ket{\phi'_{f,\bar{x},\bar{y}}}_{N_1\ldots N_k}.
\end{align*}
In the above expression, the
 coefficient $\alpha_{f,\bar{x}}(\bar{y})$ is the amplitude for outcome $\bar{y}$ corresponding to inputs $f$ and $\bar{x}$. Therefore, for $f \in \tilde{F}$, and $\bar{x} = \bar{x}^*$, the distribution given by $|\alpha_{f,\bar{x}}(\bar{y})|^2$ must satisfy \eqref{eq:prob-bound}.

We can express the inner product between two of the states $\ket{\phi_{f,\bar{x}}}$ and $\ket{\phi_{f',\bar{x}}}$ as
\begin{align}\label{eq:ff'-prod}
|\brakett{\phi_{f,\bar{x}}}{\phi_{f',\bar{x}}}| & = \left|\sum_{\bar{y}< \bar{y}\oplus \bar{x}}\alpha_{f,\bar{x}}(\bar{y})\alpha_{f',\bar{x}}(\bar{y})I_{f,f',y^1,x^{1}}\ldots I_{f,f',y^k,x^{k}}\brakett{\phi'_{f,\bar{x},\bar{y}}}{\phi'_{f',\bar{x},\bar{y}}}\right| \nonumber \\
 & \leq \left|\sum_{\bar{y}< \bar{y}\oplus \bar{x}}\alpha_{f,\bar{x}}(\bar{y})\alpha_{f',\bar{x}}(\bar{y}) I_{f,f',y^1,x^{1}}\ldots I_{f,f',y^k,x^{k}}\right|,
\end{align}
where $ I_{f,f',y^\ell,x^{\ell}}$ is the indicator function for $f(y^\ell)\oplus f(y^\ell\oplus x^\ell) = f'(y^\ell)\oplus f'(y^\ell\oplus x^\ell)$ (where $f(y^\ell)\oplus f(y^\ell\oplus x^\ell)$ is the string $(f_1(y^\ell_1)\oplus f_1(y^\ell_1\oplus x^\ell_1)), \ldots, (f_n(y^\ell_n)\oplus f_n(y^\ell_n\oplus x^\ell_n))$), since the states $\frac{\ket{y^\ell}+(-1)^{f}\ket{y^\ell\oplus x^\ell}}{2^{n/2}}$ and $\frac{\ket{y^\ell}+(-1)^{f'}\ket{y^\ell\oplus x^\ell}}{2^{n/2}}$ are orthogonal otherwise. For $f'\in \tilde{F}$, we want to upper bound the probability over uniformly random $f$ that the inner product $\brakett{\phi_{f,\bar{x}^*}}{\phi_{f',\bar{x}^*}}$ is high. For this it will be useful to upper bound the expectation of a higher order moment of it.

For some integer $c>1$ to be determined later, consider
\begin{align}
|\brakett{\phi_{f,\bar{x}^*}}{\phi_{f',\bar{x}^*}}|^c & \leq \left|\sum_{\bar{y}< \bar{y}\oplus \bar{x}^*}\alpha_{f,\bar{x}^*}(\bar{y})\alpha_{f',\bar{x}^*}(\bar{y}) I_{f,f',y^1,x^{1,*}}\ldots I_{f,f',y^k,x^{k,*}}\right|^c \nonumber \\
 & \leq \left(\sum_{\bar{y}< \bar{y}\oplus \bar{x}^*}|\alpha_{f,\bar{x}^*}(\bar{y})|^2\sum_{\bar{y}' < \bar{y}'\oplus \bar{x}^*}|\alpha_{f',\bar{x}^*}(\bar{y}')|^2 I_{f,f',y'^1,x^{1,*}}\ldots I_{f,f',y'^k,x^{k,*}}\right)^c \nonumber \\
 & = \left(\sum_{\bar{y}' < \bar{y}'\oplus \bar{x}^*}|\alpha_{f',\bar{x}^*}(\bar{y}')|^2 I_{f,f',y'^1,x^{1,*}}\ldots I_{f,f',y'^k,x^{k,*}}\right)^c \nonumber
 \end{align}
The second inequality above uses the Cauchy-Schwarz inequality, and the first equality is due to the observation that $\sum_{\bar{y}< \bar{y}\oplus \bar{x}^*}|\alpha_{f,\bar{x}^*}(\bar{y})|^2 = 1$ for all $f$. We can expand the $c$-th power in the last expression by doing $c$ sums over $\bar{y}'$-s, with each term in the sum being the product of $|\alpha_{f',\bar{x}^*}(\bar{y}')|^2$ and the indicator variables for each of the $c$ $\bar{y}'$-s. Alternatively, this can be written as a sum over $\tilde{y}$ where $\tilde{y}$ is a tuple containing $c$ $\bar{y}'$-s, i.e., a string in $\{0,1\}^{n^2ck}$. With some abuse of notation, we use $y^1, \ldots, y^{ck}$ to refer to each of the $n^2$-bit blocks of a $\tilde{y}$. We'll also use $\tilde{x}^*$ to refer to $c$ copies of $\bar{x}^*$, and $x^1, \ldots, x^{ck}$ to refer to its $n^2$-bit blocks, though these $x^\ell$-s repeat after intervals of $k$. Then we have,
\[ |\brakett{\phi_{f,\bar{x}^*}}{\phi_{f',\bar{x}^*}}|^c \leq \sum_{\tilde{y} < \tilde{y}\oplus \tilde{x}^*}p_{f',\bar{x}^*}(\tilde{y}) I_{f,f'y^1,x^{1,*}}\ldots  I_{f,f',y^{ck}x^{ck,*}},\]
where $p_{f',\bar{x}^*}(\tilde{y})$ is the probability distribution given by the product of $c$ copies of the distribution $|\alpha_{f',\bar{x}^*}(\bar{y})|^2$. For $f'\in \tilde{F}$, due to \eqref{eq:prob-bound} we have, $\max_{\tilde{y}}p_{f',\bar{x}^*}(\tilde{y}) \leq 2^{-nck/t}$. Since $\tilde{y}$ consists of $ck$ random variables over $\{0,1\}^{n^2}$, we have by Lemma \ref{lem:distinct}, that with probability at least $1-2^{-nck/4t}$ over $p_{f,\bar{x}^*}(\tilde{y})$, $\tilde{y} = y^1\ldots y^{ck}$ take at least $\frac{ck}{2tn}$ distinct values in $\{0,1\}^{n^2}$ (as long as $\log (ck) \ll n^2$ --- we'll later pick $c$ such that this is true). Fixing an $f'\in\tilde{F}$, let $b^1\ldots b^{ck}$ be the string where $b^\ell=f'(y^\ell)\oplus f'(y^\ell\oplus x^\ell)$ is an $n$-bit string. For a string $\tilde{y}$ that consists of at least $ck/2tn$ distinct values, the indicator variables for these values are independent, and we have,
\begin{align*}
\bbE_f\left[ I_{f,f',y^1,x^{1,*}}\ldots I_{f,f',y^{ck},x^{ck,*}}\right] & = \Pr_f\left[(f(y^1)\oplus f(y^1\oplus x^{1,*}) = b^1)\land\ldots\land(f(y^{ck})\oplus f(y^{ck}\oplus x^{ck,*}))\right] \\
& \leq (2^{-n})^{ck/2tn} = 2^{-ck/2t},
\end{align*}
since the probability of $(f(y^1)\oplus f(y^\ell\oplus x^{\ell,*}) = b^\ell)$ for each distinct $y^\ell$ is $2^{-n}$. Fixing an $f'\in \tilde{F}$, let $\text{Good}$ denote the subset of $\tilde{y}$-s which contain at least $ck/2tn$ distinct values. Then we have,
 \begin{align*}
\bbE_{f}\left[|\brakett{\phi_{f,\bar{x}^*}}{\phi_{f',\bar{x}^*}}|^c\right] & \leq \sum_{\tilde{y} \in \text{Good}}p_{f,\bar{x}^*}(\tilde{y})\bbE_f\left[ I^1_{f,f',y^1,x^{1,*}}\ldots I^k_{f,f',y^{ck},x^{ck,*}}\right] + \sum_{\tilde{y}\not\in\text{Good}} p_{f,\bar{x}^*}(\tilde{y})\cdot 1 \\
 & \leq \sum_{\tilde{y} \in \text{Good}}p_{f,\bar{x}^*}(\tilde{y})\cdot 2^{-ck/2t} + 2^{-nck/4t} \\
 & \leq 2^{-ck/4t}.
\end{align*}

By Markov's inequality, we then have for each $f'\in \tilde{F}$,
\begin{align*}
\Pr_f\left[|\brakett{\phi_{f,\bar{x}^*}}{\phi_{f',\bar{x}^*}}| > 2^{-k/8t}\right] & = \Pr_f\left[|\brakett{\phi_{f,\bar{x}^*}}{\phi_{f',\bar{x}^*}}|^c > 2^{-ck/8t}\right] \leq 2^{-ck/4t}\cdot 2^{ck/8t} \leq 2^{-ck/8t}.
\end{align*}
Note that the above probability is over uniformly random $f$, and now we want to calculate the probability over $f\in \tilde{F}$. Putting in $c=32nt\log(t)$, we have that a $2^{-4kn\log t}$ fraction of uniform $f$ have small inner product for each $f'\in \tilde{F}$. Since the probability of uniform $f$ being in $\tilde{F}$ is at least $2^{-2kn\log t}$, we have for each $f'$, at least $2^{-2kn\log t} - 2^{-4kn\log t}$ fraction of $f$ are both in $\tilde{F}$ and have small inner product. This means,
\[ \Pr_{f, f'\sim \tilde{F}}\left[|\brakett{\phi_{f,\bar{x}^*}}{\phi_{f',\bar{x}^*}}| \leq 2^{-k/8t}\right] \geq \frac{2^{-2kn\log t} - 2^{-4kn\log t}}{2^{-2kn\log t}} \geq 1 - 2^{-2kn\log t}.\]

Now consider a graph whose vertices are labelled by $f \in \tilde{F}$ where two functions have an edge between them if the corresponding inner product is at most $2^{-k/8t}$. By the above equation, this graph has at least $\left(1-2^{-2kn\log t}\right)\cdot \frac{|\tilde{F}|}{2}$ edges (we have divided by $2$ so as not to count $(f,f')$ and $(f',f)$ separately). Therefore, by Turan's theorem, this graph contains a complete subgraph of size at least $2^{kn\log t}$. This means we have a set $F^*$ of $2^{kn\log t}$ functions such that for every $f \neq f' \in F^*$ we have, $|\brakett{\phi_{f,\bar{x}^*}}{\phi_{f',\bar{x}^*}}| \leq 2^{-k/8t}$. Since $\ket{\phi_{f,\bar{x}*}}$ and $\ket{\phi_{f',\bar{x}^*}}$ are obtained by applying the same unitary depending on $\bar{x}^*$ to $\ket{\psi_f}$ and $\ket{\psi_{f'}}$, we have for every $f\neq f'\in F^*$,
\[ |\brakett{\psi_f}{\psi_{f'}}| = |\brakett{\psi'_f}{\psi'_{f'}}| = |\brakett{\phi_{f,\bar{x}^*}}{\phi_{f',\bar{x}^*}}| \leq 2^{-k/8t}.\]
By Theorem \ref{thm:finger}, the number of qubits in the states $\ket{\psi_f}$ must then be at least $C(\log(kn\log t) + k/4t)$. Recalling that the states $\ket{\psi_f}$ have $p(n)$ qubits by hypothesis, specifying $k(n) = Cp(n)/4t$, this is a contradiction.
\end{proof}

Finally, we'll do a diagonalization argument similar to Theorems \ref{thm:diag-suQMA} and \ref{thm:diag-BQPsuq}.
\begin{theorem}\label{thm:diag-FEQPsuq}
There exists a relation $R \in \FEQPsuq$.
\end{theorem}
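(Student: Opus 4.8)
The plan is to realize $R$ as the parallel-repeated Hidden Matching relation $\HMn^n$ with the matching function \emph{hardwired per input length} rather than supplied as Alice's input, and to choose the hardwired functions by diagonalization. For each $n$, fix a function $f_n:[n]\times\{0,1\}^n\to\{0,1\}$; for an input $\bar x\in\{0,1\}^{n^2}$, read as an $\HMn^n$ instance $x^1\ldots x^n$, declare the valid outputs of $R$ on $\bar x$ to be exactly the valid outputs of $\HMn^n$ on $(f_n,\bar x)$ (on input lengths that are not perfect squares, let $R$ be empty, which is vacuously fine). For \emph{any} choice of $\{f_n\}$, this $R$ is in $\FEQPq$: take as advice for length $m=n^2$ the message state $\bigotimes_{i=1}^n \frac{1}{2^{n/2}}\sum_{x\in\{0,1\}^n}(-1)^{f_n(i,x)}\ket{x}$ of \cite{BJK08}, which depends only on $f_n$ and hence only on $m$, and let $\clQ$ run Bob's zero-error decoding block by block. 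So the whole task is to choose $\{f_n\}$ to make $R$ strictly uncloneable.

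For the diagonalization I would enumerate, exactly as in the proof of Theorem~\ref{thm:diag-suQMA}, all tuples $s=(p,\clA,\clB_1,\ldots,\clB_k)$ where $p$ is a polynomial size bound for the advice (say of the form $m^c+c$, so that only countably many occur), $k=k(p)$ is the polynomial supplied by Theorem~\ref{thm:comm-FEQP} for advice of size $p$, $\clA$ is a polynomial-time cloner and $\clB_1,\ldots,\clB_k$ are polynomial-time receivers; fix a bijection $N$ from these tuples to sufficiently large naturals. At stage $s$, writing $n=N(s)$, I will fix $f_n$ (and hence $R$ on inputs of length $n^2$).

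The crux of the argument is that such an $f_n$ exists so that, once $R$ is fixed this way, $(\clA,\clB_1,\ldots,\clB_k)$ fails at length $n^2$ \emph{no matter which advice state it is handed}. Suppose not; then for every $f:[n]\times\{0,1\}^n\to\{0,1\}$ there is a $p(n^2)$-qubit state $\ket{\psi_f}$ with which $\clB_1,\ldots,\clB_k$, acting on the registers output by $\clA(\ket{\psi_f})$, achieve joint perfect correctness for $\HMn^n$ with respect to $f$ on all receiver inputs. But then $\{\ket{\psi_f}\}_f$ is precisely a family of $p(n^2)$-qubit states for which the cloner $\clA$ (which never sees $f$) together with the separated receivers $\clB_1,\ldots,\clB_k$ solves $\HMn^n$ with perfect correctness, contradicting Theorem~\ref{thm:comm-FEQP} for this $k=O(p(n^2))$. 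Hence a good $f_n$ exists; fix it. Concretely, this means that for every $p(n^2)$-qubit advice state there are receiver inputs $\bar{x}^1,\ldots,\bar{x}^k\in\{0,1\}^{n^2}$ on which some $\clB_i$ outputs, with positive probability, a string invalid for $R$. (The only subtlety being handled here is one of quantifier order: in the definition of $\FEQPsuq$ the advice family is chosen \emph{after} $R$, so the stage argument must defeat \emph{every} advice state of the relevant size at once, which is exactly what passing to the family $\{\ket{\psi_f}\}_f$ and invoking Theorem~\ref{thm:comm-FEQP} accomplishes; this is also why the enumeration ranges over advice-size bounds together with cloner/receiver tuples, rather than over advice states.)

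To finish: given any polynomial-sized advice family $\{\ket{\phi_m}\}_m$, let $p$ be a size bound of the form $m^c+c$ and set $k(m)=k(p)$, the polynomial from Theorem~\ref{thm:comm-FEQP}; for any $(k(m)+1)$-tuple $(\clA,\clB_1,\ldots,\clB_k)$, the tuple $(p,\clA,\clB_1,\ldots,\clB_k)$ was handled at some length $m^*=N(s)^2$, where we arranged that every $p(m^*)$-qubit state fails --- in particular $\ket{\phi_{m^*}}$ (padded with ancillas if it is shorter) --- so $(\clB_1,\ldots,\clB_k,\{\clA(1^m,\ket{\phi_m})\}_{m})$ do not satisfy joint perfect correctness for $R$. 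Since this holds for every advice family and every tuple, and $R\in\FEQPq$, we conclude $R\in\FEQPsuq$. I expect the main obstacle to be not any individual estimate --- Theorem~\ref{thm:comm-FEQP} has already absorbed the hard work --- but getting the quantifier structure right so that one fixed relation $R$ defeats every advice family: this is what forces the per-stage argument to be uniform over all advice states of the relevant size (handled by the contrapositive above) and forces the enumeration to range over advice-size bounds rather than over the uncountably many advice states.
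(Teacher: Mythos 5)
Your proposal is correct and follows essentially the same diagonalization route as the paper: enumerate tuples $(p,\clA,\clB_1,\ldots,\clB_k)$ with $k$ supplied by Theorem~\ref{thm:comm-FEQP}, and at each stage $s$ fix $f_{N(s)}$ to defeat that tuple, with $\FEQPq$ membership witnessed by the \cite{BJK08} message state for the fixed $f_n$. The only place you are more explicit than the paper is the quantifier-inversion step --- assembling a hypothetical per-$f$ successful advice state into a family $\{\ket{\psi_f}\}_f$ and invoking Theorem~\ref{thm:comm-FEQP} to conclude that a single $f$ defeats \emph{every} advice state of the relevant size (and noting padding for shorter states) --- which the paper states tersely as ``there exists at least one $f$ for which one of the $\clB_i$-s fails''; this is a refinement of exposition, not of substance.
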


\begin{proof}
We once again consider tuples $s = (p,\clA,\clB_1,\ldots, \clB_k)$ where $p(n)$ is a polynomial growth rate of the advice, $k=O(p(n))$ is as given by Theorem \ref{thm:comm-FEQP} and $\clA, \clB_1, \ldots, \clB_k$ are algorithms. By Theorem \ref{thm:comm-FEQP}, $\clB_1, \ldots, \clB_k$ cannot jointly solve $\HMn^n$ with perfect correctness with the states provided by $\clA$ by acting on any family of $p(n)$-qubit advice states. Therefore, there exists at least one $f$ for which one of the $\clB_i$-s fails. We'll fix the relation for the input length $N(s)$ corresponding to the tuple $s$ to be $\HMn^n(f,\cdot)$ for this $f$. The problem is obviously in $\FEQPq$ as well, using Alice's message state corresponding to this $f$ as the advice state.
\end{proof}

\section*{Acknowledgements}
R.C. is supported by the National Research Foundation, Singapore, under its NRF Fellowship programme, award no. NRF-NRFF14-2022-0010. S.K. is supported by the Natural Sciences and Engineering Research Council of Canada (NSERC) Discovery Grants Program, and Fujitsu Labs America. S.P.\ is supported by US Department of Energy (grant no DE-SC0023179) and partially supported by US National Science Foundation (award no 1954311).
Part of this work was conducted while S.P.\ was visiting the Simons Institute for the Theory of Computing, supported by DOE QSA (grant no FP00010905).

We thank Anne Broadbent, Martti Karvonen and Ernest Tan for helpful discussions, and anonymous reviewers for their comments on improving the manuscript.

\bibliographystyle{plain}
\bibliography{Main}

\end{document}